\documentclass[12pt,twoside,english,reqno]{article}
\usepackage[T1]{fontenc}
\usepackage[latin9]{inputenc}
\usepackage{amsthm}
\usepackage{amsmath}
\usepackage{amssymb}
\usepackage{esint}

\makeatletter

\DeclareRobustCommand{\cyrtext}{%
  \fontencoding{T2A}\selectfont\def\encodingdefault{T2A}}
\DeclareRobustCommand{\textcyr}[1]{\leavevmode{\cyrtext #1}}
\AtBeginDocument{\DeclareFontEncoding{T2A}{}{}}

\numberwithin{equation}{section}
\numberwithin{figure}{section}
  \theoremstyle{remark}
  \newtheorem*{rem*}{Remark}
\theoremstyle{plain}
\newtheorem{thm}{Theorem}
  \theoremstyle{plain}
  \newtheorem{lem}[thm]{Lemma}
  \theoremstyle{plain}
  \newtheorem{prop}[thm]{Proposition}
  \theoremstyle{plain}
  \newtheorem{cor}[thm]{Corollary}
\newcommand{\lyxaddress}[1]{
\par {\raggedright #1
\vspace{1.4em}
\noindent\par}
}


\usepackage{amscd}\usepackage{amsbsy}\usepackage{amsthm}

\usepackage{amsfonts}

\newtheorem{theorem}{Theorem}


\makeatother

\usepackage{babel}

\begin{document}

\title{A class of nonlinear random walks related to the Ornstein-Uhlenbeck
process }

\author{S.A. Muzychka, K.L. Vaninsky}

\maketitle

\begin{abstract}
Contrary to the theory of Markov processes, no general theory exists
for the so called nonlinear Markov processes. We study an example
of {}``nonlinear Markov process'' related to classical probability
theory, merely to random walks. This model provides interesting phenomena
(absent in classical Markov chains): continuum of stationary measures,
conserved quantities, convergence to stationary classical random walks
etc.
\end{abstract}
\tableofcontents{}\setcounter{section}{0} \setcounter{equation}{0}

\section{Introduction}

Contrary to the theory of Markov processes, no general theory exists
for the so called nonlinear Markov processes. Though a general definition
of a nonlinear Markov process was introduced by H.P. McKean, {[}4{]}
in his study of various models of kinetic theory. Subsequently various
authors, see {[}5{]},{[}6{]}, considered limits of stochastic many
particles systems which lead to processes of this special type.

Here we give an example of {}``nonlinear Markov process'' which
is close to classical probability theory, merely to random walks.
It appears as a mathematical model of a market with two type of agents
or participants, traditionally called bulls and bears. This model
provides interesting phenomena (absent in classical Markov chains):
continuum of stationary measures, conserved quantities, convergence
to stationary classical random walks etc. It is important that our
system has some relation to the Ornstein-Uhlenbeck process. This underlies
main intuition and makes the system solvable.

\subsection{Simple random walks on $\mathbb{Z}$ with {}``discrete gaussian''
stationary measure }

Consider a continuous time Markov chain $\eta_{t}$ (simple random
walk) on $\mathbb{Z}$. The intensity of the jumps $n\to n+1,$ $n\to n-1$
are correspondingly \[
\lambda_{n}=e^{-c(n-L)};\quad\mu_{n}=e^{c(n-M)},\]
 where $c>0$ and $L$ and $M$ are real numbers. The chain is ergodic
and reversible. The detailed balance equations \begin{equation}
\pi(n)\lambda_{n}=\pi(n+1)\mu_{n+1}\label{detBalance}\end{equation}
for the stationary measure $\pi$ have the unique solution \begin{equation}
\pi(n)=\frac{1}{\Xi}e^{-c(n-s)^{2}},\qquad\qquad s=\frac{L+M}{2}.\label{eq:invariantMeasure}\end{equation}
The normalization factor $\Xi=\Xi(s,c)$ is given by \[
\Xi(s,c)=e^{-cs^{2}}\Theta\left(\frac{cs}{i\pi},\frac{ci}{\pi}\right),\]
 where \[
\Theta(v,\tau)=\sum e^{2\pi ivn+\pi i\tau n^{2}}\]
is the Jacobi theta function, see {[}1{]} p.188.

In addition to $s$ we introduce another variable $d=\frac{L-M}{2}$.
It is interesting that the invariant measure $\pi,$ which should
depend on both parameters $L$ and $M,$ depends on $s$ only.

Let us note that the invariant measure does not change under the following
transformation of the jump rates\[
\lambda_{n}\to\lambda_{n}\beta(n),\mu_{n}\to\mu_{n}\beta(n-1),\]
where $\beta(n)$ is an arbitrary positive function. This follows
from the detailed balance equations for the invariant measure. In
particular, $\beta(n)$ can be chosen so that the mean drift becomes
asymptotically linear\[
m(n)=\lambda_{n}-\mu_{n}\sim-Cn,\quad C>0\]
as for the classical Ornstein-Uhlenbeck process. Remind that the Ornstein-Uhlenbeck
process is the unique stationary gaussian Markov process on $\mathbb{R}$.

\subsection{Nonlinear walks and main results.}

Consider the vector-function\[
X(t)=(L(t),M(t),p_{n}(t),n\in\mathbb{Z})\]
 with $(2+\infty)$ real functions on the interval $t\in[0,\infty)$
and denote\begin{equation}
\lambda_{n}(t)=\beta(n)e^{-c(n-L(t))},\mu_{n}(t)=\beta(n-1)e^{c(n-M(t))},\label{eq:lambdaMu}\end{equation}
where $c>0$ is some constant.

The vector-function $X(t)$ is defined by the following infinite system
of ordinary differential equations\begin{equation}
\frac{dp_{n}}{dt}=\lambda_{n-1}p_{n-1}-(\lambda_{n}+\mu_{n})p_{n}+\mu_{n+1}p_{n+1},n\in\mathbb{Z}\label{mainp}\end{equation}
\begin{equation}
\frac{dL}{dt}=-\sum_{n\in\mathbb{Z}}p_{n}\lambda_{n}+C_{\lambda}\label{mainL}\end{equation}
\begin{equation}
\frac{dM}{dt}=\sum_{n\in\mathbb{Z}}p_{n}\mu_{n}-C_{\mu}\label{mainM}\end{equation}
 together with the initial conditions $L(0),M(0),p_{n}(0)$. We will
assume that \[
p_{n}(0)\geq0,\sum_{n\in\mathbb{Z}}p_{n}(0)=1.\]
Otherwise speaking, $p_{n}(0)$ define the probability measure $p(0)$
on $\mathbb{Z}$.

Apriori, $C_{\lambda}$ and $C_{\mu}$ are some positive constants.
If however, there exists at least one fixed point $(L,M,\pi)$ for
these equations, then\[
C_{\lambda}=\sum_{n\in\mathbb{Z}}\pi_{n}\lambda_{n},C_{\mu}=\sum_{n\in\mathbb{Z}}\pi_{n}\mu_{n}.\]
Then $\pi_{n}$ satisfy equations (for fixed $L,M$)\[
\lambda_{n-1}\pi_{n-1}-(\lambda_{n}+\mu_{n})\pi_{n}+\mu_{n+1}\pi_{n+1}=0,\]
which look exactly as Kolmogorov equations for stationary probabilities
of the countable Markov chain. It is known (see {[}3{]}, p. 59, th.
7.1) that the only $l_{1}$-solution of these equations is positive
(up to some multiplicative constant). Thus $\pi_{n}$ satisfy also
the detailed balance equations (\ref{detBalance}). It follows that
$C_{\lambda},C_{\mu}$ are equal\[
C_{\lambda}=\sum_{n\in\mathbb{Z}}\pi_{n}\lambda_{n}=\sum_{n\in\mathbb{Z}}\pi_{n}\mu_{n}=C_{\mu}.\]
Then using the variables $s,d$ introduced above we rewrite $(\ref{mainp}-\ref{mainM})$
in the following form:

\begin{equation}
\begin{cases}
p_{n}'(t)= & e^{cd}[\beta(n-1)e^{c(-n+1+s)}p_{n-1}-(\beta(n)e^{c(-n+s)}+\beta(n-1)e^{c(n-s)})p_{n}+\\
 & +\beta(n)e^{c(-n-1+s)}p_{n+1}],\quad n\in\mathbb{Z};\\
s'(t)= & -\frac{1}{2}e^{cd}\left(\sum_{n\in\mathbb{Z}}p_{n}\beta(n)e^{c(-n+s)}-\sum_{n\in\mathbb{Z}}p_{n}\beta(n-1)e^{c(n-s)}\right);\\
d'(t)= & -\frac{1}{2}e^{cd}\left(\sum_{n\in\mathbb{Z}}p_{n}\beta(n)e^{c(-n+s)}+\sum_{n\in\mathbb{Z}}p_{n}\beta(n-1)e^{c(n-s)}\right)+C_{\lambda},\end{cases}\label{eq:mainEquationsinTermsSandD}\end{equation}
where we have assumed the existence of the fixed point. First two
equations show that the trajectory of a pair $(p,s)$ does not depend
on $d$. This observation will help us in the proof of convergence.
\begin{rem*}
Now we want to explain some market model, which is the source of this
paper. Assume that on the integer lattice $\mathbb{Z}$ all points
of the interval $(-\infty,b]$ are occupied by {}``bulls'' who want
to buy and the points on the interval $[b+1,\infty)$ who want to
sell. The boundary $b=b(t)$ changes with time as follows. There are
two Poisson arrival streams of demands: to buy with the rate $\lambda_{b}$
and to sell with the rate $\mu_{b}$. When the buy demand arrives
the boundary immediately moves $b\to b+1$, and conversely. The parameters
$L$ and $M$ reflect the opinion of bulls and bears correspondingly,
concerning the fair price.
\end{rem*}
Define the Banach space $\mathbf{B}$ of vector-functions $p=\left\{ p_{n}\right\} _{n\in\mathbb{Z}}$
with the norm\[
\|p\|_{\alpha}=\sum_{n\in\mathbb{Z}}|p_{n}|\exp(\frac{n^{2}}{2}+\alpha|n|),\quad\alpha\in\mathbb{R}.\]
Throughout this paper we assume that $\beta(n)$ satisfies the following
condition\begin{equation}
\sup_{n\in\mathbb{Z}}\beta(n)<\infty.\label{con1}\end{equation}

\begin{theorem}

For any initial conditions such that $p(0)\in\mathbf{B}$ is the probability
measure, the solution of the system (\ref{eq:mainEquationsinTermsSandD})
exists on the interval $[0,\infty)$ and is unique in the space $\mathbf{B}\times C^{2}([0,\infty))=\left\{ (p,L,M)\right\} .$

Moreover, for any $t$ the quantities $p_{n}(t)$ define the probability
measure $p(t),$ that is $p_{n}(t)\geq0,\sum_{n}p_{n}(t)=1.$

\end{theorem}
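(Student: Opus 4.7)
The plan is the classical one for infinite-dimensional ODEs: local existence by a Banach-space Picard--Lindel\"of argument in $\mathbf{B}\times\mathbb{R}^{2}$, preservation of positivity and total mass by algebraic arguments, and a priori bounds to extend the local solution to $[0,\infty)$.

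For local existence I view the right-hand side of (\ref{eq:mainEquationsinTermsSandD}) as a map $F$ on $\mathbf{B}\times\mathbb{R}^{2}$ and check that it is well defined and locally Lipschitz on sets where $\|p\|_{\alpha}$ and $|L|,|M|$ are bounded. The decisive point is that the super-Gaussian weight $e^{n^{2}/2}$ hidden in $\|\cdot\|_{\alpha}$ dominates the exponential prefactors of $\lambda_{n},\mu_{n}$ uniformly in $n$ as long as $L,M$ stay bounded; combined with $\sup_{n}\beta(n)<\infty$, this makes the sums in (\ref{mainL})--(\ref{mainM}) absolutely convergent and smooth in the data. A standard contraction-mapping argument then produces a unique solution on a maximal interval $[0,T^{*})$ with $p\in C^{1}([0,T^{*});\mathbf{B})$; the $C^{2}$-regularity of $L,M$ then follows by differentiating (\ref{mainL})--(\ref{mainM}) once more and using the $C^{1}$-regularity of $p$.

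Mass and positivity are checked on the local solution. The conservation $\sum_{n}p_{n}(t)=1$ follows from a formal telescoping computation $\tfrac{d}{dt}\sum_{n}p_{n}=0$, the termwise differentiation being legitimate because $p(t)\in\mathbf{B}$ provides a dominating series. For positivity I would rewrite the $n$-th master equation in integrating-factor form,
\[
p_{n}(t)=p_{n}(0)e^{-\int_{0}^{t}(\lambda_{n}+\mu_{n})}+\int_{0}^{t}e^{-\int_{s}^{t}(\lambda_{n}+\mu_{n})}\bigl[\lambda_{n-1}p_{n-1}+\mu_{n+1}p_{n+1}\bigr](s)\,ds,
\]
and apply Picard iteration to this integral equation starting from $p^{(0)}\equiv p(0)\ge 0$: every iterate is termwise nonnegative, and the limit inherits this property.

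To pass from local to global existence I would bound $|L|,|M|$ and $\|p\|_{\alpha}$ on every finite subinterval. Using $\sum p_{n}=1$, $p_{n}\ge0$ and the explicit form of the rates one gets $|L'|,|M'|\le C\|p\|_{\alpha}e^{c(|L|+|M|)}$, and differentiating the Lyapunov functional $V(p):=\|p\|_{\alpha}$ along the flow yields
\[
V'(p)=\sum_{n}p_{n}\bigl[\lambda_{n}(\phi_{n+1}-\phi_{n})+\mu_{n}(\phi_{n-1}-\phi_{n})\bigr],\qquad\phi_{n}=e^{n^{2}/2+\alpha|n|};
\]
the damping structure of the rates (at $n\gg0$ one has $\mu_{n}\gg\lambda_{n}$ and $\phi_{n-1}-\phi_{n}$ is strongly negative, symmetrically for $n\ll0$) gives $V'(p)\le C(|L|,|M|)V(p)$, and Gronwall closes the bootstrap on $\|p\|_{\alpha},|L|,|M|$ simultaneously. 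The main obstacle is exactly this Lyapunov estimate: a naive triangle inequality on $|F_{n}|$ produces a term of size $e^{(1-c)n}\phi_{n}$ at $n\to+\infty$ when $c<1$, which cannot be absorbed into $V$. One must genuinely use the cancellation between the incoming $\lambda_{n-1}p_{n-1}$ and the outgoing $\mu_{n}p_{n}$ jumps, i.e.\ the fact that the walk is driven back toward the origin at large $|n|$; this is precisely what makes the super-Gaussian weight $e^{n^{2}/2}$, rather than a purely exponential one, the right choice in the definition of $\mathbf{B}$.
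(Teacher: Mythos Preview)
Your local-existence step has a genuine gap: the vector field $F$ you want to feed into Picard--Lindel\"of is \emph{not} well defined as a map $\mathbf{B}\times\mathbb{R}^{2}\to\mathbf{B}\times\mathbb{R}^{2}$. The off-diagonal shifts $\lambda_{n-1}p_{n-1}$, $\mu_{n+1}p_{n+1}$ are harmless---after reindexing, the super-Gaussian weight exactly cancels the linear exponent in $\lambda,\mu$---but the diagonal term $-(\lambda_{n}+\mu_{n})p_{n}$ is not: for large $|n|$ the coefficient grows like $e^{c|n|}$, so
\[
\sum_{n}e^{n^{2}/2+\alpha|n|}(\lambda_{n}+\mu_{n})|p_{n}|\;\asymp\;\|p\|_{\alpha+1},
\]
which is not controlled by $\|p\|_{\alpha}$. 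Thus $p\mapsto pH$ loses one unit of $\alpha$ and the standard contraction in a single Banach space fails. The cancellation you correctly identify (incoming versus outgoing jumps) only helps for a \emph{signed} quantity such as $\tfrac{d}{dt}\|p(t)\|_{\alpha}$ computed along an already-existing solution; it cannot repair the crude bound on $\|F(p)\|_{\alpha}$ that Picard iteration requires.

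The paper's cure is to decouple. One first freezes continuous $L,M$ and builds the transition matrices $P_{Z}(t,s)$ via a Dyson-type series
\[
P(t,s)=e^{\int_{t}^{s}H_{0}}+\sum_{k\ge1}\int_{\triangle_{k}}e^{\int H_{0}}\,V\,e^{\int H_{0}}\cdots V\,e^{\int H_{0}},
\]
in which the unbounded diagonal $H_{0}$ appears only through contractions $\|e^{\int H_{0}}\|\le1$, while only the bounded off-diagonal part $V$ is iterated in norm. Stochasticity of $P_{Z}$ (hence positivity and $\sum p_{n}=1$) is obtained by approximation with finite truncated chains rather than by your integrating-factor iteration. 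Only then is $p(t)=p(0)P_{Z}(0,t)$ substituted into the $(L,M)$-equations and a contraction run on $Z\in C([0,T])^{2}$ alone. For global existence the paper also uses something much simpler than your simultaneous bootstrap: since $\sum p_{n}\lambda_{n}\ge0$, one reads off immediately $L(t)\le L_{0}+C_{\lambda}t$ and $M(t)\ge M_{0}-C_{\mu}t$; these one-sided bounds already control $\|P_{Z}(0,t)\|_{\alpha}^{+}$ (which depends only on $\sup(e^{L}+e^{-M})$), and the remaining one-sided bounds on $L,M$ then follow from an elementary ODE comparison---no Lyapunov estimate on $\|p\|_{\alpha}$ is needed at all.
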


\begin{theorem}

If $C_{\lambda}\neq C_{\mu}$ there are no fixed points. If $C_{\lambda}=C_{\mu}>0$
the set of fixed points is a one parameter family $\{(L_{s},M_{s},\pi_{s}(n))\}$,
which depends on the parameter $s\in\mathbb{R}$. It is given explicitely
by \[
\pi_{s}(n)=\frac{1}{\Xi}e^{-c(n-s)^{2}},\qquad\Xi=\sum_{n\in\mathbb{Z}}e^{-c(n-s)^{2}};\]
\[
L_{s}=s+\ln\left[C_{\lambda}\left(\frac{\sum_{l}e^{-c(l-s)^{2}}}{\sum_{k}\beta(k)e^{-c(k-s)^{2}}e^{c(-k+s)}}\right)\right];\]

\[
M_{s}=s-\ln\left[C_{\lambda}\left(\frac{\sum_{l}e^{-c(l-s)^{2}}}{\sum_{k}\beta(k)e^{-c(k-s)^{2}}e^{c(-k+s)}}\right)\right].\]
Moreover, $s=\frac{L_{s}+M_{s}}{2}$.

\end{theorem}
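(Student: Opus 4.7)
The plan is to separate necessity ($C_\lambda=C_\mu$) from existence of the one-parameter family, and within the existence step to fix the shape of $\pi_s$ first and the scalar $d=(L-M)/2$ second. First I would assume a fixed point $(L,M,\pi)$ exists. Then $\pi$ solves the Kolmogorov stationarity equation $\lambda_{n-1}\pi_{n-1}-(\lambda_n+\mu_n)\pi_n+\mu_{n+1}\pi_{n+1}=0$ with the rates frozen at $L,M$. Invoking the result from [3, Th.~7.1] already quoted in the introduction, the unique $\ell^1$-solution (up to a positive scalar) coincides with the detailed-balance solution $\pi_n\lambda_n=\pi_{n+1}\mu_{n+1}$. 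Summing this identity on $n\in\mathbb{Z}$, which is legitimate since $\pi\in\ell^1$ and the sums converge absolutely under (\ref{con1}), and re-indexing the right-hand side gives $\sum_n\pi_n\lambda_n=\sum_n\pi_n\mu_n$, i.e.\ $C_\lambda=C_\mu$. This disposes of the first assertion.

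Next, I would assume $C_\lambda=C_\mu>0$ and parametrize a prospective fixed point by $s=(L+M)/2$ and $d=(L-M)/2$. Inserting the explicit form (\ref{eq:lambdaMu}) into the ratio $\pi_{n+1}/\pi_n=\lambda_n/\mu_{n+1}$, the $\beta$ and $e^{cd}$ factors cancel and leave
\[
\frac{\pi_{n+1}}{\pi_n}=e^{-c(2n+1-2s)}.
\]
Iterating from $\pi_0$ and recognizing the telescoping exponent $\sum_{k=0}^{n-1}(2k+1-2s)=n^{2}-2sn$ (with the analogous identity for $n<0$), one obtains $\pi_n=\pi_0\,e^{cs^{2}}\,e^{-c(n-s)^{2}}$. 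Summability is immediate from the Gaussian tail, and the normalization $\sum_n\pi_n=1$ pins down $\pi_s(n)=\Xi^{-1}e^{-c(n-s)^{2}}$ with $\Xi=\sum_n e^{-c(n-s)^{2}}$. Hence the shape of $\pi_s$ is determined by $s$ alone, and $d$ remains as the only scalar unknown.

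To fix $d$, I would substitute $\pi=\pi_s$ into $\sum_n\pi_n\lambda_n=C_\lambda$. Writing $\lambda_n=\beta(n)\,e^{cd}\,e^{-c(n-s)}$ factors $e^{cd}/\Xi$ out of the sum, so a single logarithm extracts $d$ in closed form and yields the stated expression for $L_s=s+d$. The companion equation $\sum_n\pi_s(n)\mu_n=C_\mu$ gives no independent constraint on $d$: detailed balance already forces the two fluxes equal, and the symmetric computation using $\mu_n=\beta(n-1)\,e^{cd}\,e^{c(n-s)}$ reproduces the same value of $d$ with the opposite orientation, whence $M_s=s-d$. Adding $L_s+M_s=2s$ then confirms the final assertion $s=(L_s+M_s)/2$, and direct substitution back into (\ref{mainp})--(\ref{mainM}) verifies that each triple $(L_s,M_s,\pi_s)$ is indeed a fixed point.

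The only substantive input is the identification, via [3, Th.~7.1], of the unique $\ell^{1}$-stationary solution with the detailed-balance one; every subsequent manipulation is a routine calculation with absolutely convergent Gaussian sums, whose convergence is guaranteed by (\ref{con1}). The one bookkeeping point worth flagging is that imposing both the $\lambda$- and $\mu$-flux equations does not overdetermine $d$: this is precisely the consistency that detailed balance is providing, and it is the reason a genuine one-parameter family (rather than a discrete set) appears.
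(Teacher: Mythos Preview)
Your proposal is correct and follows essentially the same route as the paper's own argument: the necessity $C_\lambda=C_\mu$ via [3, Th.~7.1] and detailed balance is exactly the reasoning the paper gives in the introduction and then invokes at the start of the proof, and your derivation of $\pi_s$ from the ratio $\pi_{n+1}/\pi_n=\lambda_n/\mu_{n+1}$ followed by solving the single flux equation for $d$ mirrors the paper's computation in Section~2.2. You are a bit more explicit than the paper in checking that the $\mu$-flux equation does not overdetermine $d$ (the paper simply writes ``similarly'' for $M_s$), but this is a presentational rather than a substantive difference.
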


\begin{theorem}

If $C_{\lambda}=C_{\mu}>0$ then there is a conserved quantity (invariant
of motion)\[
K=K(X)=L+M+\sum_{n\in\mathbb{Z}}np_{n}.\]
Any hypersurface defined by the value of $K(X)$ contains exactly
one fixed point.

\end{theorem}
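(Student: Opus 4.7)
The statement has two parts: conservation of $K$ along the flow, and bijectivity of the map $s\mapsto K(L_s,M_s,\pi_s)$ between the one-parameter family of fixed points of Theorem~2 and $\mathbb{R}$.

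For conservation I would differentiate $K(X(t))$ along a solution and combine (\ref{mainp})--(\ref{mainM}). The only nontrivial piece is $\sum_n n\,dp_n/dt$; a summation by parts in the right-hand side of (\ref{mainp}), via the reindexings
\[
\sum_n n\lambda_{n-1}p_{n-1}=\sum_n(n+1)\lambda_n p_n,\qquad \sum_n n\mu_{n+1}p_{n+1}=\sum_n(n-1)\mu_n p_n,
\]
collapses it to $\sum_n\lambda_n p_n-\sum_n\mu_n p_n$. Adding the contributions of $dL/dt$ and $dM/dt$, the $\sum p_n\lambda_n$ and $\sum p_n\mu_n$ pieces cancel and one is left with $dK/dt=C_\lambda-C_\mu=0$. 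The one point that requires verification is absolute convergence, so that termwise differentiation and the reindexing are legitimate; this follows from $p(t)\in\mathbf{B}$ (Theorem 1), because the super-Gaussian weight $\exp(n^{2}/2+\alpha|n|)$ defining the $\mathbf{B}$-norm dominates the exponential growth of $\lambda_n,\mu_n$ in $|n|$.

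For the second assertion, use the parametrisation $s\mapsto (L_s,M_s,\pi_s)$ of Theorem 2. Because $L_s+M_s=2s$,
\[
K(s):=L_s+M_s+\sum_{n\in\mathbb{Z}} n\pi_s(n)=2s+\langle n\rangle_s,
\]
where $\langle n\rangle_s$ denotes the $\pi_s$-expectation of $n$. Viewing $\pi_s(n)\propto e^{-cn^{2}+2csn}$ as an exponential family in the parameter $\theta=2cs$ with sufficient statistic $n$, a direct differentiation under the sum (justified by the Gaussian decay of $\pi_s$) yields the standard identity $\frac{d}{ds}\langle n\rangle_s=2c\,\mathrm{Var}_{\pi_s}(n)$, and $\mathrm{Var}_{\pi_s}(n)>0$ since $\pi_s$ charges every integer. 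Therefore
\[
K'(s)=2+2c\,\mathrm{Var}_{\pi_s}(n)\ge 2,
\]
so $K(s)$ is smooth, strictly increasing, and tends to $\pm\infty$ as $s\to\pm\infty$. Hence $s\mapsto K(s)$ is a bijection $\mathbb{R}\to\mathbb{R}$, which together with Theorem 2 gives that each level set of $K$ contains exactly one fixed point.

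The main obstacle is really the justification in part one: once the tails of $p(t)$ are under control via Theorem 1, both halves of the proof reduce to short direct computations. I would therefore spend most of the write-up carefully invoking $p(t)\in\mathbf{B}$ to license the termwise operations, and only then carry out the algebraic cancellations and the exponential-family differentiation.
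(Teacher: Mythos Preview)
Your argument for conservation is essentially the paper's own: sum the $L'$ and $M'$ equations, compute $(\sum_n np_n)'$ by reindexing the Kolmogorov equations, and watch the $\sum p_n\lambda_n$ and $\sum p_n\mu_n$ terms cancel, leaving $C_\lambda-C_\mu=0$. The paper justifies the termwise differentiation via the pointwise bound $|p_k'(t)|\le\mathrm{const}\cdot e^{-k^2/2-(\alpha-1)|k|}$ (its estimate~(\ref{derivative})), which is exactly the consequence of $p(t)\in\mathbf{B}$ that you invoke.

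Where you genuinely go beyond the paper is the second assertion. The paper's proof of Theorem~3 treats only the conservation law; the claim that each level set of $K$ contains exactly one fixed point is never argued there, and resurfaces only in the convergence section as the bare remark ``using the invariant it is easy to check that $s^*$ is defined uniquely.'' Your exponential-family computation actually fills this gap: writing $K(s)=2s+\langle n\rangle_s$ and using $\tfrac{d}{ds}\langle n\rangle_s=2c\,\mathrm{Var}_{\pi_s}(n)>0$ gives $K'(s)\ge 2$, hence $s\mapsto K(s)$ is a strictly increasing bijection of $\mathbb{R}$. This is a clean and correct argument that the paper omits; together with Theorem~2 it yields exactly the stated one-to-one correspondence between level sets and fixed points.
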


Speaking otherwise, the conserved quantity makes our phase space a
fiber bundle over the real line, where each fiber contains exactly
one fixed point.

For the next theorem we need, besides condition (\ref{con1}), the
following condition: there is a positive constant $C>0$ such that
for all $n\in\mathbb{Z}$ \begin{equation}
\inf_{n\in\mathbb{Z}}\beta(n)>0,\quad\frac{1}{e}\beta(n+1)-\beta(n)<-C,\quad\frac{1}{e}\beta(n-1)-\beta(n)<-C.\label{con2}\end{equation}

This (very technical) assumption we will need only for proving convergence.
Note that unfortunately this conjecture does not cover the case of
linear drift, but $\beta(n)\equiv1$ satisfies ($\ref{con2}$).

\begin{theorem}

Assume condition $(\ref{con2})$. Then for any initial point $X(0)$
such that the initial probability measure $p(0)\in\mathbf{B}$ the
solution converges to the unique fixed point on the hypersurface defined
by the value of $K(X(0))$.

\end{theorem}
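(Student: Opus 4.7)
The plan is to apply a Lyapunov/LaSalle argument, after first reducing to an autonomous $(p,s)$-subsystem and using the conservation law $K$ from Theorem~3 to pin down the limit. By Theorem~3 the orbit is confined to a single level set $\{K=K(X(0))\}$ carrying a unique fixed point $X_{*}=(L_{*},M_{*},\pi_{s_{*}})$, where $s_{*}$ is the unique solution of $K(X(0))=2s_{*}+\sum_{n}n\,\pi_{s_{*}}(n)$ (the right-hand side is strictly increasing in $s_{*}$ because its derivative equals $2+2c\,\mathrm{Var}_{\pi_{s}}(n)>0$). Introducing the time change $d\tau=e^{c\,d(t)}\,dt$ removes the factor $e^{cd}$ from the first two lines of (\ref{eq:mainEquationsinTermsSandD}), so $(p(\tau),s(\tau))$ is governed by an autonomous system, and a direct check shows that the renormalized rates $\beta(n)e^{-c(n-s(\tau))}$ and $\beta(n-1)e^{c(n-s(\tau))}$ satisfy detailed balance with respect to the instantaneous measure $\pi_{s(\tau)}$.

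Next, using condition (\ref{con2}) -- whose hypothesis $\inf_n\beta(n)>0$ makes the drift $\lambda_n-\mu_n$ asymptotically linear and mean-reverting -- together with conservation of $K$ and Theorem~1, I would show that $|s(\tau)|$ and $\|p(\tau)\|_\alpha$ stay bounded uniformly in $\tau$. The exponential weight in $\|\cdot\|_\alpha$ then yields tightness, so the trajectory is relatively compact in $\ell^{1}(\mathbb{Z})\times\mathbb{R}$.

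For the Lyapunov function I would try $V(p,s)=H(p\,|\,\pi_{s_{*}})+\gamma(s-s_{*})^{2}$ with a small $\gamma>0$. Writing $R_n=p_n/\pi_{s_{*}}(n)$ and $e_n=\pi_{s_{*}}(n)\lambda_n^{*}$ for the rates $\lambda_n^{*}$ at the fixed point, and using detailed balance at $s_{*}$, one obtains in $\tau$-time an identity of the shape
\[
\frac{dV}{d\tau}=\sum_{n}e_{n}\bigl(\tilde a R_{n}-\tilde b R_{n+1}\bigr)\log\frac{R_{n+1}}{R_{n}}+2\gamma(s-s_{*})\frac{ds}{d\tau},
\]
with $\tilde a=e^{c(s-s_{*})}$ and $\tilde b=e^{-c(s-s_{*})}$. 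When $s=s_{*}$ the first sum reduces to $-\sum_{n}e_{n}(R_{n+1}-R_{n})(\log R_{n+1}-\log R_{n})\le 0$, vanishing only at $p=\pi_{s_{*}}$. For $s\ne s_{*}$, expanding $\tilde a,\tilde b$ around $s_{*}$ produces a cross-term linear in $s-s_{*}$, which must be reabsorbed into the strictly negative entropy dissipation; this is precisely the step where (\ref{con2}) is essential, since it supplies a spectral-gap-type lower bound on the dissipation controlling both $p-\pi_{s_{*}}$ and $s-s_{*}$. For $\gamma$ small enough one then obtains $dV/d\tau\le 0$, with equality only at $X_{*}$.

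LaSalle's invariance principle, applied to the precompact trajectory with $dV/d\tau\le 0$, forces the $\omega$-limit set to lie in $\{dV/d\tau=0\}$, which on the level surface reduces to the single point $(\pi_{s_{*}},s_{*})$; hence $(p(\tau),s(\tau))\to(\pi_{s_{*}},s_{*})$. Boundedness of $d(t)$ makes $\tau\to\infty$ equivalent to $t\to\infty$, so the same convergence holds in $t$; inserting the limit into the $d$-equation of (\ref{eq:mainEquationsinTermsSandD}) then forces $d(t)\to d_{*}=(L_{*}-M_{*})/2$, giving $L(t)\to L_{*}$ and $M(t)\to M_{*}$. The principal obstacle is the Lyapunov step: the bare relative entropy $H(p|\pi_{s_{*}})$ fails to be monotone because the time-dependent rates spoil the usual dissipation identity and introduce a sign-indefinite error of order $|s-s_{*}|$; combining it with a quadratic penalty in $s-s_{*}$ and invoking the coercivity supplied by (\ref{con2}) is what closes the scheme.
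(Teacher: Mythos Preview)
Your high-level scheme (reduce to the $(p,s)$-subsystem, use a relative-entropy Lyapunov function, apply LaSalle, then recover $d$ from its scalar ODE) matches the paper's. The two substantive discrepancies are the choice of Lyapunov function and the role you assign to condition~(\ref{con2}).

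\textbf{The Lyapunov function.} You take $V(p,s)=H(p\mid\pi_{s_*})+\gamma(s-s_*)^2$ with \emph{small} $\gamma$ and hope that (\ref{con2}) supplies a spectral-gap bound to absorb the sign-indefinite cross term. This does not close: your own formula gives $dH/d\tau=\sum_n e_n(\tilde a R_n-\tilde b R_{n+1})\log(R_{n+1}/R_n)$, and for $p$ near $\pi_{s_*}$ but $s\neq s_*$ the linear-in-$(s-s_*)$ part is not dominated by the dissipation uniformly in $R$. The paper's device is to take the entropy relative to the \emph{moving} target $\pi^{s(t)}$; then $\partial_s H=2s-2\sum_n np_n=6s-2K$ by the conservation law, so $\partial_s H\cdot s'=(3s^2-2Ks)'$ is an \emph{exact} derivative, and $W(t)=H(p\mid\pi^{s(t)})+2Ks-3s^2$ satisfies $dW/dt\le 0$ with equality iff $p=\pi^{s}$, \emph{without} any use of (\ref{con2}). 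In fact a short computation on the $K$-level set shows $H(p\mid\pi_{s_*})+2(s-s_*)^2=W+\text{const}$, so your $V$ \emph{would} work---but only with $\gamma=2$, not with $\gamma$ small; the cross term cancels algebraically via $K$, not through coercivity.

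\textbf{Where (\ref{con2}) actually enters.} In the paper (\ref{con2}) is used solely to prove a priori boundedness: one shows that $Q(t)=\sum_n p_n(\beta(n)e^{-n+s}+\beta(n-1)e^{n-s})$ satisfies $Q'\le e^{d}(\mathrm{const}-Q)$, and the inequalities $\tfrac1e\beta(n\pm1)-\beta(n)<-C$ are exactly what make the relevant quadratic in $x=e^{s-n}$ bounded above. Boundedness of $Q$ then gives $|\sum np_n - s|$ bounded, hence (via $K$) $s$ bounded, and finally $d$ bounded from $d'=-\tfrac12 e^{d}Q+C_\lambda$. So (\ref{con2}) feeds the compactness step, not the entropy dissipation. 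Your remark that $\inf_n\beta(n)>0$ makes the drift asymptotically linear is also off: with (\ref{con1})--(\ref{con2}) the drift remains exponential (the paper explicitly notes that the linear-drift case is \emph{not} covered by (\ref{con2})).
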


\begin{theorem}

For any initial conditions $X(0)$ such that the initial probability
measure $p(0)\in\mathbf{B}$ there exists a random process $\xi(t)=\xi(t,X(0))\in\mathbb{Z},t\in[0,\infty),$
with probability meausure $P=P_{X(0)}$ on the set $X(t)$ of trajectories
such that \[
P(\xi(t)=n)=p_{n}(t).\]
A such that the $k$-dimensional distributions of $\xi(t)$, for $k>1$,
are defined in Markovian way by\begin{equation}
P_{X(0)}(\xi(t_{1})=n_{1},...,\xi(t_{k})=n_{k})=p_{n_{1}}(t_{1})P_{X(0)}(n_{2},t_{2}|n_{1},t_{1})...P_{X(0)}(n_{k},t_{k}|n_{k-1},t_{k-1}).\label{k-dim-distr}\end{equation}
Under condition (\ref{con2}), the $k$-dimensional distributions
of $\xi(t)$ tend as $t\to\infty$ to the corresponding $k$-dimensional
distributions of the stationary Markov process $\eta_{t}$ defined
above.

\end{theorem}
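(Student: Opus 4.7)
The plan is to realise $\xi(t)$ as a time-inhomogeneous birth--death chain on $\mathbb{Z}$ whose instantaneous rates at time $t$ are $\lambda_n(t),\mu_n(t)$ from (\ref{eq:lambdaMu}), with the deterministic trajectory $(L(t),M(t),p(t))$ supplied by Theorem~1. First I would freeze that trajectory, so that the rates become measurable deterministic functions of $t$; the chain $\xi(t)$ is then constructed on the Skorokhod space $D([0,\infty),\mathbb{Z})$ by the standard recipe of independent exponential clocks whose parameters are refreshed at each jump, and starting it from a random position distributed as $p(0)$ defines the probability measure $P_{X(0)}$. Non-explosion on $[0,\infty)$ follows from the a~priori estimates in $\|\cdot\|_\alpha$ used to prove Theorem~1 together with (\ref{con1}). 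By construction $Q_n(t):=P_{X(0)}(\xi(t)=n)$ satisfies the Kolmogorov forward equation with rates $\lambda_n(t),\mu_n(t)$, which is exactly (\ref{mainp}); since $Q_n(0)=p_n(0)$, the uniqueness part of Theorem~1 yields $Q_n(t)=p_n(t)$. The factorisation (\ref{k-dim-distr}) is then simply the Markov property of the constructed chain, and the existence part of the theorem is complete.

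For convergence I would invoke Theorem~4: under (\ref{con2}), $(L(t),M(t),p(t))\to(L_s,M_s,\pi_s)$ in $\mathbf{B}$, where $s$ is determined by $K(X(0))$, so the time-dependent rates converge pointwise to the stationary rates of $\eta_t$ with parameters $L_s,M_s$. Fix $0<\tau_1<\cdots<\tau_{k-1}$ and write $\mathcal{P}^{T}_{\tau}$ for the transition kernel of $\xi$ over $[T,T+\tau]$ and $\mathcal{P}^{\eta}_{\tau}$ for the stationary kernel of $\eta$. I would show that $\mathcal{P}^{T}_{\tau}\to\mathcal{P}^{\eta}_{\tau}$ in operator norm on $\mathbf{B}$ by a Duhamel comparison of the non-autonomous generator $\mathcal{A}_{T+\cdot}$ of $\xi$ with the stationary generator $\mathcal{A}^{\eta}$: the difference $\mathcal{A}_{T+\tau}-\mathcal{A}^{\eta}$ on $\mathbf{B}$ is dominated by $|L(T+\tau)-L_s|+|M(T+\tau)-M_s|$, since the gaussian weight $e^{n^{2}/2}$ in $\|\cdot\|_\alpha$ absorbs the exponential rate factors $e^{\pm c(n-L(t))}$. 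Combining this with convergence of the marginal at time $T+\tau_1$ to $\pi_s$, the $k$ factors in (\ref{k-dim-distr}) converge one by one, yielding the desired convergence of $k$-dimensional distributions to those of the stationary chain $\eta_t$.

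The main obstacle is precisely the semigroup-level comparison above: pointwise convergence of rates does not automatically force convergence of transition probabilities for a chain on the infinite lattice $\mathbb{Z}$, and one must obtain uniform-in-$T$ tail control on the law of $\xi(T+\cdot)$. That control is supplied by the $\mathbf{B}$-norm convergence in Theorem~4, while condition (\ref{con2}) provides the extra coercivity needed to close the Duhamel estimate uniformly in $T$ and to ensure that the stationary chain $\eta_t$ has a spectral gap strong enough to dominate the residual drift between $\mathcal{A}_{T+\tau}$ and $\mathcal{A}^{\eta}$.
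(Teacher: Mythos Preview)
Your proposal is essentially correct and matches the paper's intended argument. For the existence part, the paper is even more economical than you are: rather than building $\xi(t)$ pathwise via exponential clocks and then identifying its marginals with $p_n(t)$ through the uniqueness assertion of Theorem~1, the paper simply observes that the proof of Theorem~1 already produces the family of stochastic matrices $P_{X(0)}(\cdot,s|\cdot,t)$ (via the Dyson-type series~(\ref{trivial_series}), shown to be stochastic in Corollary~\ref{consequence2} and to satisfy Chapman--Kolmogorov in Corollary~\ref{consequence1}), and takes~(\ref{k-dim-distr}) as the \emph{definition} of the process. Your pathwise construction is perfectly valid, but note that to invoke uniqueness in $B_\alpha^+$ you must first know that your $Q_n(t)$ lies in $B_\alpha^+$; the paper's construction gives this for free.

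For the convergence of $k$-dimensional distributions the paper gives no separate argument, so you are filling a genuine gap, and your Duhamel comparison is the right tool --- it is exactly the estimate already carried out in Lemma~\ref{lemma6}, applied on the window $[T+\tau_j,T+\tau_{j+1}]$ with $Z_1=Z(\cdot)$ and $Z_2\equiv(L_s,M_s)$. Two small points: first, Theorem~4 as proved in the paper gives convergence of $p(t)$ in $C_0(\bar{\mathbb Z})$, not in the $\mathbf{B}$-norm, so you should not rely on $\mathbf{B}$-norm convergence of $p(T)$; fortunately you do not need it, since the Duhamel bound is applied to $\delta_{n_j}\in B_\alpha^+$ (fixed $n_j$), not to $p(T)$, and the first factor $p_{n_1}(T+\tau_1)\to\pi_{s}(n_1)$ only needs pointwise convergence. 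Second, no spectral gap for $\eta_t$ is required: boundedness of $L,M$ (Lemma~\ref{lemma10}) makes the constants in Lemma~\ref{lemma6} uniform in $T$, and $\sup_{u\in[T+\tau_j,T+\tau_{j+1}]}(|L(u)-L_s|+|M(u)-M_s|)\to 0$ does the rest.
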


Let us note that while proving Theorem 1, we construct a family $P_{X(0)}(n,s|m,t),$
$t<s,m,n\in\mathbb{Z},$ of stochastic matrices satisfying the semigroup
property. Thus the latter theorem is just the definition of the process
$\xi(t)$, Formula (\ref{k-dim-distr}) looks like it defines a time
inhomogeneous Markov process, but in fact it does not, since the transition
kernels $P_{X(0)}(\cdot,\cdot|\cdot,\cdot)$ depend on the initial
conditions.

\textbf{Acknowledgments.} KV would like to thank Yuri Suhov, Henry
McKean and Raghy Varadhan for stimulating discussions. Both authors
would like to thank Vadim Malyshev for his interest in this work.

\section{Proofs}

Everywhere we will omit the parameter $c$ assuming $c=1.$ To simplify
notation we denote the pair of functions $L$ and $M$ by $Z(t)=(L(t),M(t))$.

\subsection{Existence and uniqueness}

Here we will prove Theorem 1. The scheme of the proof is the following.
Assuming that the continuous functions $L(t),M(t)$ are given, we
prove that the solution of (\ref{mainp}) exists and is unique in
the appropriate Banach space, moreover it has some necessary properties
in this space. Then we substitute this solution to the equations (\ref{mainL}-\ref{mainM}),
thus obtaining two ODE with two unknown functions, and prove that
the solution of these two equations exists.

\paragraph{Two Banach spaces.}

Consider the Banach space $B_{\alpha}^{+},$ which consists of infinite
sequences $(\nu_{k},k\in\mathbb{Z})$ of real numbers with the norm:
\[
\|\nu\|_{\alpha}^{+}=\sum_{k\in\mathbb{Z}}e^{\frac{k^{2}}{2}+\alpha|k|}|\nu_{k}|,\]
 and the Banach space $B_{\alpha}^{-}$ with the norm: \[
\|f\|_{\alpha}^{-}=\sum_{k\in\mathbb{Z}}e^{-\frac{k^{2}}{2}-\alpha|k|}|f_{k}|.\]
 Everywhere below $\alpha$ is an arbitrary fixed real number. Let
us explain the meaning of these Banach spaces. $B_{\alpha}^{+}$ is
the space of admissible measures of the process. $B_{\alpha}^{-}$
is the space of admissible functions. The natural duality between
$B_{\alpha}^{+}$ and $B_{\alpha}^{-}$ is \[
\langle\nu,f\rangle=\sum_{n\in\mathbb{Z}}\nu_{n}f_{n},\qquad\qquad\nu\in B_{\alpha}^{+},f\in B_{\alpha}^{-}.\]
 It is easy to see that \[
|\langle\nu,f\rangle|\leq\|\nu\|_{\alpha}^{+}\|f\|_{\alpha}^{-}.\]
The space of bounded operators, acting on $B_{\alpha}^{+}$ and $B_{\alpha}^{-},$
we denote by $\mathcal{L}(B_{\alpha}^{+})$ and $\mathcal{L}(B_{\alpha}^{-})$
correspondingly. The operators are acting on $B_{\alpha}^{+}$ from
the right, and on $B_{\alpha}^{-}$ from the left.

Finally note that for any $\alpha_{1},\alpha_{2}\in\mathbb{R},$ such
that $\alpha_{1}>\alpha_{2}$, the following inclusions hold $B_{\alpha_{1}}^{+}\subset B_{\alpha_{2}}^{+}$
and $B_{\alpha_{2}}^{-}\subset B_{\alpha_{1}}^{-}.$ We will use these
properties below.

\paragraph{Transition probabilities.}

Assume now that $L(t)$ and $M(t)$ are some fixed continuous functions
on $\mathbb{R}_{+}$. We will prove that the Markov process, defined
by the Kolmogorov equations (\ref{mainp}) for $p_{n}(t)$ exists
and is unique in $B_{\alpha}^{+}$. Denote by $P(t,s),$ $t\leq s$
the family of its transition probability matrixes, $H(t)$ - the infitesimal
matrix. Let $H_{0}(t)$ and $V(t)$ be a diagonal and off diagonal
parts of $H(t)$ correspondingly.

First we present a useful formula for the transition probabilities
valid for a denumerable inhomogeneous continuous time Markov chain.
Denote $\triangle_{k}(t,s)=\{(s_{1},\ldots,s_{k})\in\mathbb{R}^{k}:$
$t\leq s_{k}\leq\ldots\leq s_{1}\leq s\}$ the $k$-dimensional simplex.
\begin{lem}
\label{lem:usefulFormula}Let $X_{t}$ be a continuous time inhomogeneous
Markov chain with denumerable state space and the family of transition
probability matrices $P(t,s),$ defined for $0\leq t\leq s<\infty.$
Denote by $H(t)$ an infinitesimal matrix of $X_{t}$. Let $H_{0}(t)$
and $V(t)$ be a diagonal and off diagonal parts of $H(t),$ then
for any $t<s$ the series \[
P(t,s)=e^{\int_{t}^{s}H_{0}(s)ds}+\]
\begin{equation}
+\sum_{k=1}^{\infty}\int_{\triangle_{k}(t,s)}e^{\int_{t}^{s_{k}}H_{0}(s)ds}V(s_{k})e^{\int_{s_{k}}^{s_{k-1}}H_{0}(s)ds}\ldots V(s_{1})e^{\int_{s_{1}}^{s}H_{0}(s)ds}ds_{k}\ldots ds_{1}\label{trivial_series}\end{equation}
is absolutely norm convergent for some norm $\|\cdot\|$, if $\sup_{u\in[t,s]}\|V(u)\|<\infty$. \end{lem}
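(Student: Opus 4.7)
The plan is to derive the series by Duhamel's principle applied to the forward Kolmogorov equation, and then control each term in a norm for which the free propagator is a contraction. Because $H_{0}(u)$ is diagonal for every $u$, the family $\{H_{0}(u)\}_{u\ge 0}$ is commuting, so the ``free propagator''
\begin{equation*}
U_{0}(t,s):=\exp\!\Bigl(\int_{t}^{s}H_{0}(u)\,du\Bigr)
\end{equation*}
is unambiguously defined, satisfies $U_{0}(t,t)=I$, and obeys $\partial_{s}U_{0}(t,s)=U_{0}(t,s)H_{0}(s)=H_{0}(s)U_{0}(t,s)$. Starting from $\partial_{s}P(t,s)=P(t,s)\bigl(H_{0}(s)+V(s)\bigr)$ with $P(t,t)=I$ and differentiating the product $P(t,s)\,U_{0}(t,s)^{-1}$, the two $H_{0}(s)$--contributions cancel thanks to the commutation above, and one obtains the Volterra equation
\begin{equation*}
P(t,s)=U_{0}(t,s)+\int_{t}^{s}P(t,u)\,V(u)\,U_{0}(u,s)\,du.
\end{equation*}

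Iterating this identity $k$ times and relabelling so that the innermost variable is the smallest reproduces exactly the first $k+1$ terms of the series (\ref{trivial_series}), together with a remainder
\begin{equation*}
R_{k}(t,s)=\int_{\triangle_{k}(t,s)}P(t,s_{k})\,V(s_{k})\,U_{0}(s_{k},s_{k-1})\cdots V(s_{1})\,U_{0}(s_{1},s)\,ds_{k}\cdots ds_{1}.
\end{equation*}
Thus the lemma will follow once I check that, in some norm, both the $k$-th term of (\ref{trivial_series}) and the remainder $R_{k}$ are of size $O\bigl((s-t)^{k}/k!\bigr)$.

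For this I would work with the operator norm induced by the $\ell^{1}$ (total variation) norm on signed measures, in which every substochastic matrix has norm at most $1$. Since the diagonal entries of $H_{0}(u)$ are $-\lambda_{n}(u)-\mu_{n}(u)\le 0$, the matrix $U_{0}(a,b)$ is diagonal with entries in $(0,1]$, so $\|U_{0}(a,b)\|\le 1$ for all $a\le b$. Setting $M:=\sup_{u\in[t,s]}\|V(u)\|<\infty$, the integrand of the $k$-th term in (\ref{trivial_series}) is bounded in norm by $M^{k}$, so the term itself has norm at most $M^{k}\,\mathrm{vol}\bigl(\triangle_{k}(t,s)\bigr)=M^{k}(s-t)^{k}/k!$. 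Summing gives absolute convergence with total norm bounded by $e^{M(s-t)}$, and the same estimate applied to $R_{k}$ (using $\|P(t,s_{k})\|\le 1$) yields $R_{k}\to 0$; hence the series equals $P(t,s)$. The main point requiring care, more bookkeeping than a real obstacle, is to justify the operator manipulations (differentiation under the integral, interchange of infinite sums and integrals) in the denumerable setting; working throughout with the bounded-operator norm described above keeps every step legal.
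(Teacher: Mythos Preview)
Your proposal is correct and follows essentially the same approach as the paper: derive the Volterra equation $P(t,s)=U_{0}(t,s)+\int_{t}^{s}P(t,u)V(u)U_{0}(u,s)\,du$, iterate it, and bound the $k$-th term by $M^{k}(s-t)^{k}/k!$ using that the diagonal free propagator has entries in $(0,1]$ and hence is a contraction. The paper's own proof is a three-line sketch stating exactly these ingredients; you supply more detail (the derivation of the Volterra equation, the explicit remainder estimate showing the series actually sums to $P(t,s)$), but there is no substantive difference in strategy. One small remark: the lemma is phrased for an arbitrary norm $\|\cdot\|$ in which $\sup_{u}\|V(u)\|<\infty$, and the paper's argument uses only that diagonal matrices with entries in $[0,1]$ are contractions in that norm---so your specialization to the $\ell^{1}$ norm is one admissible choice, but the same bound works verbatim in the weighted norms $\|\cdot\|_{\alpha}^{\pm}$ that the paper actually needs later.
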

\begin{proof}
Formally the series is obtained by the iteration of the following
formula\[
P(t,s)-e^{\int_{t}^{s}H_{0}(s)ds}=\int_{t}^{s}P(t,z)V(z)e^{\int_{z}^{s}H_{0}(s)ds}dz.\]
Since all diagonal terms of $e^{\int H_{0}(s)ds}$ do not exceed 1,
then using $\sup_{u\in[t,s]}\|V(u)\|<\infty,$ and the formula for
the volume of the simplex we get the result.
\end{proof}
We want to prove that the corresponding series converges in $B_{\alpha}^{+}$
and the matrices $P(t,s)$ are stochastic and satisfy the Chapman-Kolmogorov
equations. In order to do this we have to check that $V(t)$ are bounded
operators in $B_{\alpha}^{+}$. First we will prove a technical lemma
which will explain the condition (\ref{con1}).

Consider the following infinite three-diagonal matrix

\[
V=\left(\begin{array}{ccccc}
... & ... & ... & ... & ...\\
\mu_{n-1} & 0 & \lambda_{n-1}\\
 & \mu_{n} & 0 & \lambda_{n}\\
 &  & \mu_{n+1} & 0 & \lambda_{n+1}\\
... & ... & ... & ... & ...\end{array}\right)\]

We will consider $V$ as the operator acting on infinite sequences
from the right and from the left.
\begin{lem}
\label{lem:new}There exists a sequence $\left\{ c_{n}\right\} ,n\in\mathbb{Z}$
$c_{n}>0$ such that $V$ is a bounded operator in the Banach space
with the norm\[
\|x\|=\sum_{n\in\mathbb{Z}}c_{n}|x_{n}|\]

if and only if $\sup_{n\in\mathbb{Z}}\lambda_{n}\mu_{n+1}<\infty.$ \end{lem}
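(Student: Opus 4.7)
The lemma is an ``iff,'' and I would prove each direction by a short calculation using the tridiagonal structure of $V$. For the forward direction (bounded $V$ implies the sup condition), I would test the operator on the standard basis vector $e_n$. Since $V$ has zero diagonal and only the entries $V_{i,i-1}=\mu_i$, $V_{i,i+1}=\lambda_i$, the vector $V e_n$ under right action is supported on exactly two positions: $\lambda_{n-1}$ at $n-1$ and $\mu_{n+1}$ at $n+1$. Boundedness with constant $C$ therefore reads
\[
c_{n-1}\lambda_{n-1}+c_{n+1}\mu_{n+1}\le C c_n\qquad\text{for every }n\in\mathbb{Z},
\]
so each summand is separately $\le Cc_n$. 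The key step is to shift the first of these by one index, obtaining $c_n\lambda_n\le Cc_{n+1}$, and to multiply it by $c_{n+1}\mu_{n+1}\le Cc_n$; the weights cancel and leave $\lambda_n\mu_{n+1}\le C^2$, uniformly in $n$. The left-action case is analogous, with the roles of $(\lambda_{n-1},\mu_{n+1})$ and $(\mu_n,\lambda_n)$ interchanged, and produces the same conclusion.

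For the reverse direction, set $K=\sup_n\lambda_n\mu_{n+1}<\infty$. Using the positivity of $\mu_n$, I would fix $c_0=1$ and propagate in both directions by the recursion $c_{n+1}=c_n/(2\mu_{n+1})$, which automatically gives $c_{n-1}=2\mu_n c_n$. Substituting,
\[
c_{n-1}\lambda_{n-1}+c_{n+1}\mu_{n+1}=2(\lambda_{n-1}\mu_n)c_n+\tfrac12 c_n\le\bigl(2K+\tfrac12\bigr)c_n,
\]
which is exactly the operator bound for the right action on the weighted $\ell^1$-space. A mirrored recursion $c_{n+1}=c_n/(2\lambda_n)$ handles the left-action case by the same computation.

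There is no serious mathematical obstacle: once $Ve_n$ (or $e_n V$) is written down, each direction is essentially two lines. The one point needing care is that $V$ is meant to act on both sides, and the two actions yield superficially different inequalities, though both collapse to the same condition $\sup_n\lambda_n\mu_{n+1}<\infty$. I would state precisely which action is being considered at each step and remark that the two constructions above, taken together, cover the statement in full generality.
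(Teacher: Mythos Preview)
Your argument is correct and follows the same overall strategy as the paper: for necessity you test $V$ on the unit vectors $e_n$, extract the two separate inequalities for the ratios $c_{n\pm1}/c_n$, and combine them to cancel the weights; for sufficiency you produce an explicit weight sequence. The paper does exactly this (for the action $x\mapsto xV$, which in its convention means $e_nV=\mu_ne_{n-1}+\lambda_ne_{n+1}$), obtaining the same double inequality and hence $\lambda_n\mu_{n+1}\le\|V\|^2$.

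The only notable difference is in the explicit weights chosen for the sufficiency direction. You take a one-sided recursion $c_{n+1}=c_n/(2\mu_{n+1})$, which gives the bound $2K+\tfrac12$; the paper instead takes the ``balanced'' choice
\[
c_n=\sqrt{\frac{\mu_1\cdots\mu_n}{\lambda_0\cdots\lambda_{n-1}}},\qquad \frac{c_{n+1}}{c_n}=\sqrt{\frac{\mu_{n+1}}{\lambda_n}},
\]
so that $\|e_nV\|/\|e_n\|=\sqrt{\lambda_{n-1}\mu_n}+\sqrt{\lambda_n\mu_{n+1}}\le 2\sqrt{K}$. Your construction uses only the positivity of $\mu_n$ (and its mirror only $\lambda_n>0$), whereas the paper's square-root weights implicitly use both $\lambda_n,\mu_n>0$; on the other hand the paper's choice is symmetric and yields the sharper constant. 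Either choice proves the lemma.
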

\begin{proof}
We will prove this lemma for the case when $V$ is acting from the
right. For the action from the left the proof is similar.

\textbf{Necessity.} Let $V$ be bounded. Then if $e_{n}=\delta_{0,n}$

\[
\frac{\|e_{n}V\|}{\|e_{n}\|}=\frac{\|\mu_{n}e_{n-1}+\lambda_{n}e_{n+1}\|}{\|e_{n}\|}=\frac{c_{n-1}}{c_{n}}\mu_{n}+\frac{c_{n+1}}{c_{n}}\lambda_{n}\leq\|V\|=\mathrm{const.}\]

Whence we have a double inequality \[
\frac{\mu_{n+1}}{\|V\|}\leq\frac{c_{n+1}}{c_{n}}\leq\frac{\|V\|}{\lambda_{n}}.\]

This gives the necessary conclusion.

\textbf{Sufficiency.} Assume that $\sup_{n\in\mathbb{Z}}\lambda_{n}\mu_{n+1}<\infty.$
A straightforward calculation shows that for \[
c_{n}=\sqrt{\frac{\mu_{1}\ldots\mu_{n}}{\lambda_{0}\ldots\lambda_{n-1}}},\]

we get \[
\frac{\|e_{n}V\|}{\|e_{n}\|}=\sqrt{\lambda_{n-1}\mu_{n}}+\sqrt{\lambda_{n}\mu_{n+1}}<\infty.\]

\end{proof}
Applying this lemma to the our case we see that the condition (\ref{con1})
is spelling natural. Indeed \[
\lambda_{n}(t)\mu_{n+1}(t)=\beta^{2}(n)e^{L(t)-M(t)},\]
therefore the condition of Lemma $\ref{lem:new}$ is equivalent to
(\ref{con1}).
\begin{lem}
\label{lemma2} Consider the operator valued function $V(t),$ defined
above. Then this function takes values in the set of bounded operators
in $\mathcal{L}(B_{\alpha}^{+})$, or in $\mathcal{L}(B_{\alpha}^{-})$.
Moreover, it is continuous and \[
\|V(t)\|_{\alpha}^{\pm}\leq\mathrm{const}(e^{-M(t)}+e^{L(t)})\]
 for any $t$. \end{lem}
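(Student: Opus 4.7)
The plan is to reduce the operator-norm bound to a single-basis-vector computation, which is what weighted $\ell^{1}$ spaces are made for. First I would spell out $V(t)$ explicitly: its only non-zero entries are $V_{n,n-1}(t)=\mu_{n}(t)=\beta(n-1)e^{n-M(t)}$ and $V_{n,n+1}(t)=\lambda_{n}(t)=\beta(n)e^{-n+L(t)}$. For the action from the right on $B_{\alpha}^{+}$, the weighted-$\ell^{1}$ structure gives
\[
\|V(t)\|_{\alpha}^{+}=\sup_{n\in\mathbb{Z}}\frac{\|e_{n}V(t)\|_{\alpha}^{+}}{\|e_{n}\|_{\alpha}^{+}},
\]
where $e_{n}$ is the $n$-th coordinate vector. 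Since $e_{n}V(t)$ has only $\mu_{n}(t)$ at position $n-1$ and $\lambda_{n}(t)$ at position $n+1$, this ratio is simply
\[
e^{(n-1)^{2}/2-n^{2}/2+\alpha(|n-1|-|n|)}\mu_{n}(t)+e^{(n+1)^{2}/2-n^{2}/2+\alpha(|n+1|-|n|)}\lambda_{n}(t).
\]

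The key algebraic identity is $(n\pm1)^{2}/2-n^{2}/2=\pm n+\tfrac{1}{2}$: this exactly cancels the linear exponential factors $e^{n}$ from $\mu_{n}$ and $e^{-n}$ from $\lambda_{n}$, leaving only the harmless constant $e^{1/2}$. Combined with $\|\beta\|_{\infty}\le B<\infty$ from (\ref{con1}) and the inequality $\bigl||n\pm1|-|n|\bigr|\le1$ (which costs at most a factor $e^{|\alpha|}$), I get
\[
\|V(t)\|_{\alpha}^{+}\le Be^{1/2+|\alpha|}\bigl(e^{-M(t)}+e^{L(t)}\bigr),
\]
which is the stated bound. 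The argument on $B_{\alpha}^{-}$ is the symmetric one: since the operator acts from the left, I would bound $\|V(t)f\|_{\alpha}^{-}$ column by column via $\sup_{m}(1/\tilde{c}_{m})\sum_{n}\tilde{c}_{n}|V_{n,m}(t)|$ with $\tilde{c}_{n}=e^{-n^{2}/2-\alpha|n|}$. The same identity $(m\pm1)^{2}/2-m^{2}/2=\pm m+\tfrac{1}{2}$ now matches the growth of $\mu_{m+1}\propto e^{m}$ and $\lambda_{m-1}\propto e^{-m}$ against the opposite-sign Gaussian weight, yielding an identical bound.

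For continuity in $t$, I would apply exactly the same estimate to $V(t)-V(t')$: the only $t$-dependence of the entries sits in the scalar factors $e^{-M(t)}$ and $e^{L(t)}$, so the computation above gives $\|V(t)-V(t')\|_{\alpha}^{\pm}\le Be^{1/2+|\alpha|}\bigl(|e^{-M(t)}-e^{-M(t')}|+|e^{L(t)}-e^{L(t')}|\bigr)$, which tends to $0$ by continuity of $L,M$. The whole argument has essentially no obstacle beyond bookkeeping; the only nontrivial observation is that the Gaussian weight $e^{n^{2}/2}$ in the definition of $B_{\alpha}^{\pm}$ was engineered precisely so that, when paired with the exponential rates $\mu_{n},\lambda_{n}$, the leading exponential growth telescopes to a constant. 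This is the same mechanism as in Lemma \ref{lem:new} (where the choice $c_{n}=\sqrt{\mu_{1}\cdots\mu_{n}/\lambda_{0}\cdots\lambda_{n-1}}$ plays the analogous role), specialized to the explicit form of the rates $\lambda_{n},\mu_{n}$ in our model.
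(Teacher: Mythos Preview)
Your proposal is correct and follows essentially the same approach as the paper: both rely on the identity $(n\pm1)^{2}/2-n^{2}/2=\pm n+\tfrac{1}{2}$ to cancel the exponential growth of $\lambda_{n},\mu_{n}$ against the Gaussian weight, and both handle continuity by applying the same estimate to $V(t)-V(t')$. The only cosmetic difference is that you invoke the basis-vector characterization of the weighted-$\ell^{1}$ operator norm, whereas the paper writes out $\|\nu V(t)\|_{\alpha}^{+}$ for a general $\nu$ and then shifts the summation index; these are the same computation.
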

\begin{proof}
Consider an arbitrary vector $\nu\in B_{\alpha}^{+}$. We have \[
\begin{split} & \|\nu V(t)\|_{\alpha}^{+}=\sum_{k\in\mathbb{Z}}e^{\frac{k^{2}}{2}+\alpha|k|}|(\nu V(t))_{k}|=\\
 & =\sum_{k\in\mathbb{Z}}e^{\frac{k^{2}}{2}+\alpha|k|}|e^{-k+1+L(t)}\beta(k-1)\nu_{k-1}+\mbox{}e^{k+1-M(t)}\beta(k)\nu_{k+1}|\leq\\
 & \leq\mathrm{const}\sum_{k\in\mathbb{Z}}e^{\frac{k^{2}}{2}+\alpha|k|-k+1+L(t)}|\nu_{k-1}|+\mathrm{const}\sum_{k\in\mathbb{Z}}e^{\frac{k^{2}}{2}+\alpha|k|+k+1-M(t)}|\nu_{k+1}|=\\
 & =\mathrm{const}\sum_{k\in\mathbb{Z}}(e^{\frac{(k+1)^{2}}{2}+\alpha|k+1|-k-1+1+L(t)}+e^{\frac{(k-1)^{2}}{2}+\alpha|k-1|+k-1+1-M(t)})|\nu_{k}|=\\
 & =\mathrm{const}\sum_{k\in\mathbb{Z}}(e^{\frac{k^{2}}{2}+\frac{1}{2}+\alpha|k+1|+L(t)}+e^{\frac{k^{2}}{2}+\frac{1}{2}+\alpha|k-1|-M(t)})|\nu_{k}|\leq\mathrm{const}(e^{L(t)}+e^{-M(t)})\|\nu\|_{\alpha}^{+}.\end{split}
\]
 Similar calculation for any $f\in B_{\alpha}^{-}$ implies the inequality
for $\|V(t)\|_{\alpha}^{-}.$

It remains to check that $V$ is continuous in $t.$ We will do it
for the space $B_{\alpha}^{+}$ only. For $B_{\alpha}^{-}$ it can
be verified along the same lines. As in the estimates above for any
arbitrary nonzero $\nu\in B_{\alpha}^{-},$ and arbitrary $t_{1},t_{2}$
we have: \[
\|\nu V(t_{1})-\nu V(t_{2})\|_{\alpha}^{+}\leq\mathrm{const}(|e^{L(t_{1})}-e^{L(t_{2})}|+|e^{-M(t_{1})}-e^{-M(t_{2})}|)\|\nu\|_{\alpha}^{+}.\]
 Together with the fact that $L(t)$ and $M(t)$ are continuous, this
implies our statement.\end{proof}
\begin{lem}
\label{lemma3} Let $t\leq s.$ Then the series (\ref{trivial_series})
converges in both norms of $\mathcal{L}(B_{\alpha}^{+})$ and $\mathcal{L}(B_{\alpha}^{-})$
and therefore defines the bounded operator. Moreover, \[
\|P(t,s)\|_{\alpha}^{\pm}\leq\exp(\mathrm{const}(s-t)\sup_{t\in[0,s]}(e^{-M(t)}+e^{L(t)})).\]
\end{lem}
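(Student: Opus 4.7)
The plan is to estimate the Dyson-type series (\ref{trivial_series}) term by term, combining the bound for $V(t)$ from Lemma \ref{lemma2} with the contractivity of the diagonal exponential $e^{\int H_0}$ in the weighted $\ell^1$ norms.

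First I would verify that $e^{\int_a^b H_0(u)\,du}$ has norm at most $1$ in both $\mathcal{L}(B_\alpha^+)$ and $\mathcal{L}(B_\alpha^-)$. The matrix $H_0(t)$ is diagonal with non-positive entries $H_0(t)_{nn} = -(\lambda_n(t)+\mu_n(t))$, so the exponential is diagonal with entries $e^{-\int_a^b(\lambda_n+\mu_n)\,du} \in (0,1]$. Any diagonal operator $D = \mathrm{diag}(d_n)$ with $|d_n|\le 1$ acts as a contraction on every weighted $\ell^1$ norm of the form $\sum c_n|x_n|$, acting either from the left or the right; this covers both $\|\cdot\|_\alpha^+$ and $\|\cdot\|_\alpha^-$.

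Next, set $M_\ast = \sup_{u\in[0,s]}(e^{-M(u)}+e^{L(u)})$, which is finite because $L,M$ are continuous on the compact interval $[0,s]$. By Lemma \ref{lemma2}, $\|V(u)\|_\alpha^\pm \le C_0\,(e^{-M(u)}+e^{L(u)}) \le C_0 M_\ast$ for all $u\in[t,s]$. Applying submultiplicativity of the operator norm to each integrand in (\ref{trivial_series}) and using the contractivity of the diagonal factors gives
\[
\Bigl\| e^{\int_t^{s_k}H_0}V(s_k)e^{\int_{s_k}^{s_{k-1}}H_0}\cdots V(s_1)e^{\int_{s_1}^{s}H_0}\Bigr\|_\alpha^\pm \le (C_0 M_\ast)^k
\]
pointwise on $\triangle_k(t,s)$. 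Since $\mathrm{vol}\,\triangle_k(t,s) = (s-t)^k/k!$, the $k$-th term of the series is bounded in norm by $(C_0 M_\ast(s-t))^k/k!$.

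Summing the resulting geometric-exponential series yields
\[
\|P(t,s)\|_\alpha^\pm \le \sum_{k=0}^\infty \frac{(C_0 M_\ast(s-t))^k}{k!} = \exp\bigl(C_0 (s-t)\,M_\ast\bigr),
\]
which is exactly the claimed bound, and absolute norm convergence of the series gives that $P(t,s)$ is a well-defined bounded operator on each of the two spaces. I do not anticipate a serious obstacle: the only point that requires any care is checking that $e^{\int H_0}$ is a contraction in both weighted norms at once, and this is immediate from its diagonal form together with the non-positivity of $H_0$.
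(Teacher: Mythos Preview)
Your argument is correct and follows essentially the same route as the paper: both use the contractivity of the diagonal factor $e^{\int H_0}$, the bound on $\|V(u)\|_\alpha^\pm$ from Lemma~\ref{lemma2}, and the simplex volume $(s-t)^k/k!$ to majorize the $k$-th term and sum to an exponential. Your write-up is slightly more explicit about why the diagonal exponential is a contraction in both weighted norms, but otherwise the proofs coincide.
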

\begin{proof}
Let us prove the lemma for $B_{\alpha}^{+}.$ Arguments for $B_{\alpha}^{-}$
are exactly the same. First, note that Lemma \ref{lemma2} implies
that for any $t\leq s$ \begin{equation}
\sup_{t\in[0,s]}\|V(t)\|_{\alpha}^{+}\leq\mathrm{const}\cdot\sup_{t\in[0,s]}(e^{-M(t)}+e^{L(t)}).\label{eq:estimateForV}\end{equation}
 Since $H_{0}$ consists of negative numbers, then for any $t_{1}\leq t_{2}$
we have $\|e^{\int_{t_{1}}^{t_{2}}H_{0}(s)ds}\|_{\alpha}^{+}\leq1.$
Using that the volume of the simplex $\triangle_{k}(t,s)$ is $\frac{(s-t)^{k}}{k!},$
and the estimates above, we obtain: \[
\|P(t,s)\|_{\alpha}^{+}\leq1+\sum_{k=1}^{\infty}\frac{(\sup_{[0,s]}\|V(t)\|_{\alpha}^{+})^{k}(s-t)^{k}}{k!}=e^{(s-t)\sup_{[0,s]}\|V(t)\|_{\alpha}^{+}}.\]

\end{proof}

\paragraph{Approximation by finite Markov chains.}

Lemma \ref{lemma3} states that for any $t\leq s$ the operator $P(t,s)$
is defined in the spaces $B_{\alpha}^{+}$. It remains to prove that
they define a Markov process. For the proof we need to introduce new
notation.

Define truncated Markov processes $X^{m}$ as the restriction of $X$
on $[-m,m]$. More exactly $X^{m}$ has the infinitesimal rates \[
k\rightarrow k+1:\quad\lambda_{k}^{m}(t)=\beta(k)e^{-k+L(t)},\quad k\in[-m,m-1];\]
 \[
k\rightarrow k-1:\quad\mu_{k}^{m}(t)=\beta(k-1)e^{k-M(t)},\quad k\in[-m+1,m].\]
 Let $H^{m}=H^{m}(t)$ be the infinitesimal matrix of $X^{m}$. Similar
to what we have done before we write $H^{m}$ in the form $H^{m}=H_{0}^{m}+V^{m},$
where $H_{0}^{m},V^{m}$ are its diagonal and off diagonal parts.
For $X^{m}$ obviously holds the formula analogous to (\ref{trivial_series})
\begin{multline*}
P^{m}(t,s)=e^{\int_{t}^{s}H_{0}^{m}(s)ds}+\\
+\sum_{k=1}^{\infty}\int_{\triangle_{k}(t,s)}e^{\int_{t}^{s_{k}}H_{0}^{m}(s)ds}V^{m}(s_{k})\ldots V^{m}(s_{1})e^{\int_{s_{1}}^{s}H_{0}^{m}(s)ds}ds_{k}\ldots ds_{1}.\end{multline*}

\begin{lem}
\label{lemma4} \textbf{1.} For any $\pi\in B_{\alpha}^{+}$ and any
$t\leq s$: $\pi P^{m}(t,s)\to_{m\to\infty}\pi P(t,s)$ in the sense
of the norm $\|\cdot\|_{\alpha}^{+}$;

\textbf{2.} For any $\pi\in B_{\alpha}^{-}$ and any $t\leq s$: $P^{m}(t,s)\pi\to_{m\to\infty}P(t,s)\pi$
in the norm $\|\cdot\|_{\alpha}^{-}$.
\end{lem}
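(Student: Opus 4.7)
My plan is to compare the series (\ref{trivial_series}) for $P(t,s)$ with the analogous series for $P^m(t,s)$ term by term, using a uniform majorant for the tails and a combinatorial observation about how supports propagate under the factors of the integrand.

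First I would establish uniform norm bounds. Since $V^m(t)$ is obtained from $V(t)$ by zeroing out entries that cross the boundary of $[-m,m]$, the calculation in Lemma \ref{lemma2} applies verbatim and gives $\|V^m(t)\|_{\alpha}^{\pm}\leq \mathrm{const}(e^{-M(t)}+e^{L(t)})$, independently of $m$. The diagonal part $H_0^m(t)$ still consists of nonpositive numbers, so $\|e^{\int H_0^m}\|_{\alpha}^{\pm}\leq 1$. Exactly as in Lemma \ref{lemma3}, the $k$-th term of the series for $P^m(t,s)$ is then bounded in operator norm by $C^k(s-t)^k/k!$ with $C=\sup_{u\in[t,s]}\|V(u)\|_{\alpha}^{\pm}$, uniformly in $m$.

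Next I would reduce to finitely supported $\pi$. For $\pi\in B_{\alpha}^{+}$ set $\pi^{(N)}_k=\pi_k\mathbf{1}_{\{|k|\leq N\}}$; then $\|\pi-\pi^{(N)}\|_{\alpha}^{+}\to 0$ as $N\to\infty$. By the uniform bound above, both $\|(\pi-\pi^{(N)})P(t,s)\|_{\alpha}^{+}$ and $\|(\pi-\pi^{(N)})P^m(t,s)\|_{\alpha}^{+}$ are controlled by a fixed constant times $\|\pi-\pi^{(N)}\|_{\alpha}^{+}$, so it suffices to show $\pi^{(N)}P^m(t,s)\to\pi^{(N)}P(t,s)$ for each fixed $N$.

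The combinatorial heart is the following: $V$ and $V^m$ agree on the entries $V_{j,j\pm 1}$ for $|j|\leq m-1$, and $H_0$ and $H_0^m$ agree on the diagonal entries $(H_0)_{kk}$ for $|k|\leq m-1$. Since each application of $V$ spreads support by at most $1$ in each direction while the diagonal exponentials preserve support, after $j$ factors of $V$ the vector starting from $\pi^{(N)}$ is supported in $[-N-j,N+j]$. Provided $N+k\leq m-1$, every intermediate action in the $k$-th integrand is performed on a vector whose support lies in $(-m,m)$, where $V$ and $V^m$ (as well as $H_0$ and $H_0^m$) coincide. Therefore for every $m\geq N+K+1$ the first $K$ terms of the series for $\pi^{(N)}P(t,s)$ and for $\pi^{(N)}P^m(t,s)$ are literally identical.

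Putting this together: given $\varepsilon>0$, choose $K$ so that the tail of both series past index $K$ has norm less than $\varepsilon$, then $N$ with $\|\pi-\pi^{(N)}\|_{\alpha}^{+}<\varepsilon$, and finally take any $m\geq N+K+1$. A triangle inequality among $\pi P$, $\pi^{(N)} P$, $\pi^{(N)} P^m$, $\pi P^m$ yields $\|\pi P(t,s)-\pi P^m(t,s)\|_{\alpha}^{+}\leq \mathrm{const}\cdot\varepsilon$. Statement 2 follows by an entirely parallel argument with left action and the norm $\|\cdot\|_{\alpha}^{-}$. The main obstacle is the bookkeeping in the combinatorial step: one must track the supports through the alternating $V$'s and diagonal exponentials carefully and be sure that the boundary entries of $H_0^m$ (where it differs from $H_0$ even inside $[-m,m]$, namely at $k=\pm m$) never enter, which is exactly why the strict inclusion into $(-m,m)$ is required rather than into $[-m,m]$.
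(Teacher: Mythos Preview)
Your proof is correct, but it takes a genuinely different route from the paper's. The paper argues termwise convergence in a soft way: for each fixed $k$ it shows that the integrand $\Gamma^{m}(s_{1},\dots,s_{k})$, the difference of the $k$-th products built from $(H_{0},V)$ and $(H_{0}^{m},V^{m})$, tends to $0$ pointwise on the simplex (using the strong convergences $\pi e^{\int H_{0}^{m}}\to\pi e^{\int H_{0}}$ and $\pi V^{m}\to\pi V$), then applies Lebesgue's dominated convergence over $\triangle_{k}(t,s)$, and finally interchanges the limit $m\to\infty$ with the sum over $k$ via the uniform majorant $\|\pi\|\,C^{k}(s-t)^{k}/k!$.

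You instead use a hard combinatorial mechanism: truncate $\pi$ to $\pi^{(N)}$ and exploit the nearest-neighbour structure of $V$ to see that the first $K$ integrands for $\pi^{(N)}P$ and $\pi^{(N)}P^{m}$ are \emph{identical} once $m\geq N+K+1$, so only the exponentially small tails differ. Both approaches share the uniform bound from Lemma~\ref{lemma3}, but yours avoids any dominated-convergence or interchange-of-limits argument and in principle yields an explicit rate; the paper's softer proof, by contrast, does not use the finite propagation speed of $V$ and would adapt more readily to perturbations with wider support.
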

\textit{Proof.} We will give a proof only for $B_{\alpha}^{+}.$ For
$B_{\alpha}^{-}$ the proof is the same. Fix some $0\leq t\leq s$
and define\[
\Gamma^{m}(s_{1},\ldots,s_{k}):=\pi e^{\int_{t}^{s_{k}}H_{0}(s)ds}V(s_{k})e^{\int_{s_{k}}^{s_{k-1}}H_{0}(s)ds}\ldots V(s_{1})e^{\int_{s_{1}}^{s}H_{0}(s)ds}-\]
\[
-\pi e^{\int_{t}^{s_{k}}H_{0}^{m}(s)ds}V^{m}(s_{k})e^{\int_{s_{k}}^{s_{k-1}}H_{0}^{m}(s)ds}\ldots V^{m}(s_{1})e^{\int_{s_{1}}^{s}H_{0}^{m}(s)ds},\]
where $\{s_{j}\}_{j=1}^{k}\in\triangle_{k}(t,s).$ It is easy to check
that $\pi e^{\int_{t_{1}}^{t_{2}}H_{0}^{m}(s)ds}\rightarrow_{m\to\infty}\pi e^{\int_{t_{1}}^{t_{2}}H_{0}(s)ds}$
in the sense of the norm $\|\cdot\|_{\alpha}^{+}$ for $t_{1}\leq t_{2}$,
and $\pi V^{m}(t)\rightarrow_{m\to\infty}\pi V(t)$. Therefore in
the norm \[
\Gamma^{m}(s_{1},\ldots,s_{k})\to_{m\to\infty}0\]
 for all sets $\{s_{j}\}_{j=1}^{k},$ which belong to correspondent
simplex. Let us estimate the difference of the $k$-th terms $A_{k}$
and $A_{k}^{m}$ of the series for $\pi P(t,s)$ and $\pi P^{m}(t,s)$
correspondingly \[
\|A_{k}-A_{k}^{m}\|_{\alpha}^{+}\leq\int_{\triangle_{k}(t,s)}\|\Gamma^{m}(s_{1},\ldots,s_{k})\|_{\alpha}^{+}ds_{k}\ldots ds_{1}.\]
 Using the estimates, similar to the one used in Lemma ~\ref{lemma3},
it is easy to check that $\|\Gamma^{m}(s_{1},\ldots,s_{k})\|_{\alpha}^{+}$
bounded on simplex, namely: \[
\|\Gamma^{m}(s_{1},\ldots,s_{k})\|_{\alpha}^{+}\leq2\|\pi\|_{\alpha}^{+}(\mathrm{const\cdot}\sup_{[0,s]}(e^{-M(t)}+e^{L(t)}))^{k}.\]
 Therefore, Lebesgue theorem implies that: \[
\int_{\triangle_{k}(t,s)}\|\Gamma^{m}(s_{1},\ldots,s_{k})\|_{\alpha}^{+}ds_{k}\ldots ds_{1}\to_{m\to\infty}0,\]
 i.e. \[
\|A_{k}-A_{k}^{m}\|_{\alpha}^{+}\to_{m\to\infty}0.\]
 Moreover, using a formula for the volume of simplex it is easy to
get, that: \[
\|A_{k}-A_{k}^{m}\|_{\alpha}^{+}\leq\frac{s^{k}}{k!}2\|\pi\|_{\alpha}^{+}(\mathrm{const\cdot}\sup_{[0,s]}(e^{-M(t)}+e^{L(t)}))^{k},\]
 i.e. $\sum_{k=1}^{\infty}\|A_{k}-A_{k}^{m}\|_{\alpha}^{+}$ converges
uniformly in $m.$

We will use the following simple
\begin{prop}
Let the series $\sum_{k=1}^{\infty}a_{km}$ converge uniformly in
$m=0,1,\ldots,$ and $a_{km}\to_{m\to\infty}0,$ then $\sum_{k=1}^{\infty}a_{km}\to_{m\to\infty}0.$
\end{prop}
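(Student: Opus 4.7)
The plan is to apply the standard $\varepsilon/2$ argument for interchanging a limit with a uniformly convergent series. Fix $\varepsilon>0$. First I would invoke the uniform convergence of $\sum_{k=1}^{\infty}a_{km}$: by definition, the tails $R_{N}(m):=\sum_{k=N+1}^{\infty}a_{km}$ satisfy $\sup_{m}|R_{N}(m)|\to 0$ as $N\to\infty$, so one can choose an index $N$ (independent of $m$) with $|R_{N}(m)|<\varepsilon/2$ for every $m$.

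Next, for this fixed $N$, I would use the pointwise hypothesis $a_{km}\to 0$ as $m\to\infty$. Since the initial segment $\sum_{k=1}^{N}a_{km}$ is a \emph{finite} sum in $k$, applying that limit $N$ times and setting $M=\max(M_{1},\ldots,M_{N})$, where each $M_{k}$ is chosen so that $|a_{km}|<\varepsilon/(2N)$ for $m\ge M_{k}$, gives an index $M$ beyond which $\bigl|\sum_{k=1}^{N}a_{km}\bigr|<\varepsilon/2$.

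Combining the two bounds by the triangle inequality yields
\[
\Bigl|\sum_{k=1}^{\infty}a_{km}\Bigr|\le\Bigl|\sum_{k=1}^{N}a_{km}\Bigr|+|R_{N}(m)|<\varepsilon
\]
for every $m\ge M$, which is the desired conclusion. There is no substantive obstacle: the argument reduces to the definitions of uniform convergence of a series and of the sequential limit $a_{km}\to 0$, together with the fact that a finite sum of null sequences is null. The only conceptual point to flag is the order of quantifiers, namely that $N$ is selected \emph{before} the threshold $M$ for the pointwise limits, so that $M$ is allowed to depend on $N$ (and hence on $\varepsilon$) without any circularity.
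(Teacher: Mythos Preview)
Your argument is correct and is exactly the standard $\varepsilon/2$ proof one would expect; the paper itself treats this proposition as elementary and states it without proof, so there is nothing further to compare.
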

Using this proposition we have: \[
\|\pi P^{m}(t,s)-\pi P(t,s)\|_{\alpha}^{+}\leq\sum_{k=1}^{\infty}\|A_{k}-A_{k}^{m}\|_{\alpha}^{+}\to_{m\to\infty}0.\]
 Therefore $\|\pi P^{m}(t,s)-\pi P(t,s)\|_{\alpha}^{+}\to_{m\to\infty}0.$
\qed
\begin{cor}
\label{consequence1} The matrices $P(\cdot,\cdot)$ satisfy the Chapman-Kolmogorov
equations.
\end{cor}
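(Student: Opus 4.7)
The plan is to pass to the limit $m\to\infty$ in the Chapman--Kolmogorov identities for the truncated chains $X^m$. Each $X^m$ is a bona fide finite-state inhomogeneous continuous time Markov chain, hence
\[
P^{m}(t,s)=P^{m}(t,u)\,P^{m}(u,s),\qquad t\leq u\leq s.
\]
To transfer this identity to the infinite chain it is enough, since $B_{\alpha}^{+}$ separates operators and everything is bounded, to establish the corresponding equality when acting on an arbitrary $\pi\in B_{\alpha}^{+}$, namely $\pi P(t,s)=(\pi P(t,u))\,P(u,s)$.

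First I would fix such a $\pi$ and write
\[
\pi P^{m}(t,s)=\bigl(\pi P^{m}(t,u)\bigr)\,P^{m}(u,s).
\]
The left-hand side converges to $\pi P(t,s)$ in $\|\cdot\|_{\alpha}^{+}$ by part~1 of Lemma~\ref{lemma4}. The subtlety is on the right-hand side, since both factors depend on $m$; I would therefore split
\[
\bigl(\pi P^{m}(t,u)\bigr)P^{m}(u,s)-\bigl(\pi P(t,u)\bigr)P(u,s)=\bigl[\pi P^{m}(t,u)-\pi P(t,u)\bigr]P^{m}(u,s)+\pi P(t,u)\bigl[P^{m}(u,s)-P(u,s)\bigr].
\]
For the second summand, note that $\pi P(t,u)\in B_{\alpha}^{+}$ by Lemma~\ref{lemma3}, and then Lemma~\ref{lemma4}.1 applied to the measure $\pi P(t,u)$ shows the summand tends to zero in norm.

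For the first summand I would use a uniform-in-$m$ operator norm bound $\|P^{m}(u,s)\|_{\alpha}^{+}\leq C$. This is the main obstacle, but it is not a hard one: the off-diagonal parts $V^{m}(t)$ are honest truncations of $V(t)$, so the argument of Lemma~\ref{lemma2} gives $\|V^{m}(t)\|_{\alpha}^{+}\leq \mathrm{const}\,(e^{-M(t)}+e^{L(t)})$ with a constant independent of $m$; plugging this into the series expansion for $P^{m}(t,s)$ exactly as in Lemma~\ref{lemma3} produces the required uniform exponential bound. Combined with Lemma~\ref{lemma4}.1 for $\pi$ itself, the first summand also vanishes in $\|\cdot\|_{\alpha}^{+}$. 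Letting $m\to\infty$ in the displayed identity yields $\pi P(t,s)=\pi P(t,u)\,P(u,s)$ for every $\pi\in B_{\alpha}^{+}$, which is the Chapman--Kolmogorov equation for $P(\cdot,\cdot)$.
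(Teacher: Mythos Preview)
Your proof is correct and follows essentially the same approach as the paper: pass to the limit in the Chapman--Kolmogorov identity for the truncated chains using Lemma~\ref{lemma4} together with a uniform-in-$m$ bound on $\|P^{m}(u,s)\|_{\alpha}^{+}$. The only cosmetic difference is the decomposition: the paper writes out $\pi[P(t,u)P(u,s)-P(t,s)]$ and expands it into four terms, whereas your two-term splitting of $(\pi P^{m}(t,u))P^{m}(u,s)-(\pi P(t,u))P(u,s)$ is a bit more economical; both reduce to the same two ingredients, namely Lemma~\ref{lemma4} applied to $\pi$ and to $\pi P(t,u)$, plus the uniform operator bound you correctly identify and justify.
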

\textit{Proof.} It is apparent that for all $t\leq u\leq s,$ and
$m\in\mathbb{N},$ the Chapman-Kolmogorov equations hold \[
P^{m}(t,u)P^{m}(u,s)=P^{m}(t,s).\]
 Fix some $\pi\in B_{+}^{\alpha},$ then \[
\begin{split} & \pi[P(t,u)P(u,s)-P(t,s)]=\pi[(P(t,u)-P^{m}(t,u))P(u,s)+\\
 & +P^{m}(t,u)(P(u,s)-P^{m}(u,s))+(P^{m}(t,s)-P(t,s))]=\\
 & \pi[(P(t,u)-P^{m}(t,u))P(u,s)+(P^{m}(t,u)-P(t,u))(P(u,s)-P^{m}(u,s))+\\
 & +P(t,u)(P(u,s)-P^{m}(u,s))+(P^{m}(t,s)-P(t,s))].\end{split}
\]
 Using Lemma \ref{lemma4} and the uniform boundness in the norm $P^{m}(t,s)$
on the segment $[t,s]$ (easy to check), we obtain, in the limit $m\to\infty,$
the required statement. \qed
\begin{cor}
\label{consequence2} The matices $P(\cdot,\cdot)$ are stochastic.
\end{cor}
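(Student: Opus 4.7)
The plan is to deduce both non-negativity of entries and unit row sums from the corresponding facts for the finite truncations $P^{m}(t,s)$, using the approximation result of Lemma~\ref{lemma4} together with the duality between $B_{\alpha}^{+}$ and $B_{\alpha}^{-}$.

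First I would set up the truncation properly. Since the infinitesimal rates defining $X^{m}$ vanish outside $[-m,m]$, when $P^{m}(t,s)$ is regarded as a bi-infinite matrix (via the series with $H_{0}^{m}$ and $V^{m}$) the rows indexed by $n\in[-m,m]$ give an honest $(2m+1)\times(2m+1)$ stochastic matrix with mass confined to $[-m,m]$, while the rows outside are trivial. In particular, for every such $m$ and every $n\in[-m,m]$,
\[
(e_{n}P^{m}(t,s))_{k}\geq0\quad\text{for all }k\in\mathbb{Z},\qquad\sum_{k\in\mathbb{Z}}(e_{n}P^{m}(t,s))_{k}=1,
\]
both of which are classical facts about finite continuous-time inhomogeneous Markov chains (and can be read off directly from the series for $P^{m}(t,s)$ together with the Chapman--Kolmogorov identity at the finite level, which is immediate).

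Next I would fix $n\in\mathbb{Z}$ and the test vector $e_{n}\in B_{\alpha}^{+}$. For $m\geq|n|$, Lemma~\ref{lemma4} gives $\|e_{n}P^{m}(t,s)-e_{n}P(t,s)\|_{\alpha}^{+}\to 0$. Since each coordinate is dominated by the $\|\cdot\|_{\alpha}^{+}$-norm up to a positive weight, this forces the coordinate-wise limit
\[
(e_{n}P(t,s))_{k}=\lim_{m\to\infty}(e_{n}P^{m}(t,s))_{k}\geq 0,
\]
which gives non-negativity of all entries of $P(t,s)$.

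Finally, for the row sums I would pair with the constant vector $\mathbf{1}=(1)_{k\in\mathbb{Z}}$, which lies in $B_{\alpha}^{-}$ because $\|\mathbf{1}\|_{\alpha}^{-}=\sum_{k}e^{-k^{2}/2-\alpha|k|}<\infty$. For $m\geq|n|$ one has $\langle e_{n}P^{m}(t,s),\mathbf{1}\rangle=1$ by the finite-chain row sum identity above, and the duality estimate
\[
|\langle e_{n}P^{m}(t,s)-e_{n}P(t,s),\mathbf{1}\rangle|\leq\|e_{n}P^{m}(t,s)-e_{n}P(t,s)\|_{\alpha}^{+}\,\|\mathbf{1}\|_{\alpha}^{-}
\]
together with Lemma~\ref{lemma4} yields $\sum_{k}P(t,s)_{n,k}=\langle e_{n}P(t,s),\mathbf{1}\rangle=1$, completing the proof. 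The only delicate point is to make sure the bi-infinite extension of $P^{m}(t,s)$ used in Lemma~\ref{lemma4} is the one that preserves finite stochasticity on the truncation block; everything else is a routine limit argument combining the norm convergence of Lemma~\ref{lemma4} with the duality inequality already recorded after the definition of the spaces $B_{\alpha}^{\pm}$.
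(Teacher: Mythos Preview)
Your argument is correct and follows the same strategy as the paper: pass the stochasticity of the finite truncations $P^{m}(t,s)$ to the limit via Lemma~\ref{lemma4}. The only differences are cosmetic. The paper applies part~2 of Lemma~\ref{lemma4} directly to the constant vector $h=\mathbf{1}\in B_{\alpha}^{-}$, obtaining $P(t,s)h=\lim_{m}P^{m}(t,s)h=h$ in $\|\cdot\|_{\alpha}^{-}$, which gives the row sums in one stroke; you instead use part~1 on $e_{n}\in B_{\alpha}^{+}$ and then pair with $\mathbf{1}$ via the duality inequality, which amounts to the same thing. You also spell out the non-negativity of entries, which the paper leaves implicit (it is immediate either from the series~(\ref{trivial_series}), whose terms are entrywise non-negative, or from your coordinate-wise limit argument).
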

\textit{Proof.} Let $h\in B_{\alpha}^{-}$ be the vector which consists
of all 1's (i.e. for any $i\in\mathbb{Z},h_{i}=1$), then for any
$m\in\mathbb{N},$ $t\leq s$ we have: \[
(P^{m}(t,s)h)_{i}=\sum_{j\in\mathbb{Z}}(P^{m}(t,s))_{ij}=1\Longrightarrow P^{m}(t,s)h=h.\]
 Using Lemma \ref{lemma4}, we obtain in the norm \[
P^{m}(t,s)h\to P(t,s)h.\]
 Since $P^{m}(t,s)h=h,$ the latter formula implies: \[
P(t,s)h=h,\]
 but this means that: \[
\sum_{j\in\mathbb{Z}}(P(t,s))_{ij}=1.\]
\qed
\begin{rem*}
In the Corollary \ref{consequence2} we used the fact that the chain
is not exploding. In those cases when the trajectory runs to infinity
it is impossible to adjust the norm $\|\cdot\|_{\alpha}^{-}$ such
that the vector $h,$ which consists of all 1's belongs to this space.
However, and in these cases the matrix $P(\cdot,\cdot)$ can be defined,
but it will not be stochastic.\end{rem*}
\begin{cor}
\label{consequence3} The family $P(t,s),$ $t\leq s$ is continuous
in $t$ and $s$ in $\mathcal{L}(B_{\alpha}^{\pm}).$
\end{cor}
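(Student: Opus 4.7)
The plan is to prove operator-norm continuity of $(t,s)\mapsto P(t,s)$ in $\mathcal{L}(B_\alpha^\pm)$ at any $(t_0,s_0)$ with $t_0<s_0$, by termwise estimation of the series (\ref{trivial_series}). I argue in $B_\alpha^+$; the $B_\alpha^-$ case is analogous. Write $P(t,s)=\sum_{k\ge 0}T_k(t,s)$ with $T_0(t,s)=e^{\int_t^s H_0(u)\,du}$ and $T_k$ the $k$-fold simplex integral for $k\ge 1$. Fix a compact neighborhood $U$ of $(t_0,s_0)$ on which $L,M$ are bounded and $s-t\ge(s_0-t_0)/2>0$. Lemmas \ref{lemma2}--\ref{lemma3} give $\|T_k(t,s)\|_\alpha^+\le M^k(s-t)^k/k!$ uniformly in $U$, so the series converges in operator norm uniformly on $U$; given $\epsilon>0$ the tail beyond some finite $N$ is $<\epsilon/3$. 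It therefore suffices to prove operator-norm continuity of each $T_k$, $0\le k<N$, at $(t_0,s_0)$.

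For $T_0$: it is diagonal with $n$-th entry $d_n(t,s)=\exp\bigl(-\int_t^s\sigma_n(u)\,du\bigr)$, where $\sigma_n=\lambda_n+\mu_n$, and its norm on $B_\alpha^+$ is $\sup_n|d_n|$. The mean value theorem applied to $e^{-x}$ gives
\[
|d_n(t,s)-d_n(t_0,s_0)|\le (|t-t_0|+|s-s_0|)\cdot\sup_U\sigma_n\cdot\exp\!\Bigl(-\tfrac{s_0-t_0}{2}\inf_U\sigma_n\Bigr).
\]
Because the rates contain $e^{\pm n}$, one has $\sigma_n\to\infty$ as $|n|\to\infty$, and $\sup_U\sigma_n$ is comparable to $\inf_U\sigma_n$ when $U$ is taken small enough. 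Combining this with $\sup_{x\ge 0}x e^{-cx}=1/(ec)$ at $c=(s_0-t_0)/4$ yields $\sup_n|d_n(t,s)-d_n(t_0,s_0)|\le C(|t-t_0|+|s-s_0|)/(s_0-t_0)\to 0$, which is the desired operator-norm continuity of $T_0$.

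For $k\ge 1$: write $T_k(t,s)=\int_{\triangle_k(t,s)}F(s_1,\ldots,s_k;t,s)\,ds$; the integrand is an alternating product of diagonal exponentials and $V$-factors. For $(s_1,\ldots,s_k)$ strictly interior to $\triangle_k(t_0,s_0)$, each of the $k+1$ diagonal factors spans a sub-interval of positive length, so the argument from the preceding paragraph applies to each, and $V(s_i)$ is operator-norm continuous in $s_i$ by Lemma \ref{lemma2}. Telescoping the difference of products (adding and subtracting one factor at a time, using the uniform operator-norm bound $M$ on each remaining factor) gives $\|F(\cdot;t,s)-F(\cdot;t_0,s_0)\|_\alpha^+\to 0$ pointwise in the interior, with uniform bound $\le M^k$. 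Dominated convergence handles the integral over $\triangle_k(t,s)\cap\triangle_k(t_0,s_0)$, and the symmetric difference has measure $O(|t-t_0|+|s-s_0|)$ and is bounded by the same uniform estimate; hence $T_k(t,s)\to T_k(t_0,s_0)$ in operator norm. The delicate step throughout is the $T_0$-estimate: the linear growth $\sigma_n$ has to be absorbed into the exponential decay $e^{-c\sigma_n}$ via $\sup xe^{-cx}<\infty$, and this works precisely because the positive gap $s_0-t_0>0$ fixes a strictly positive decay constant $c$, so the whole argument is uniform in $n$.
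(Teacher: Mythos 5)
Your termwise, two-variable estimation of the series (\ref{trivial_series}) is a genuinely different route from the paper's, which instead invokes the Chapman--Kolmogorov identity (Corollary \ref{consequence1}) to reduce the whole statement to continuity of $t_{1}\mapsto P(t,t+t_{1})$ at $t_{1}=0$ and then bounds $\|P(t,t+t_{1})-\mathrm{Id}\|_{\alpha}^{\pm}$ by $\|e^{\int_{t}^{t+t_{1}}H_{0}(u)du}-\mathrm{Id}\|_{\alpha}^{\pm}$ plus the tail $\exp(\mathrm{const}\cdot t_{1}\sup_{u}(e^{-M(u)}+e^{L(u)}))-1$. The trouble is that you have proved a different statement from the one asserted: you treat only points $(t_{0},s_{0})$ with $t_{0}<s_{0}$, while the corollary concerns the family on the closed region $t\leq s$, and the diagonal is precisely the case that matters downstream --- Lemma \ref{lemma7} needs $t\mapsto p(0)P_{Z}(0,t)$ to be continuous at $t=0$, where $P_{Z}(0,0)=\mathrm{Id}$. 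As a proof of the corollary as stated, that is a genuine hole, not a cosmetic omission.

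Moreover, the hole cannot be filled by your method, because operator-norm continuity at the diagonal is simply false in the typical case: $e^{\int_{t}^{t+t_{1}}H_{0}(u)du}$ is diagonal with entries $\exp(-\int_{t}^{t+t_{1}}(\lambda_{n}(u)+\mu_{n}(u))du)$, and $\lambda_{n}+\mu_{n}=\beta(n)e^{-n+L}+\beta(n-1)e^{n-M}$ is unbounded in $n$ whenever $\beta$ is bounded away from zero (e.g.\ $\beta\equiv1$); hence $\|e^{\int_{t}^{t+t_{1}}H_{0}(u)du}-\mathrm{Id}\|_{\alpha}^{+}=\sup_{n}\bigl(1-e^{-\int_{t}^{t+t_{1}}(\lambda_{n}+\mu_{n})du}\bigr)=1$ for every $t_{1}>0$. (This also shows that the first term in the paper's own displayed estimate does not tend to $0$ in operator norm; what actually survives at the diagonal is strong continuity, $\nu P(t,t+t_{1})\to\nu$ for each fixed $\nu\in B_{\alpha}^{+}$, by dominated convergence --- and that is all Lemma \ref{lemma7} uses.) Your $\sup_{x\geq0}xe^{-cx}<\infty$ device works exactly because the fixed gap $s_{0}-t_{0}>0$ supplies a strictly positive decay constant $c$, and that constant evaporates on the diagonal. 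Off the diagonal your argument is sound, and in fact sharper than the paper's there; to complete the proof you must add the diagonal case in the strong topology, or state explicitly that operator-norm continuity is claimed only for $t<s$.
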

\textit{Proof.} Since $P(t,s)$ satisfies Kolmogorov-Chapman equations,
it is enough to prove that $P(t,t+t_{1})$ is continuous at zero as
a function of $t_{1}.$ Using the formula (\ref{trivial_series})
and the estimates analogous to those of Lemma \ref{lemma3}, we have:
\[
\begin{split} & \|P(t,t+t_{1})-\mathrm{Id}\|_{\alpha}^{\pm}\leq\|e^{\int_{t}^{t+t_{1}}H_{0}(s)ds}-\mathrm{Id}\|_{\alpha}^{\pm}+\\
 & +(\exp(\mathrm{const\cdot}t_{1}\sup_{u\in[t,t+t_{1}]}(e^{-M(u)}+e^{L(u)}))-1)\to_{t_{1}\to0}0.\end{split}
\]
 This implies our statement. \qed
\begin{lem}
\label{lemma5} Distribution $p(t)\in B_{\alpha}^{+}$ as a function
of time is real analytic on $\mathbb{R}_{+}.$ The solution of (\ref{mainp})
is unique in the class of real analytic functions on $\mathbb{R}_{+}$.

Moreover, the solution of (\ref{mainp}) is unique in the class of
continuous functions $p(t)$ in $B_{\alpha}^{+}$ .\end{lem}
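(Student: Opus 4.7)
The plan is to realize the unique candidate solution as $p(t) = p(0)\,P(0,t)$, where $P(s,t)$ is the transition semigroup constructed via (\ref{trivial_series}) in Lemma \ref{lemma3}, and to derive both analyticity and uniqueness from this representation.

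First I would verify that $q(t) := p(0)\,P(0,t)$ actually solves (\ref{mainp}) in $B_\alpha^+$. Termwise differentiation of the series (\ref{trivial_series}) in its upper endpoint is legitimate thanks to the operator-norm bound of Lemma \ref{lemma3} and the continuity of $V(t)$ established in Lemma \ref{lemma2}. For real analyticity, I would note that in each summand of (\ref{trivial_series}) the integrand is a product of exponentials of the coefficient functions $L(\cdot), M(\cdot)$; on any interval where $L, M$ are real analytic (which is automatic once the coupled system is closed by (\ref{mainL})-(\ref{mainM}) and a bootstrap is performed, since the right-hand sides of (\ref{mainL})-(\ref{mainM}) are scalar products against an analytic $p(t)$), each iterated simplex-integral extends to an analytic function of the upper time endpoint in a complex neighbourhood, and the uniform estimate from Lemma \ref{lemma3} lets analyticity pass to the sum.

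For uniqueness in the class of continuous $B_\alpha^+$-valued solutions, I would exploit the duality between $B_\alpha^+$ and $B_\alpha^-$. Let $\tilde p(t)$ be another continuous solution with $\tilde p(0) = p(0)$. Fix $t_0 > 0$ and $f \in B_\alpha^-$, and set
\[
\phi(s) := \langle \tilde p(s),\, P(s,t_0)\,f \rangle, \qquad s \in [0,t_0].
\]
The forward Kolmogorov equation satisfied by $\tilde p$ and the backward Kolmogorov equation for $s \mapsto P(s,t_0)f$ in $B_\alpha^-$ combine, via $\langle \nu H(s), f\rangle = \langle \nu, H(s) f\rangle$, to give $\phi'(s)\equiv 0$. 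Hence $\phi(0) = \phi(t_0)$, i.e. $\langle p(0), P(0,t_0) f\rangle = \langle \tilde p(t_0), f\rangle$; since $f$ is arbitrary, $\tilde p(t_0) = p(0)\,P(0,t_0) = p(t_0)$. Uniqueness among real-analytic solutions follows as a direct corollary.

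The main obstacle is justifying the backward equation for $s \mapsto P(s,t_0)f$ as a differentiable $B_\alpha^-$-valued function, together with the product-rule computation of $\phi'(s)$: this requires the left-action analogue of the operator bounds in Lemma \ref{lemma3}, the continuity from Corollary \ref{consequence3}, and careful term-by-term differentiation of (\ref{trivial_series}) acting from the left, all of which are available thanks to the symmetric treatment of $\mathcal{L}(B_\alpha^+)$ and $\mathcal{L}(B_\alpha^-)$ throughout the subsection. The analyticity claim additionally relies on the eventual analyticity of $L, M$, which is obtained by bootstrapping once the full coupled system is closed.
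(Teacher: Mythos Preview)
Your uniqueness argument is correct but takes a genuinely different route from the paper. The paper does not use the dual pairing with the backward evolution; instead it derives an integral (Duhamel) identity
\[
p(t)-p(0)e^{\int_0^t H_0(s)\,ds}=\int_0^t p(z)V(z)e^{\int_z^t H_0(s)\,ds}\,dz
\]
directly from $\frac{d}{dz}\bigl(p(z)e^{\int_z^t H_0}\bigr)=p(z)V(z)e^{\int_z^t H_0}$, and then iterates it. After $n$ iterations the remainder is
\[
R_{n+1}=\int_{\triangle_{n+1}(0,t)} p(s_{n+1})V(s_{n+1})\cdots V(s_1)e^{\int_{s_1}^t H_0}\,ds_{n+1}\cdots ds_1,
\]
which tends to $0$ in $B_\alpha^+$ by the simplex-volume bound, forcing any continuous solution to coincide with the Dyson series $p(0)P(0,t)$. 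The analytic-class uniqueness is handled separately, by observing that if two analytic solutions agree at $t=0$ then all derivatives of their difference at $0$ vanish (an easy induction on the order using (\ref{mainp})), hence the difference is identically zero. Your route---constancy of $\phi(s)=\langle \tilde p(s),P(s,t_0)f\rangle$ via the backward equation---is the standard adjoint method and is perfectly valid, but it buys conceptual cleanliness at the price of having to justify differentiability of $s\mapsto P(s,t_0)f$ in $B_\alpha^-$ and the product rule when $H(s)$ is unbounded on $B_\alpha^+$; the paper's iteration only ever multiplies by the \emph{bounded} off-diagonal part $V(s)$ and the contractive diagonal semigroup $e^{\int H_0}$, so no unbounded-operator issues arise.

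On analyticity you are actually more honest than the paper. The paper simply records $\|R_n(t)\|_\alpha^+=O(t^n)$ and asserts infinite differentiability of each $p_k(t)$, but for merely continuous $L,M$ the latter need not hold; the statement really only becomes true once $L,M$ are themselves analytic, exactly the bootstrap you mention for the coupled system. Your remark that the analyticity claim relies on (and in turn feeds) the analyticity of $L,M$ through (\ref{mainL})--(\ref{mainM}) is the correct way to read the lemma in context.
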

\begin{proof}
It is easy to check that for the remainder $R_{n}(t)$ of the series
($\ref{trivial_{s}eries}$) we have $\|R_{n}(t)\|_{\alpha}^{+}=O(t^{n}).$
Moreover, it is easy to check that $p_{k}(t)$ is infinitely differentiable.
These two statements imply the result.

Let $p^{1}(t)$ and $p^{2}(t)$ are two different analytic solutions
of (\ref{mainp}), then their difference $p(t)=p^{1}(t)-p^{2}(t)$
is a solution of (\ref{mainp}) with trivial initial data \[
p(0)\equiv0.\]
 Then it is easy to check using induction in $l$, that for any $k\in\mathbb{Z},l\in\mathbb{Z}_{+}$
\[
p_{k}(0)^{(l)}=0.\]
 Then due to the condition of the lemma $p_{k}(t)=0$ for any $k\in\mathbb{Z}.$

The proof of the last assertion is similar to the calculations made
in Lemma $\ref{lem:usefulFormula}$. Actually since

\[
\frac{d}{dz}\left(p(z)e^{\int_{z}^{t}H_{0}(s)ds}\right)=p(z)V(z)e^{\int_{z}^{t}H_{0}(s)ds},\]

integrating from 0 to $t$ we get \[
p(t)-p(0)e^{\int_{0}^{t}H_{0}(s)ds}=\int_{0}^{t}p(z)V(z)e^{\int_{z}^{t}H_{0}(s)ds}dz.\]

Now iterating this formula we conclude:

\[
p(t)=p(0)e^{\int_{0}^{t}H_{0}(s)ds}+\sum_{k=1}^{n}\int_{\triangle_{k}(0,t)}p(0)e^{\int_{0}^{s_{k}}H_{0}(s)ds}V(s_{k})\ldots V(s_{1})e^{\int_{s_{1}}^{t}H_{0}(s)ds}ds_{k}\ldots ds_{1}+R_{n+1},\]

where

\[
R_{n+1}=\int_{\triangle_{n+1}(0,t)}p(s_{n+1})V(s_{n+1})\ldots V(s_{1})e^{\int_{s_{1}}^{t}H_{0}(s)ds}ds_{n+1}\ldots ds_{1}\]

converges to 0.
\end{proof}
Thus, we constructed family $P(\cdot,\cdot),$ which consists of stochastic
matrices and satisfies (\ref{mainp}).

\paragraph{Local existence and uniqueness of the nonlinear system.}

In this section we consider the original problem with $Z=(L,M)$ which
satisfies the system of differential equations ($\ref{mainp}$)-($\ref{mainM}$).
From the formal point of view, to prove that the process is defined
on the segment $[0,T]$ ($T\geq0$ is arbitrary), it is necessary
to solve the infinite system of differential equations for the pair
$(p,Z).$ However, as it was shown above for any $Z\in C[0,T]\times C[0,T],$
there exists the unique Markov process $X_{Z},$ having transition
probabilities $P_{Z}(\cdot,\cdot),$ the infinitesimal matrix $H_{Z}(t)$
and the distribution $p_{Z}(t)$ at the moment $t$. By substitution
one can get a closed system of differential equations for $Z$: \begin{equation}
\left\{ \begin{array}{ll}
L'(t)= & -p(0)P_{Z}(0,t)\lambda_{Z}(t)+C_{\lambda},\\
M'(t)= & +p(0)P_{Z}(0,t)\mu_{Z}(t)-C_{\mu}\end{array}\right.\label{trivial6'}\end{equation}
 with the initial data \[
\left\{ \begin{array}{ll}
L(0)=L_{0},\\
M(0)=M_{0},\end{array}\right.\]
 where $\lambda_{Z}(t)$ and $\mu_{Z}(t)$ are transition rates for
$X_{Z}(t).$ Thus one takes out $p(t)$ from consideration.

Introduce the necessary notation. Fix some $R>\max(|L_{0}|,|M_{0}|).$
Let $B(T,R)=\{f\in C[0,T]:\max_{t\in[0,T]}|f(t)|\leq R\}$ be the
closed ball in the space $C[0,T]$ of continuous functions on $[0,T]$,
equipped with the uniform metrics $\rho_{B(T,R)}$. We consider the
space $B(T,R)^{2}=B(T,R)\times B(T,R)$ with the metrics \[
\rho(Z_{1},Z_{2})=\rho((L_{1},M_{1}),(L_{2},M_{2}))=\rho_{B(T,R)}(L_{1},L_{2})+\rho_{B(T,R)}(M_{1},M_{2}).\]
 It will be convenient to take the parameters $L_{0},M_{0}$ equal
to zero. This can be done by shifting the coordinates.

In the estimates below we will use some unknown functions of initial
data $L_{0},M_{0},R,T$ etc. By $\mathrm{c}(\ldots)$ we denote any
nonnegative function, nondecreasing in each of its arguments.
\begin{lem}
\label{lemma6} Let $p(0)\in B_{\alpha}^{+},$ $Z_{1},Z_{2}\in B(T,R)^{2},$
then for any $t\in[0,T]$ \[
\|p(0)P_{Z_{1}}(0,t)-p(0)P_{Z_{2}}(0,t)\|_{\alpha-1}^{+}\leq\mathrm{c}(R,T,\|p(0)\|_{\alpha}^{+},|L_{0}|,|M_{0}|)\rho(Z_{1},Z_{2}).\]
 (The left side of the inequality is defined, since $B_{\alpha}^{+}\subset B_{\alpha-1}^{+}.$)
\end{lem}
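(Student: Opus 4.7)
The natural strategy is to represent both $p(0)P_{Z_1}(0,t)$ and $p(0)P_{Z_2}(0,t)$ by the Dyson-type series (\ref{trivial_series}), subtract them term by term, and telescope each difference. In the $k$-th summand the operator product
\[
e^{\int_0^{s_k} H_0^{Z_1}}V^{Z_1}(s_k)\cdots V^{Z_1}(s_1) e^{\int_{s_1}^{t} H_0^{Z_1}}
\]
differs from its $Z_2$-analogue by a sum of $2k+1$ terms in which exactly one factor is replaced by a difference of the form $e^{\int H_0^{Z_1}}-e^{\int H_0^{Z_2}}$ or $V^{Z_1}(s_i)-V^{Z_2}(s_i)$, with all other factors of pure $Z_1$- or $Z_2$-type.

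The key auxiliary estimates concern these individual differences. Entries of $V^{Z_1}(s)-V^{Z_2}(s)$ have the form $\beta(n-1)e^{-n+1}(e^{L_1(s)}-e^{L_2(s)})$ and $\beta(n)e^{n+1}(e^{-M_1(s)}-e^{-M_2(s)})$. The mean value inequality combined with $|L_i(s)|,|M_i(s)|\leq R$ gives $|e^{L_1(s)}-e^{L_2(s)}|\leq e^{R}\rho(Z_1,Z_2)$ and analogously for $M$. Rerunning the calculation of Lemma \ref{lemma2} while measuring the output in the weaker norm $\|\cdot\|_{\alpha-1}^+$, I expect the bound
\[
\|q(V^{Z_1}(s)-V^{Z_2}(s))\|_{\alpha-1}^+ \leq \mathrm{const}\cdot e^{R}\rho(Z_1,Z_2)\|q\|_{\alpha}^+
\]
for any $q\in B_{\alpha}^+$; the reason one must drop from $\alpha$ to $\alpha-1$ is precisely that the extra factor $e^{-|n|}$ in the output weight absorbs the shift $|n\pm 1|\mapsto|n|$ that arises after re-indexing the sum. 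The propagator-difference estimate follows from the identity
\[
e^{\int_a^b H_0^{Z_1}} - e^{\int_a^b H_0^{Z_2}} = \int_a^b e^{\int_a^u H_0^{Z_1}}\bigl(H_0^{Z_1}(u)-H_0^{Z_2}(u)\bigr) e^{\int_u^b H_0^{Z_2}}\,du,
\]
together with the fact that $H_0^{Z_1}-H_0^{Z_2}$ is diagonal with entries bounded by $\mathrm{const}\cdot e^{R}(e^{-n}+e^{n})\rho(Z_1,Z_2)$, a growth that is again absorbed by the weakening $\alpha\to\alpha-1$.

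Plugging these Lipschitz bounds into each telescoping term, the remaining $2k$ factors are ordinary $H_0^{Z_i}$-propagators, of norm at most $1$ on both $B_\alpha^+$ and $B_{\alpha-1}^+$, and ordinary $V^{Z_i}(s_j)$-operators, bounded by $\mathrm{const}\cdot e^{R}$ uniformly in $s_j\in[0,T]$ on both spaces by Lemma \ref{lemma2}. Hence each of the $2k+1$ telescoping summands is controlled in $\|\cdot\|_{\alpha-1}^+$ by $\mathrm{const}\cdot e^{R}\rho(Z_1,Z_2)(\mathrm{const}\cdot e^{R})^{2k}\|p(0)\|_\alpha^+$. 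Summing the $2k+1$ choices, integrating over $\Delta_k(0,T)$ (volume $T^k/k!$), and summing in $k$ produces a bound of the required shape $\mathrm{c}(R,T,\|p(0)\|_\alpha^+,|L_0|,|M_0|)\,\rho(Z_1,Z_2)$.

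The main technical obstacle is the bookkeeping of the intermediate norms under the telescoping: one must read the $Z_1$-factors to the left of the difference as operators $B_\alpha^+\to B_\alpha^+$ (so that the difference factor can then map into $B_{\alpha-1}^+$) and the $Z_2$-factors to the right as operators $B_{\alpha-1}^+\to B_{\alpha-1}^+$. Lemma \ref{lemma2} provides the required uniform boundedness on both spaces, and the continuous inclusion $B_\alpha^+\hookrightarrow B_{\alpha-1}^+$ with operator norm $\leq 1$ glues everything together; once this is pinned down, the rest of the argument is a routine Dyson-series manipulation.
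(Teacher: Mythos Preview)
Your telescoping of the Dyson series is correct and will yield the stated bound, but it is considerably more laborious than what the paper does. The paper bypasses the term-by-term telescoping by using a single Duhamel-type identity: differentiating $\Gamma(t)=P_{Z_1}(0,t)P_{Z_2}(t,T)$ in $t$ and integrating gives
\[
P_{Z_1}(0,T)-P_{Z_2}(0,T)=\int_0^T P_{Z_1}(0,t)\bigl(H_{Z_1}(t)-H_{Z_2}(t)\bigr)P_{Z_2}(t,T)\,dt,
\]
so that only one difference factor ever appears, namely the full generator difference $H_{Z_1}-H_{Z_2}$, which the paper bounds directly as an operator $B_\alpha^+\to B_{\alpha-1}^+$ by $\mathrm{c}(R,|L_0|,|M_0|)\rho(Z_1,Z_2)$. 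The chain $B_\alpha^+\xrightarrow{P_{Z_1}}B_\alpha^+\xrightarrow{H_{Z_1}-H_{Z_2}}B_{\alpha-1}^+\xrightarrow{P_{Z_2}}B_{\alpha-1}^+$ then finishes the job in one line, using the already-proved norm bounds on $P_{Z_i}$ from Lemma~\ref{lemma3} as black boxes. Your approach reproves those propagator bounds implicitly inside the series and splits the generator difference into its diagonal and off-diagonal parts with separate telescoping contributions; this is fine, but the paper's route avoids the $(2k+1)$-fold bookkeeping entirely and makes the $\alpha\to\alpha-1$ mechanism appear exactly once. The core analytic input---that the difference of generators loses one unit of exponential weight and is Lipschitz in $\rho(Z_1,Z_2)$---is the same in both arguments.
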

\textit{Proof.} First we prove that $H_{Z_{1}}(t)-H_{Z_{2}}(t)$ is
a family of bounded, continuous in $t$ operators, acting from $B_{\alpha}^{+}$
into $B_{\alpha-1}^{+}$. Let us estimate their norm. For arbitrary
$\nu\in B_{\alpha}^{+}$ we obtain: \[
\begin{split} & \|\nu H_{Z_{1}}(t)-\nu H_{Z_{2}}(t)\|_{\alpha-1}^{+}\leq\|\nu V_{Z_{1}}(t)-\nu V_{Z_{2}}(t)\|_{\alpha-1}^{+}+\\
 & +\|\nu H_{0Z_{1}}(t)-\nu H_{0Z_{2}}(t)\|_{\alpha-1}^{+}:=I_{1}(t)+I_{2}(t).\end{split}
\]
 Let us estimate each term separately.

\textbf{a.} We have \[
\begin{split} & I_{1}(t)=\|\nu V_{Z_{1}}(t)-\nu V_{Z_{2}}(t)\|_{\alpha-1}^{+}\leq\\
 & \leq\sum_{k\in\mathbb{Z}}|\beta(k-1)e^{-k+1+L_{1}(t)+L_{0}}\nu_{k-1}+\beta(k)e^{k+1-M_{1}(t)-M_{0}}\nu_{k+1}-\\
 & -\beta(k-1)e^{-k+1+L_{2}(t)+L_{0}}\nu_{k-1}-\beta(k)e^{k+1-M_{2}(t)-M_{0}}\nu_{k+1}|e^{\frac{k^{2}}{2}+(\alpha-1)|k|}\leq\\
 & \mathrm{\leq const}\sum_{k\in\mathbb{Z}}|e^{L_{1}(t)+L_{0}}-e^{L_{2}(t)+L_{0}}|\cdot|e^{-k+1}\nu_{k-1}e^{\frac{k^{2}}{2}+(\alpha-1)|k|}|+\\
 & +\mathrm{const}\sum_{k\in\mathbb{Z}}|e^{-M_{1}(t)-M_{0}}-e^{-M_{2}(t)-M_{0}}|\cdot|e^{k+1}\nu_{k+1}e^{\frac{k^{2}}{2}+(\alpha-1)|k|}|\leq\\
 & \leq\mathrm{const}\|\nu\|_{\alpha-1}^{+}|e^{L_{1}(t)+L_{0}}-e^{L_{2}(t)+L_{0}}|+\mathrm{const}\|\nu\|_{\alpha-1}^{+}|e^{-M_{1}(t)-M_{0}}-e^{-M_{2}(t)-M_{0}}|\leq\\
 & \leq\mathrm{c}(R,|L_{0}|,|M_{0}|)\rho(Z_{1},Z_{2})\|\nu\|_{\alpha-1}^{+}\leq\mathrm{c}(R,|L_{0}|,|M_{0}|)\rho(Z_{1},Z_{2})\|\nu\|_{\alpha}^{+}.\end{split}
\]

\textbf{b.} Note, that $H_{0Z_{1,2}}(t)$ are diagonal matrices, therefore
\[
\begin{split} & I_{2}(t)\leq\sum_{k\in\mathbb{Z}}|\beta(k)e^{-k+L_{1}+L_{0}}+\beta(k-1)e^{k-M_{1}-M_{0}}-\beta(k)e^{-k+L_{2}+L_{0}}-\beta(k-1)e^{k-M_{2}-M_{0}}|\cdot\\
 & \cdot|\nu_{k}|\cdot e^{\frac{k^{2}}{2}+(\alpha-1)|k|}\leq\mathrm{const}\sum_{k\in\mathbb{Z}}|e^{L_{1}+L_{0}}-e^{L_{2}+L_{0}}|\cdot|\nu_{k}|\cdot e^{-k+\frac{k^{2}}{2}+(\alpha-1)|k|}+\\
 & +\mathrm{const}\sum_{k\in\mathbb{Z}}|e^{-M_{1}-M_{0}}-e^{-M_{2}-M_{0}}|\cdot|\nu_{k}|\cdot e^{k+\frac{k^{2}}{2}+(\alpha-1)|k|}\leq\\
 & \leq\mathrm{c}(R,|L_{0}|,|M_{0}|)\rho(Z_{1},Z_{2})\|\nu\|_{\alpha}^{+}.\end{split}
\]
 Using \textbf{a.} and \textbf{b.}, we have: \begin{equation}
\|H_{Z_{1}}(t)-H_{Z_{2}}(t)\|\leq\mathrm{c}(R,|L_{0}|,|M_{0}|)\rho(Z_{1},Z_{2}),\label{eq:estimateForHDifference}\end{equation}
where $\|\cdot\|$ is a standart supremum norm. Continuity in $t$
of $H_{Z_{1}}(t)-H_{Z_{2}}(t)$ can be checked in the same way.

Now we are going to prove a useful inequality which we will employ
later. Consider $\Gamma(t)=P_{Z_{1}}(0,t)P_{Z_{2}}(t,T),$ $t\in[0,T]$
and differentiate \[
\frac{d}{dt}\Gamma(t)=P_{Z_{1}}(0,t)(H_{Z_{1}}(t)-H_{Z_{2}}(t))P_{Z_{2}}(t,T).\]
 Due to the results of section 1 the following sequence of transformations
holds \[
B_{\alpha}^{+}\longrightarrow_{P_{Z_{1}}(0,t)}B_{\alpha}^{+}\longrightarrow_{H_{Z_{1}}(t)-H_{Z_{2}}(t)}B_{\alpha-1}^{+}\longrightarrow_{P_{Z_{2}}(t,T)}B_{\alpha-1}^{+}.\]
 Therefore $\frac{d}{dt}\Gamma(t)$ is a family of bounded and continuous
in $t$ operators acting from $B_{\alpha}^{+}$ into $B_{\alpha-1}^{+}.$
Integrating from 0 to $T,$ we obtain: \[
P_{Z_{1}}(0,T)-P_{Z_{2}}(t,T)=\int_{0}^{T}P_{Z_{1}}(0,t)(H_{Z_{1}}(t)-H_{Z_{2}}(t))P_{Z_{2}}(t,T)dt.\]
 Then

\[
\begin{split} & I:=\|p(0)P_{Z_{1}}(0,t)-p(0)P_{Z_{2}}(0,t)\|_{\alpha-1}^{+}\leq\\
 & \leq\|p(0)\|_{\alpha}^{+}\int_{0}^{T}\|P_{Z_{1}}(0,t)\|_{\alpha}^{+}\cdot\|H_{Z_{1}}(t)-H_{Z_{2}}(t)\|\cdot\|P_{Z_{2}}(t,T)\|_{\alpha-1}^{+}dt.\end{split}
\]

Using ($\ref{eq:estimateForHDifference}$) and the estimate of Lemma
\ref{lemma3}, we get \[
I\leq\mathrm{c}(R,T,\|p(0)\|_{\alpha}^{+},|L_{0}|,|M_{0}|)\rho(Z_{1},Z_{2}).\]
 \qed
\begin{lem}
\label{lemma7} For any initial data $L_{0},M_{0}\in\mathbb{R},$
$p(0)\in B_{\alpha}^{+}$ the system (\ref{trivial6'}) has a unique
solution for $t$ sufficiently small.
\end{lem}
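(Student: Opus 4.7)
The plan is to recast (\ref{trivial6'}) as a fixed-point problem on the complete metric space $B(T,R)^2$ and invoke Banach's contraction principle for $T$ small. Define
\[
\Phi(Z)(t) = \Bigl(L_0 - \int_0^t \langle p(0)P_Z(0,u), \lambda_Z(u)\rangle\, du + C_\lambda t,\ M_0 + \int_0^t \langle p(0)P_Z(0,u), \mu_Z(u)\rangle\, du - C_\mu t\Bigr),
\]
interpreting each integrand via the duality pairing between $B_\alpha^+$ and $B_\alpha^-$. The sequences $\lambda_Z(u)_n = \beta(n)e^{-n+L(u)}$ and $\mu_Z(u)_n = \beta(n-1)e^{n-M(u)}$ lie in $B_{\alpha-1}^- \subset B_\alpha^-$ by (\ref{con1}) combined with the Gaussian weight, so the integrands are well-defined real numbers for every $Z \in B(T,R)^2$.

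To see that $\Phi$ sends $B(T,R)^2$ into itself, bound the integrands uniformly using Lemma \ref{lemma3} (for $\|p(0)P_Z(0,u)\|_\alpha^+$) together with a direct estimate giving $\|\lambda_Z(u)\|_\alpha^-,\|\mu_Z(u)\|_\alpha^- \leq c(R)$. Since $R > \max(|L_0|,|M_0|)$ strictly, one may take $T$ small enough that the absolute increments remain below $R - \max(|L_0|,|M_0|)$. For the contraction property, the key algebraic step is the decomposition
\begin{align*}
& \langle p(0)P_{Z_1}(0,u), \lambda_{Z_1}(u)\rangle - \langle p(0)P_{Z_2}(0,u), \lambda_{Z_2}(u)\rangle \\
&= \langle p(0)P_{Z_1}(0,u) - p(0)P_{Z_2}(0,u),\, \lambda_{Z_1}(u)\rangle + \langle p(0)P_{Z_2}(0,u),\, \lambda_{Z_1}(u) - \lambda_{Z_2}(u)\rangle.
\end{align*}
The first pairing is bounded by $\|p(0)P_{Z_1}(0,u) - p(0)P_{Z_2}(0,u)\|_{\alpha-1}^+ \cdot \|\lambda_{Z_1}(u)\|_{\alpha-1}^-$; Lemma \ref{lemma6} controls the first factor by $\mathrm{c}(R,T,\|p(0)\|_\alpha^+,|L_0|,|M_0|)\,\rho(Z_1,Z_2)$ and the second factor is bounded by $\mathrm{c}(R,|L_0|)$. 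The second pairing uses Lemma \ref{lemma3} for $\|p(0)P_{Z_2}(0,u)\|_\alpha^+$ together with the elementary Lipschitz bound $\|\lambda_{Z_1}(u) - \lambda_{Z_2}(u)\|_\alpha^- \leq \mathrm{c}(R,|L_0|)\,|e^{L_1(u)} - e^{L_2(u)}| \leq \mathrm{c}'(R,|L_0|)\,\rho(Z_1,Z_2)$. The $\mu$-integrand is handled identically. Integrating in $u$ over $[0,t]$ gives $\rho(\Phi(Z_1),\Phi(Z_2)) \leq C(R,T,\|p(0)\|_\alpha^+,|L_0|,|M_0|)\,T\,\rho(Z_1,Z_2)$, so shrinking $T$ further produces a strict contraction whose unique fixed point is the desired local solution.

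The main obstacle is the joint $Z$-dependence inside the integrand (through the kernel $P_Z(0,u)$ and the rates $\lambda_Z(u),\mu_Z(u)$ simultaneously), together with the fact that Lemma \ref{lemma6} only controls the kernel difference in the strictly larger space $B_{\alpha-1}^+$ rather than $B_\alpha^+$. This is precisely why the scale $\{B_\alpha^\pm\}_{\alpha\in\mathbb{R}}$ was introduced: the one-step drop of weight from $\alpha$ to $\alpha-1$ costs nothing in the first term of the decomposition because $\lambda_{Z_1}(u)$ still lies in $B_{\alpha-1}^-$, so the duality pairing closes without leaving the scale.
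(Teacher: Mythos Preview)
Your proposal is correct and follows essentially the same route as the paper: recast the system as an integral equation, apply the Banach contraction principle on $B(T,R)^2$, and split the difference of integrands into a ``kernel-difference'' term (controlled via Lemma~\ref{lemma6} in the $\|\cdot\|_{\alpha-1}^+$ norm, paired against $\lambda_{Z_1}\in B_{\alpha-1}^-$) and a ``rate-difference'' term (controlled by the elementary Lipschitz bound on $e^{L}$). The paper additionally spells out the continuity in $u$ of the integrand before integrating, which you leave implicit, but this is a minor point.
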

\textit{Proof.} First we verify that for any $Z\in B(T,R)^{2}$ the
function \[
\langle p(0)P_{Z}(0,t),\lambda_{Z}(t)\rangle\colon\mathbb{R}\to\mathbb{R}\]
 is a function continuous on the segment $[0,T]$. From Lemma \ref{consequence3}
it follows that $p(0)P_{Z}(0,\cdot)\colon\mathbb{R}\to B_{\alpha}^{+}$
is a continuous function. Moreover for any $t\in[0,T]$ and $k\in\mathbb{Z}$
\[
|(p(0)P_{Z}(0,t))_{k}|\leq\|p(0)\|_{\alpha}^{+}\|P_{Z}(0,t)\|_{\alpha}^{+}e^{-\frac{k^{2}}{2}-\alpha|k|},\]
 and \[
|(p(0)P_{Z}(0,t))_{k}\lambda_{kZ}(t)|\leq\beta(k)e^{L(t)+L_{0}}\|p(0)\|_{\alpha}^{+}\|P_{Z}(0,t)\|_{\alpha}^{+}e^{-\frac{k^{2}}{2}-(\alpha-1)|k|}.\]
 Therefore the series \[
\langle p(0)P_{Z}(0,t),\lambda_{Z}(t)\rangle=\sum_{k\in\mathbb{Z}}(p(0)P_{Z}(0,t))_{k}\lambda_{Z}(t)_{k}\]
 is majorized by a uniformly convergent series on $[0,T]$. That implies
our statement. It is proved analogously that for any $Z\in B(T,R)^{2}$
$\langle p(0)P_{Z}(0,\cdot),\mu_{Z}(\cdot)\rangle$ is a function
continuous on $[0,T]$.

Now we can rewrite (\ref{trivial6'}) in the integral form: \[
\left\{ \begin{array}{ll}
L(t)=-\int_{0}^{t}[\langle p(0)P_{Z}(0,s),\lambda_{Z}(s)\rangle-C_{\lambda}]ds,\\
M(t)=\int_{0}^{t}[\langle p(0)P_{Z}(0,s),\mu_{Z}(s)\rangle-C_{\mu}]ds.\end{array}\right.\]
 It is sufficient to check that for sufficiently small $T$ the following
map is contracting in $B(T,R)^{2}:$ \[
F:\begin{pmatrix}{}L(t)\\
M(t)\end{pmatrix}\to\begin{pmatrix}{}-\int_{0}^{t}[\langle p(0)P_{Z}(0,s),\lambda_{Z}(s)\rangle-C_{\lambda}]ds\\
\int_{0}^{t}[\langle p(0)P_{Z}(0,s),\mu_{Z}(s)\rangle-C_{\mu}]ds\end{pmatrix}.\]
 We have to check first that we can find such $T$ that the above
map maps $B(T,R)^{2}$ into itself (we assume that $R$ is fixed ).

Denote $F_{L},F_{M}$ the projections of $F$ onto the first and second
coordinates and estimate\[
|F_{L}(Z,t)|\leq\int_{0}^{t}|\langle p(0)P_{Z}(0,s),\lambda_{Z}(s)\rangle|ds+tC_{\lambda}\leq\int_{0}^{T}\|p(0)P_{Z}(0,s)\|_{\alpha}^{+}\|\lambda_{Z}(s)\|_{\alpha}^{-}ds+\]
\[
+TC_{\lambda}\leq\mathrm{c}(\|p(0)\|_{\alpha}^{+},|L_{0}|,R)\int_{0}^{T}\|P_{Z}(0,s)\|_{\alpha}^{+}ds+TC_{\lambda}\leq\]
\[
\leq T\mathrm{c}(\|p(0)\|_{\alpha}^{+},|L_{0}|,R,T)+TC_{\lambda}.\]
 From this we see that we can find $T$ such that for any $t\in[0,T]$
$|F_{L}(Z,t)|\leq R.$ Analogous estimate can be obtained for $F_{M}(Z,t),$
therefore for sufficiently small $T$ we obtain, that $F_{L}(Z,t)$
and $F_{M}(Z,t)$ can not leave $[-R,R],$ but this means exactly
that $F$ maps $B(T,R)^{2}$ into itself.

Consider the difference \[
\begin{split}F_{L}(Z_{1},t)-F_{L}(Z_{2},t)=\int_{0}^{t}[\langle p(0)P_{Z_{2}}(0,s),\lambda_{Z_{2}}(s)\rangle-\langle p(0)P_{Z_{1}}(0,s),\lambda_{Z_{1}}(s)\rangle]ds.\end{split}
\]
 Note that $|F_{L}(Z_{1},t)-F_{L}(Z_{2},t)|\leq\int_{0}^{t}(I(s)+J(s))ds,$
where \[
I(t)=|\langle p(0)P_{Z_{2}}(0,t),\lambda_{Z_{2}}(t)-\lambda_{Z_{1}}(t)\rangle|,\]
 \[
J(t)=|\langle p(0)P_{Z_{2}}(0,t)-p(0)P_{Z_{1}}(0,t),\lambda_{Z_{1}}(t)\rangle|.\]
 First we estimate $I(t)$. Note that \[
\lambda_{Z_{2}}(t)-\lambda_{Z_{1}}(t)=(e^{L_{2}(t)}-e^{L_{1}(t)})\xi,\]
 where $\xi\in B_{\alpha}^{-}$ is the vector with the components
$\xi_{n}=\beta(n)e^{-n+L_{0}}.$ Then \[
\begin{split} & I(t)\leq e^{L_{2}(t)-L_{1}(t)}\|p(0)P_{Z_{2}}(0,t)\|_{\alpha}^{+}\|\xi\|_{\alpha}^{-}\leq\mathrm{c}(T,R,|L_{0}|,|M_{0}|,\|p(0)\|_{\alpha}^{+})\rho(Z_{1},Z_{2}).\end{split}
\]
 Now let us estimate $J(t)$. From Lemma \ref{lemma6} we conclude
that \[
\begin{split} & J(t)\leq\|p(0)P_{Z_{2}}(0,t)-p(0)P_{Z_{1}}(0,t)\|_{\alpha-1}^{+}\|\lambda_{Z_{1}}(t)\|_{\alpha-1}^{-}\leq\\
 & \leq\mathrm{c}(|L_{0}|,R)\|p(0)P_{Z_{2}}(0,t)-p(0)P_{Z_{1}}(0,t)\|_{\alpha-1}^{+}\leq\\
 & \leq\mathrm{c}(R,T,\|p(0)\|_{\alpha}^{+},|L_{0}|,|M_{0}|)\rho(Z_{1},Z_{2}).\end{split}
\]
 Thus, we have \[
\begin{split}|F_{L}(Z_{1},t)-F_{L}(Z_{2},t)|\leq\mathrm{c}(R,T,\|p(0)\|_{\alpha}^{+},|L_{0}|,|M_{0}|)t\rho(Z_{1},Z_{2}).\end{split}
\]
 Analogous expression we get for $|F_{M}(Z_{1},t)-F_{M}(Z_{2},t)|.$
Then for sufficiently small $T$ (remind that $\mathrm{c}(\ldots)$
is nondecreasing function in each argument) we obtain that the map
is contracting. \qed

Using Lemma \ref{lemma7} and the contraction, we obtain existence
and uniqueness of the solution of the system (\ref{trivial6'}) for
small $t$. It remains to prove that this solution can be extended
to the entire axes.

\paragraph{Global existence.}

In the previous section we have proved the local existence and
uniqueness of our process. In this section we will prove that the
process can be extended to all $\mathbb{R}_{+}.$ For this purpose
it is sufficient to prove that $L(t)$ and $M(t)$ can not run off
to the infinity. In other words there is no explosion in our
model.
\begin{lem}
There exist positive nondecreasing functions $f_{1}$ and $f_{2}$
defined on $\mathbb{R}_{+},$ such that \[
-f_{1}(t)\leq L(t)\leq L_{0}+C_{\lambda}\cdot t;\]
\[
M_{0}-C_{\mu}\cdot t\leq M(t)\leq f_{2}(t).\]
\end{lem}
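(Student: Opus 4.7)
The upper bound $L(t)\le L_0+C_\lambda t$ and the lower bound $M(t)\ge M_0-C_\mu t$ follow immediately from equations~(\ref{mainL})--(\ref{mainM}): since $p_n(t),\lambda_n(t),\mu_n(t)\ge 0$, we have $L'(t)\le C_\lambda$ and $M'(t)\ge -C_\mu$, and we integrate. These are the ``easy half'' of the claim.

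For the remaining two inequalities, the plan is to bootstrap the easy bounds into a compact-time control on the weighted mass of $p(t)$ and then feed that back into the equations for $L$ and $M$. Since $L$ is already bounded above and $M$ bounded below on each $[0,t]$, the quantity $B(t):=\sup_{s\in[0,t]}(e^{L(s)}+e^{-M(s)})$ is finite and nondecreasing in $t$. Applying Lemma~\ref{lemma3} to $p(t)=p(0)P(0,t)$ yields
\[
\|p(t)\|_\alpha^+\le\|p(0)\|_\alpha^+\exp\!\bigl(\mathrm{const}\cdot t\,B(t)\bigr),
\]
which is a nondecreasing function of $t$. Combining this with $\sup_n\beta(n)<\infty$ and the elementary fact that the Gaussian weight $e^{n^2/2+\alpha|n|}$ dominates $e^{\pm n}$, one obtains a single nondecreasing function $g(t)$ such that
\[
\sum_n p_n(t)\beta(n)e^{-n}\le g(t),\qquad \sum_n p_n(t)\beta(n-1)e^{n}\le g(t).
\]

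Now return to the ODEs. Rewriting (\ref{mainL}) as $L'(t)=-e^{L(t)}\sum_n p_n(t)\beta(n)e^{-n}+C_\lambda$ yields $L'(t)\ge -e^{L(t)}g(t)$. The substitution $u(t)=e^{-L(t)}$ converts this to $u'(t)\le g(t)$, so $u(t)\le e^{-L_0}+\int_0^t g(s)\,ds$ and hence
\[
L(t)\ge -\log\!\Bigl(e^{-L_0}+\int_0^t g(s)\,ds\Bigr)=:-f_1(t),
\]
with $f_1$ positive and nondecreasing. The bound $M(t)\le f_2(t)$ is symmetric: from (\ref{mainM}) one obtains $M'(t)\le e^{-M(t)}g(t)$, and the substitution $v(t)=e^{M(t)}$ gives $v'(t)\le g(t)$, producing the required nondecreasing $f_2$.

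The only subtlety is the ordering of the argument. One cannot invoke Lemma~\ref{lemma3} without prior control of $L$ from above and $M$ from below, which is why the trivial linear bounds are established first and only then used to close the loop via the substitutions $u=e^{-L}$, $v=e^{M}$. Everything else reduces to solving scalar differential inequalities of the form $u'\le g$, so there is no real analytic obstacle beyond arranging the estimates in the correct order.
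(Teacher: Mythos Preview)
Your argument is correct and follows the same overall strategy as the paper: first obtain the trivial linear bounds $L(t)\le L_0+C_\lambda t$, $M(t)\ge M_0-C_\mu t$ from the sign of the drift terms, use these to control $\|P_Z(0,t)\|_\alpha^+$ via Lemma~\ref{lemma3}, and then feed the resulting bound on $\sum_n p_n\beta e^{\mp n}$ back into the ODEs for $L$ and $M$.

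The only difference is in the closing step. The paper keeps the $+C_\lambda$ term, writes $L'(t)\ge -Cf(t)e^{L(t)}+C_\lambda$, and uses a barrier argument: whenever $L(t)$ touches the level $\ln\bigl(C_\lambda/(Cf(t))\bigr)$ one has $L'(t)\ge 0$, so $L$ cannot cross below this (nonincreasing) curve. You instead drop $C_\lambda$, pass to $u=e^{-L}$, and integrate $u'\le g$ directly. Your route is slightly cleaner and gives an explicit formula for $f_1$; the paper's barrier uses the positivity of $C_\lambda$ and yields a marginally sharper bound. Either way the content is the same. (A cosmetic point: your $f_1(0)=-L_0$ need not be positive if $L_0>0$; the paper has the same issue, and one simply replaces $f_1$ by $\max(f_1,0)$ if strict positivity is wanted.)
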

\begin{proof}
Note that integrating ($\ref{trivial6'}$) we get the first assertion:
\[
L(t)\leq L_{0}+C_{\lambda}\cdot t;\quad M(t)\geq M_{0}-C_{\mu}\cdot t;\]

Therefore using ($\ref{eq:estimateForV}$)

\[
\|P_{Z}(0,t)\|_{\alpha}^{+}\leq\mathrm{const}\cdot\exp(te^{C_{\mu}\cdot t-M_{0}}+te^{C_{\lambda}\cdot t-L_{0}}):=f(t).\]

We see that $f(t)$ is a positive nondecreasing function on $\mathbb{R}_{+}$.
Applying ($\ref{trivial6'}$) we have\[
\begin{split} & L'(t)\geq-\|p(0)P_{Z}(0,t)\|_{\alpha}^{+}\|\xi\|_{\alpha}^{-}e^{L(t)}+C_{\lambda}\end{split}
\geq-Cf(t)e^{L(t)}+C_{\lambda},\]
 where $C>0$ is an arbitrary sufficiently large constant, and $\xi=\left\{ e^{-n}\right\} $.
This inequality implies that if $L(\cdot)\leq\ln\frac{C_{\lambda}}{Cf(\cdot)},$
thus \[
L'(\cdot)\geq0.\]

Taking $C$ large enough and such that $L(0)\geq\ln\frac{C_{\lambda}}{Cf(0)}$
we conclude \[
L(t)\geq\ln\frac{C_{\lambda}}{Cf(t)}:=f_{1}(t).\]
 Analogous estimates can be made for $M(t).$
\end{proof}
The Theorem 1 is proved.

\subsection{Stationary points and the conserved integral}

Here we will prove Theorem 2. We already saw that if at least one
fixed point exists then $C_{\lambda}=C_{\mu}$. Now we will find
the fixed points explicitly.

Let $(\pi,L,M)$ is a fixed point of $X$. Note that $L,M$ are real
numbers. The invariant measure $\pi$ can be uniquely identified with
our discrete gaussian measure, introduced above \[
\pi_{n}=\frac{1}{\Xi}e^{-(n-s)^{2}},\]
 where $\Xi=\sum_{n\in\mathbb{Z}}e^{-(n-s)^{2}}.$

Thus we are left with the following two equations \[
\left\{ \begin{array}{ll}
\sum\limits _{k\in\mathbb{Z}}\frac{1}{\Xi}e^{-(k-s)^{2}}\beta(k)e^{-k+L}-C_{\lambda}=0;\\
\sum\limits _{k\in\mathbb{Z}}\frac{1}{\Xi}e^{-(k-s)^{2}}\beta(k-1)e^{k-M}-C_{\mu}=0.\end{array}\right.\]
Let us rewrite the first equation in terms of $s$ and $d$ \[
C_{\lambda}=\sum\limits _{k\in\mathbb{Z}}\frac{1}{\Xi}e^{-(k-s)^{2}}\beta(k)e^{-k+L}=\sum\limits _{k\in\mathbb{Z}}\frac{e^{-(k-s)^{2}}}{\Xi}\beta(k)e^{-k+s+d}=e^{d}\frac{\sum_{k}\beta(k)e^{-(k-s)^{2}}e^{-k+s}}{\sum_{l}e^{-(l-s)^{2}}}.\]
 The expression for $d$ follows \[
d=\ln\left[C_{\lambda}\left(\frac{\sum_{l}e^{-(l-s)^{2}}}{\sum_{k}\beta(k)e^{-(k-s)^{2}}e^{-k+s}}\right)\right].\]
 Then \[
L=s+d=s+\ln\left[C_{\lambda}\left(\frac{\sum_{l}e^{-(l-s)^{2}}}{\sum_{k}\beta(k)e^{-(k-s)^{2}}e^{-k+s}}\right)\right]\]
and similarly

\[
M=s-\ln\left[C_{\lambda}\left(\frac{\sum_{l}e^{-(l-s)^{2}}}{\sum_{k}\beta(k)e^{-(k-s)^{2}}e^{-k+s}}\right)\right].\]

Now we will prove Theorem 3. Namely, we will show that the system
of equations (\ref{mainp},\ref{mainL},\ref{mainM}) has the following
integral of motion \begin{equation}
K(t)=2s(t)+\sum_{k\in\mathbb{Z}}kp_{k}(t)=L(t)+M(t)+\sum_{k\in\mathbb{Z}}kp_{k}(t).\label{invariant}\end{equation}

Summing up equations for $L$ and $M,$ we have \[
(L+M)'(t)=-\sum_{k\in\mathbb{Z}}p_{k}(t)\lambda_{k}(t)+\sum_{k\in\mathbb{Z}}p_{k}(t)\mu_{k}(t)=-(\sum_{k\in\mathbb{Z}}kp_{k}(t))'.\]
It remains to prove that $(\sum_{k\in\mathbb{Z}}kp_{k}(t))'=\sum\limits _{k\in\mathbb{Z}}p_{k}(t)(\lambda_{k}(t)-\mu_{k}(t)).$
Using estimates (\ref{derivative}) (we will prove it below), it is
easy to see that the series $\sum_{k\in\mathbb{Z}}kp_{k}(t)$ can
be differentiated term by term. Thus: \[
\begin{split} & (\sum_{k\in\mathbb{Z}}kp_{k}(t))'=\sum_{k\in\mathbb{Z}}k(p_{k-1}\lambda_{k-1}-(\lambda_{k}p_{k}+\mu_{k}p_{k})+\mu_{k+1}p_{k+1})=\\
 & \sum_{k\in\mathbb{Z}}\{((k+1)-k)\lambda_{k}p_{k}+(-k+(k-1))\mu_{k}p_{k}\}=\sum_{k\in\mathbb{Z}}p_{k}(t)(\lambda_{k}-\mu_{k})\text{.}\end{split}
\]

\subsection{Convergence.}

Here we will prove Theorem 4 assuming both conditions on $\beta(n)$,
introduced above, namely (\ref{con1}) and (\ref{con2}).

We will define two Lyapunov functions. The first of them $Q(t)$ will
be positive and decreasing along the trajectory outside some special
set. The second \textbf{$W(t)$} decreases along a trajectory everywhere,
but can take big negative values. Using these two functions we can
prove the convergence.

\paragraph{Boundness of $L$ and $M$.}

Note that the pair of equations for $L$ and $M$ are equivalent to
the following pair of equations \begin{equation}
\left\{ \begin{array}{ll}
s'(t)=-\frac{1}{2}e^{d}\left(\sum_{n\in\mathbb{Z}}p_{n}\beta(n)e^{-n+s}-\sum_{n\in\mathbb{Z}}p_{n}\beta(n-1)e^{n-s}\right);\\
d'(t)=-\frac{1}{2}e^{d}\left(\sum_{n\in\mathbb{Z}}p_{n}\beta(n)e^{-n+s}+\sum_{n\in\mathbb{Z}}p_{n}\beta(n-1)e^{n-s}\right)+C_{\lambda}.\end{array}\right.\label{eq:for_s_and_d}\end{equation}
 We will use the following
\begin{prop}
\label{propostion1} If the terms of the absolutely convergent series
$u(x)=\sum_{n=1}^{\infty}u_{n}(x)$ are continuously differentiable
on the segment $[a,b]$ and the series of derivatives $\sum_{n=1}^{\infty}u_{n}'(x)$
converges uniformly in $(a,b),$ then \[
\frac{d}{dx}\sum_{n=1}^{\infty}u_{n}(x)=\sum_{n=1}^{\infty}u_{n}'(x).\]

\end{prop}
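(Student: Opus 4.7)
The plan is to reduce this to the standard real-analysis fact that, under uniform convergence of the derivative series, one may differentiate a series term by term. First I would introduce the partial sums $s_{N}(x)=\sum_{n=1}^{N}u_{n}(x)$, each of which is continuously differentiable on $[a,b]$ with $s_{N}'(x)=\sum_{n=1}^{N}u_{n}'(x)$. By hypothesis, $s_{N}(x)\to u(x)$ pointwise on $[a,b]$ (in fact absolutely), and $s_{N}'$ converges uniformly on $(a,b)$ to the function $v(x):=\sum_{n=1}^{\infty}u_{n}'(x)$; since each $u_{n}'$ is continuous and a uniform limit of continuous functions is continuous, $v$ extends to a continuous function on any compact subinterval of $(a,b)$ (and on $[a,b]$ if the convergence extends to the endpoints).

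Next I would pass from differentiation to integration, which behaves well under uniform convergence. Fix a reference point $x_{0}\in(a,b)$. For any $x$ in the same open interval, the uniform convergence of $\sum u_{n}'$ on the compact segment with endpoints $x_{0}$ and $x$ legitimizes interchanging the sum and the integral:
\[
\int_{x_{0}}^{x} v(t)\,dt \;=\; \sum_{n=1}^{\infty}\int_{x_{0}}^{x} u_{n}'(t)\,dt \;=\; \sum_{n=1}^{\infty}\bigl(u_{n}(x)-u_{n}(x_{0})\bigr) \;=\; u(x)-u(x_{0}),
\]
where each inner integral is evaluated by the fundamental theorem of calculus (using that $u_{n}$ is continuously differentiable), and the last equality rests on absolute convergence of the series defining $u$ at the two points $x$ and $x_{0}$.

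Finally, since $v$ is continuous, the fundamental theorem of calculus applied to the left-hand side shows that the map $x\mapsto u(x)-u(x_{0})$ is differentiable with derivative $v(x)$, so $u'(x)=v(x)=\sum_{n=1}^{\infty}u_{n}'(x)$, which is the claim. There is no real obstacle here: the proposition is a textbook result, and the only care required is to carry out the integration step on a compact subinterval so that uniform convergence on $(a,b)$ is enough to justify swapping $\sum$ and $\int$, and to rely on the given continuous differentiability to get the elementary antiderivative identity for each $u_{n}$.
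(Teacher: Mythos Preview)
Your argument is correct and is the standard textbook proof: pass to partial sums, integrate term by term on a compact subinterval using uniform convergence of the derivative series, then recover $u'=v$ from the fundamental theorem of calculus. There is nothing to fix.

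As for comparison, the paper does not actually prove this proposition; it is stated as a known real-analysis fact (``We will use the following'') and then invoked as a tool to justify term-by-term differentiation of various series later on. So your proposal supplies a proof where the paper simply quotes the result. If anything, your write-up is slightly more careful than needed here, since the authors are treating this as background material on the level of a calculus theorem.
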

In order to use this statement we have to make additional estimates.
If $p(0)\in B_{\alpha}^{+},$ then $\|p(t)\|_{\alpha}^{+}\leq\mathrm{const}$
for any sufficiently small segment $[0,T]$. This implies for any
$t\in[0,T]$ \begin{equation}
|p_{k}(t)|\leq e^{-\frac{k^{2}}{2}-\alpha|k|}\|p_{k}(t)\|_{\alpha}^{+}\leq\mathrm{const}\cdot e^{-\frac{k^{2}}{2}-\alpha|k|}.\label{eq:auxililiaryEstimate}\end{equation}
 From this, using the formula for $p_{k}'(t),$ it is easy to see
that for any $t\in[0,T]$ \begin{equation}
|p_{k}'(t)|\leq\mathrm{const}e^{-\frac{k^{2}}{2}-(\alpha-1)|k|}.\label{derivative}\end{equation}

Moreover, we will need the following Grownall type result
\begin{prop}
\label{propostion2} Let $f(\cdot)$ be some differentiable function
on $[0,\infty)$ such that \[
f'(\cdot)\geq(\leq)g(\cdot)(C_{1}-C_{2}f(\cdot)),\quad C_{2}>0\]
where $g(\cdot)$ is a positive function. Then $f(\cdot)$ is bounded
from below (above).
\end{prop}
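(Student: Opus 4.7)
The plan is to exploit the sign change of the right-hand side at the level $C_1/C_2$: whenever $f$ dips below $C_1/C_2$, the bracket $C_1 - C_2 f$ is positive, and since $g>0$ this forces $f' > 0$ there. So $f$ cannot slip past the level $\min(f(0), C_1/C_2)$ once it starts above it. I will give the argument for the lower-bound case ($f' \geq g(C_1 - C_2 f)$); the upper-bound case is symmetric (one just flips the inequalities and replaces $\min$ with $\max$).

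More concretely, set $A = \min(f(0), C_1/C_2)$ and claim $f(t) \geq A$ for every $t \geq 0$. I would argue by contradiction: suppose $f(t_0) < A$ for some $t_0 > 0$. Since $f(0) \geq A$, continuity of $f$ and the intermediate value theorem allow me to define
\[
t_1 = \sup\bigl\{t \in [0, t_0] : f(t) \geq A\bigr\},
\]
so that $f(t_1) = A$ and $f(t) < A$ on $(t_1, t_0]$. On this interval $f(t) < A \leq C_1/C_2$, hence $C_1 - C_2 f(t) > 0$; combined with $g(t) > 0$ this yields $f'(t) \geq g(t)(C_1 - C_2 f(t)) > 0$. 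Therefore $f$ is strictly increasing on $(t_1, t_0]$, which gives $f(t_0) > f(t_1) = A$, contradicting $f(t_0) < A$.

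There is essentially no technical obstacle here; the only point requiring a bit of care is the clean definition of the ``last crossing time'' $t_1$, which rests only on continuity of $f$ (guaranteed by differentiability) and the hypothesis $f(0) \geq A$. The argument does not use any quantitative size of $g$ or any integrability — only its positivity — which is exactly what is needed in the application to bounding $L(t)$ and $M(t)$ in the preceding paragraphs, where $g(t)$ may grow but stays positive.
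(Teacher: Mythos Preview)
Your barrier argument is correct. The paper itself does not supply a proof of this proposition: it merely introduces it as ``the following Gr\"onwall type result'' and then applies it in Lemmas~\ref{lemma9} and~\ref{lemma10}. The label suggests the authors have in mind the integrating-factor computation: with $G(t)=\int_0^t g(s)\,ds$ one rewrites the inequality as $(f\,e^{C_2 G})' \geq C_1 g\,e^{C_2 G}$, integrates, and obtains
\[
f(t)\ \geq\ f(0)\,e^{-C_2 G(t)} + \frac{C_1}{C_2}\bigl(1-e^{-C_2 G(t)}\bigr),
\]
a convex combination of $f(0)$ and $C_1/C_2$, hence bounded below by $\min(f(0),C_1/C_2)$. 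Your comparison argument reaches exactly the same explicit bound $A=\min(f(0),C_1/C_2)$ without any computation, using only positivity of $g$ and the sign of $C_1-C_2 f$; this is arguably cleaner for the purpose at hand, and your remark that no integrability of $g$ is needed is to the point. The only cosmetic point: after defining $t_1$ you might note explicitly that $t_1<t_0$ (since $f(t_0)<A$), so that the interval $(t_1,t_0]$ is nonempty.
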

Define the first Lyapunov function (compare with $d'(t)$) \[
Q(t)=\sum_{n\in\mathbb{Z}}p_{n}(t)\left(\beta(n)e^{-n+s}+\beta(n-1)e^{n-s}\right).\]

\begin{lem}
\label{lemma9}\textbf{1.} $Q(t)$ is bounded. \textbf{2.} $|\sum_{k\in\mathbb{Z}}kp_{k}(t)-s(t)|$
is bounded.\end{lem}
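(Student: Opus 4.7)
My plan is to derive a single scalar differential inequality of the form $Q'(t)\le e^{d(t)}\bigl(C_{1}-C_{2}Q(t)\bigr)$ and then invoke the Gronwall-type Proposition \ref{propostion2}; Part \textbf{2} will follow from Part \textbf{1} by Cauchy--Schwarz together with the elementary bound $\cosh x\ge 1+x^{2}/2$.

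Introduce $a_{n}(t)=\beta(n)e^{-n+s(t)}$ and $b_{n}(t)=\beta(n-1)e^{n-s(t)}$, so that $Q=\sum p_{n}(a_{n}+b_{n})$, $A-B=\sum p_{n}(a_{n}-b_{n})$, $a_{n}'=a_{n}s'$ and $b_{n}'=-b_{n}s'$. I would first justify termwise differentiation using the estimates (\ref{eq:auxililiaryEstimate})--(\ref{derivative}) together with Proposition \ref{propostion1}, and split
\[
Q'(t)=\sum_{n}p_{n}'(a_{n}+b_{n})+s'(t)(A-B).
\]
Substituting the equation for $p_{n}$ from (\ref{eq:mainEquationsinTermsSandD}) into the first sum and shifting the summation index by $\pm 1$ rewrites this as
\[
Q'(t)=e^{d}\sum_{n}p_{n}\Bigl[a_{n}(\Delta^{+}a_{n}+\Delta^{+}b_{n})-b_{n}(\Delta^{-}a_{n}+\Delta^{-}b_{n})\Bigr]-\tfrac{1}{2}e^{d}(A-B)^{2},
\]
where $\Delta^{+}f_{n}=f_{n+1}-f_{n}$, $\Delta^{-}f_{n}=f_{n}-f_{n-1}$, and the last term arises from $s'=-\tfrac{1}{2}e^{d}(A-B)$ and is manifestly nonpositive.

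Condition (\ref{con2}) fixes the signs of two of the four $\Delta^{\pm}$ contributions, giving $\Delta^{+}a_{n}<-Ce^{-(n-s)}$ and $\Delta^{-}b_{n}>Ce^{n-s}$. Combined with $\inf\beta\ge\epsilon>0$, the diagonal terms $a_{n}\Delta^{+}a_{n}$ and $-b_{n}\Delta^{-}b_{n}$ are bounded above by $-C\epsilon\,e^{-2(n-s)}$ and $-C\epsilon\,e^{2(n-s)}$ respectively, while the two cross terms $a_{n}\Delta^{+}b_{n}$ and $-b_{n}\Delta^{-}a_{n}$ simplify to polynomials in $\beta$ alone and are uniformly bounded by $(1+e)M^{2}$ with $M=\sup\beta<\infty$ (condition (\ref{con1})). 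Summing over $n$ and using $\cosh(2x)\ge\cosh(x)$ together with the trivial upper bound $Q\le 2M\sum p_{n}\cosh(n-s)$ yields
\[
Q'(t)\le e^{d(t)}\bigl(C_{1}-C_{2}Q(t)\bigr),\qquad C_{1}=2(1+e)M^{2},\quad C_{2}=C\epsilon/M,
\]
and Proposition \ref{propostion2} applied with $g(t)=e^{d(t)}>0$ delivers Part \textbf{1}.

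For Part \textbf{2}, from $\inf\beta\ge\epsilon$ and $\cosh x\ge 1+x^{2}/2$ one has $Q\ge 2\epsilon+\epsilon\sum p_{n}(n-s)^{2}$, hence $\sum p_{n}(n-s)^{2}\le Q/\epsilon$; Cauchy--Schwarz then gives $\bigl|\sum_{k}kp_{k}-s\bigr|\le\sqrt{Q/\epsilon}$, which is bounded by Part \textbf{1}. The principal obstacle is the algebra of the inner sum: only the grouping (signed diagonal) $+$ (bounded cross) forces the right-hand side to depend on $Q$ itself rather than on a strictly larger moment such as $\sum p_{n}\cosh(2(n-s))$, and this grouping is precisely what condition (\ref{con2}) was tailored to enable.
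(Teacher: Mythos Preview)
Your argument is correct and follows essentially the same route as the paper: differentiate $Q$, observe that the $s'$-contribution equals $-\tfrac12 e^{d}(A-B)^{2}\le 0$, substitute the Kolmogorov equations into the $p_n'$-contribution, and use condition~(\ref{con2}) to obtain a differential inequality $Q'\le e^{d}(C_{1}-C_{2}Q)$ to which Proposition~\ref{propostion2} applies. The only cosmetic differences are that the paper organizes the same algebra via the substitution $x=e^{s-n}$ (arriving directly at $S_{n}(s)+(a_{n}+b_{n})\le\mathrm{const}$, hence $C_{2}=1$) rather than your diagonal/cross splitting with the extra step $\cosh(2x)\ge\cosh(x)$, and for Part~\textbf{2} it uses the one-line bound $|n-s|\le\mathrm{const}\,(a_{n}+b_{n})$ in place of your Cauchy--Schwarz detour.
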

\begin{proof}
\textbf{1.} Using estimate ~(\ref{derivative}), it is easy to check,
that the series for $Q'(t)$ can be differentiated term by term. First
we find $Q'(t)$\[
Q'(t)=\sum_{n\in\mathbb{Z}}p_{n}'\left(\beta(n)e^{-n+s}+\beta(n-1)e^{n-s}\right)+\sum_{n\in\mathbb{Z}}p_{n}(\beta(n)e^{-n+s}-\beta(n-1)e^{n-s})s'.\]
Using ($\ref{eq:for_{s}{}_{a}nd_{d}}$) we note that the second component
of the expression is less than $0$. So \[
Q'(t)\leq\sum_{n\in\mathbb{Z}}p_{n}'\left(\beta(n)e^{-n+s}+\beta(n-1)e^{n-s}\right)=e^{-d}\sum_{n\in\mathbb{Z}}p_{n}'(t)(\lambda_{n}(t)+\mu_{n}(t)).\]
Using Kolmogorov's equations and opening the brackets we have \[
Q'(t)\leq e^{-d}\sum_{n\in\mathbb{Z}}p_{n}(t)\left\{ \lambda_{n}\lambda_{n+1}+\mu_{n}\mu_{n-1}+\lambda_{n}\mu_{n+1}+\mu_{n}\lambda_{n-1}-2\lambda_{n}\mu_{n}-\lambda_{n}^{2}-\mu_{n}^{2}\right\} .\]
Substituting expressions ($\ref{eq:lambdaMu}$) for $\lambda_{n}$
and $\mu_{n}$ we have:

\[
\begin{split} & Q'(t)\leq e^{d}\sum_{n\in\mathbb{Z}}p_{n}(t)[\beta(n)\beta(n+1)e^{-2n-1+2s}+\beta(n-1)\beta(n-2)e^{2n-1-2s}+e\beta^{2}(n)+\\
 & +e\beta^{2}(n-1)-2\beta(n-1)\beta(n)-\beta^{2}(n)e^{-2n+2s}-\beta^{2}(n-1)e^{2n-2s}]:=e^{d}\sum_{n\in\mathbb{Z}}S_{n}(s)\pi_{n}(t).\end{split}
\]
We state that for the just defined function $S_{n}(s)$ \[
S_{n}(s)+\beta(n)e^{-n+s}+\beta(n-1)e^{n-s}\leq\mathrm{const.}\]
Indeed, putting $x:=e^{s-n},$ we can rewrite this inequality in the
following form:

\[
\beta(n)\left[\left(\frac{1}{e}\beta(n+1)-\beta(n)\right)x^{2}+x\right]+\beta(n-1)\left[\left(\frac{1}{e}\beta(n-2)-\beta(n-1)\right)x^{2}+x\right]\leq\mathrm{const.}\]
Therefore conditions (\ref{con1}) and (\ref{con2}) imply required
inequality. Now we get \[
Q'(t)\leq e^{d}\sum_{n\in\mathbb{Z}}p_{n}(t)(\mathrm{const}-\beta(n)e^{-n+s}-\beta(n-1)e^{n-s})=e^{d}(\mathrm{const}-Q(t)).\]
 Using Proposition \ref{propostion2}, we obtain that $Q$ is bounded.

\textbf{2.} We have \[
\begin{split} & |\sum_{n\in\mathbb{Z}}np_{n}(t)-s(t)|=|\sum_{n\in\mathbb{Z}}np_{n}-s|=\sum_{n\in\mathbb{Z}}|p_{n}(n-s)|\leq\\
 & \leq\mathrm{const}\sum_{n\in\mathbb{Z}}|p_{n}|\left(\beta(n)e^{-n+s}+\beta(n-1)e^{n-s}\right)=\mathrm{const\cdot}Q(t).\end{split}
\]
 It remains to use the boundness of $Q$.\end{proof}
\begin{lem}
\label{lemma10} Functions $L$ and $M$ are bounded.\end{lem}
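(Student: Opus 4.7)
The plan is to exploit the conserved quantity from Theorem~3 together with the two parts of Lemma~\ref{lemma9}, and then analyze the scalar ODE for $d(t)$ in isolation. Since $L=s+d$ and $M=s-d$, boundedness of $L$ and $M$ is equivalent to boundedness of both $s$ and $d$, and these two parts split cleanly.

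First I would handle $s(t)$. The integral of motion $K(t)=2s(t)+\sum_{k}kp_{k}(t)$ is constant in $t$, so
\[
3s(t)=K(0)+\bigl(s(t)-\textstyle\sum_{k}kp_{k}(t)\bigr).
\]
Part~2 of Lemma~\ref{lemma9} asserts that $|\sum_{k}kp_{k}(t)-s(t)|$ is bounded uniformly in $t$, so the right-hand side is bounded and hence $s(t)$ is bounded.

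Next I would treat $d(t)$ using the second equation of (\ref{eq:for_s_and_d}), which in the notation of $Q$ reads
\[
d'(t)=-\tfrac{1}{2}e^{d(t)}Q(t)+C_{\lambda}.
\]
Lemma~\ref{lemma9}.1 gives an upper bound $Q(t)\le Q_{\max}$. For a lower bound on $Q$, I would use the AM--GM inequality termwise: since $\inf_{n}\beta(n)>0$ by condition (\ref{con2}),
\[
\beta(n)e^{-n+s}+\beta(n-1)e^{n-s}\ge 2\sqrt{\beta(n)\beta(n-1)}\ge 2\inf_{n}\beta(n),
\]
so $Q(t)\ge q_{0}:=2\inf_{n}\beta(n)>0$ because $p_{n}(t)\ge 0$ and $\sum_{n}p_{n}(t)=1$. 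Now the ODE for $d$ is of precisely the form covered by Proposition~\ref{propostion2}: when $d(t)$ is large, $d'(t)\le -\tfrac{1}{2}e^{d}q_{0}+C_{\lambda}$, so $d$ cannot escape to $+\infty$; when $d(t)$ is very negative, $d'(t)\ge -\tfrac{1}{2}e^{d}Q_{\max}+C_{\lambda}$, which is eventually $\ge C_{\lambda}/2>0$, so $d$ cannot escape to $-\infty$. Thus $d(t)$ is bounded.

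Combining the two, $L(t)=s(t)+d(t)$ and $M(t)=s(t)-d(t)$ are bounded, which is the claim. The only substantive step is the lower bound on $Q$; everything else is a direct reading of previously established facts (conservation of $K$, boundedness results from Lemma~\ref{lemma9}, the Gronwall-type Proposition~\ref{propostion2}). I do not anticipate a serious obstacle, since condition (\ref{con2}) guarantees $\inf\beta>0$ exactly to make the AM--GM step produce a strictly positive constant.
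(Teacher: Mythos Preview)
Your proof is correct and follows essentially the same route as the paper: reduce to boundedness of $s$ and $d$, get $s$ bounded from the invariant $K$ together with Lemma~\ref{lemma9}.2, and get $d$ bounded from the scalar ODE $d'=-\tfrac{1}{2}e^{d}Q(t)+C_{\lambda}$ via the Gronwall-type Proposition~\ref{propostion2}. The only difference is that you make explicit the lower bound $Q(t)\ge 2\inf_{n}\beta(n)>0$ via AM--GM and condition~(\ref{con2}), a detail the paper leaves implicit but which is indeed needed to rule out $d\to+\infty$.
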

\begin{proof}
Note that this statement is equivalent to the fact that $s$ and $d$
are bounded. Boundness of $s$ directly follows from formula (\ref{invariant})
and part 2 of Lemma \ref{lemma9}. Boundness of $d$ is proved as
Proposition \ref{propostion2}, since \[
d'(t)=-\frac{1}{2}e^{d}Q(t)+C_{\lambda}.\]

\end{proof}

\paragraph{Relative entropy for constant $L$ and $M$.}

In the nonlinear case we will prove convergence using the relative
entropy method. In this subsection we will introduce auxiliary notions
and lemmas for constant $Z.$

If $Z=\mathrm{const}$ there exists only one invariant measure
$\pi,$ given by ($\ref{eq:invariantMeasure}$). Define the entropy
of the distribution $p=p(t)$ relative to $\pi$ in the following
way

\begin{equation}
H(t)=H(p(t)|\pi)=\sum_{n\in\mathbb{Z}}p_{n}\ln\frac{p_{n}}{\pi_{n}}=\sum_{n\in\mathbb{Z}}\pi_{n}\varphi\left(\frac{p_{n}}{\pi_{n}}\right),\label{eq:entropy}\end{equation}
where $\varphi(x)=x\ln x$.
\begin{rem*}
As the factor $\Xi$ adds a constant to $H(t)$, everywhere below
we will assume $\Xi=1.$ Thus $p(t)$ is just a finite (not necessary
probability) measure.

The fact that $H(t)$ decreases in time is known {[}2{]}. We will
show that the series ($\ref{eq:entropy}$) is convergent and can be
differiantated term by term. We will use the following technical lemma: \end{rem*}
\begin{lem}
\label{lem:entropyLemma1}For any $\varepsilon>0$ there exists C=C($\varepsilon)$
> 0 such that for $t>\varepsilon$ \[
\ln p_{n}(t)\geq-Ce^{|n|}(1+t).\]

\end{lem}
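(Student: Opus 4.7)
The plan is to obtain the lower bound by a direct pathwise argument in the Markov chain with time-dependent rates. First I combine Lemma~\ref{lemma10} (boundedness of $L,M$) with the two-sided bound (\ref{con2}) on $\beta$ to get uniform positive constants $c_{1},c_{2},C_{0}$ such that $c_{1}e^{-n}\leq\lambda_{n}(t)\leq c_{2}e^{-n}$, $c_{1}e^{n}\leq\mu_{n}(t)\leq c_{2}e^{n}$ and in particular $\lambda_{n}(t)+\mu_{n}(t)\leq C_{0}e^{|n|}$ for all $t\geq 0$. Since $p(0)$ is a probability measure I can fix once and for all some $n_{0}$ with $p_{n_{0}}(0)>0$, use $p_{n}(t)\geq p_{n_{0}}(0)\,P_{n_{0},n}(0,t)$, and reduce the problem to a lower bound on the transition probability for the inhomogeneous chain.

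To bound $P_{n_{0},n}(0,t)$ from below I exploit the fact that in the series representation of Lemma~\ref{lem:usefulFormula} every term is a product of non-negative matrix factors (the off-diagonal entries of $V$ are the non-negative rates, and $e^{\int H_{0}}$ is diagonal with positive entries). Assuming $n\geq n_{0}$ (the case $n<n_{0}$ is handled symmetrically by a direct downward path using the $\mu_{k}$), I keep only the single summand corresponding to exactly $k:=n-n_{0}$ upward jumps, with all $k$ jump times lying in the subinterval $[0,\varepsilon/2]$ and the chain then sitting at $n$ throughout $[\varepsilon/2,t]$. To make this rigorous for the infinite chain I first carry out the computation inside the truncated process $X^{m}$ with $m>|n|\vee|n_{0}|$, where the series representation of Lemma~\ref{lem:usefulFormula} is elementary, and then pass to the limit using Lemma~\ref{lemma4}. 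The selected direct path contributes at least
\[
P_{n_{0},n}(0,t)\;\geq\;\Bigl(\int_{\triangle_{k}(0,\varepsilon/2)}\prod_{i=0}^{k-1}\lambda_{n_{0}+i}(s_{k-i})\,ds_{1}\ldots ds_{k}\Bigr)\cdot\exp\bigl(-C_{0}\,e^{|n|}\,t\bigr),
\]
because the total waiting-time weight $\sum(\lambda+\mu)\cdot(\text{dwell time})$ along the trajectory is bounded above by $C_{0}e^{|n|}\cdot t$.

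Plugging in $\lambda_{n_{0}+i}\geq c_{1}e^{-(n_{0}+i)}$ and the simplex volume $(\varepsilon/2)^{k}/k!$, taking logarithms yields
\[
\ln P_{n_{0},n}(0,t)\;\geq\; k\ln\bigl(c_{1}\varepsilon/2\bigr)-\ln k!-\sum_{i=n_{0}}^{n-1}i-C_{0}\,e^{|n|}\,t.
\]
The right-hand side is $-O(|n|^{2})-C_{0}e^{|n|}t$. Since $|n|^{2}=o(e^{|n|})$, the polynomial-in-$|n|$ losses can be absorbed into $C\,e^{|n|}(1+t)$ by enlarging $C$, and the additive constant $\ln p_{n_{0}}(0)$ is swallowed in the same way. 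This gives $\ln p_{n}(t)\geq -C(\varepsilon)\,e^{|n|}(1+t)$ uniformly in $n\in\mathbb{Z}$ and $t>\varepsilon$, as claimed.

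The main obstacle is bookkeeping rather than a deep difficulty: one must verify that the single-path lower bound extracted from the non-negative series expansion really transfers from the truncated $P^{m}$ to the limit $P$, and that the polynomial-in-$|n|$ contributions coming from the product of rates and from $k!$ are genuinely dominated by $e^{|n|}$ uniformly, so that a single constant $C(\varepsilon)$ suffices for all $n$.
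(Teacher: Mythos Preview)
Your argument is correct and follows essentially the same route as the paper: bound $p_{n}(t)$ from below by the probability that the chain, starting from a fixed site with positive initial mass, traverses the direct monotone path to $n$ and then sits there; the product of rates and the $k!$ from the simplex contribute a loss of order $|n|^{2}$, while the total holding-time exponent gives the $C_{0}e^{|n|}t$ term, and both are absorbed into $Ce^{|n|}(1+t)$. One contextual remark: Lemma~\ref{lem:entropyLemma1} is stated in the subsection where $Z=(L,M)$ is held \emph{constant}, so invoking Lemma~\ref{lemma10} is unnecessary (boundedness is trivial there); your write-up, which allows time-dependent bounded rates via the Dyson series of Lemma~\ref{lem:usefulFormula} and the truncation of Lemma~\ref{lemma4}, in fact already proves the nonlinear version that the paper states separately a few lines later.
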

\textit{Proof. }Let \textit{$n\in\mathbb{Z}_{+},$ $p_{0}(0)>0.$}
We will make a very rough estimate. It is evident that $p_{n}(t)$
is greater than the product of the following probabilities:
\begin{itemize}
\item The probability $P_{1}$ that at the moment $t=0$ the particle is
at the point 0.
\item The probability $P_{2}$ that the only jumps before time $t$ are
as follows: from 0 to 1, from 1 to 2, etc., from $n-1$ to $n.$
\end{itemize}
Otherwise speaking

\[
\begin{split} & p_{n}(t)\geq p_{0}(0)\cdot\left[\prod_{k=0}^{n-1}\left(\frac{\beta(n)e^{-n+L}}{\beta(n-1)e^{n-M}+\beta(n)e^{n-L}}\right)\right]\cdot\left[e^{-(\beta(n-1)e^{n-M}+\beta(n)e^{-n+L})t}\right]\cdot\\
 & \cdot\left[\frac{\left(\mathrm{min}_{n}\left(\beta(n-1)e^{n-M}+\beta(n)e^{-n+L}\right)t\right)^{n}}{n!}e^{-\mathrm{min}_{n}\left(\beta(n-1)e^{n-M}+\beta(n)e^{-n+L}\right)t}\right]\geq\\
 & \geq\pi_{0}(0)C^{n}e^{-n(n-1)}\cdot e^{-Ce^{n}t}\cdot\frac{\left(ct\right)^{n}}{n!}e^{-ct}.\end{split}
\]

Taking the logarithm, we get the requested assertion. Similar calculation
can be made in the case $p_{0}(0)=0,$ $p_{k}(0)\neq0,k\neq0,$ and
$n\in\mathbb{Z}^{-}.$ \qed
\begin{cor}
The series ($\ref{eq:entropy}$) is well defined for $t\geq0$, and
for $t>0$ it can be differiantiated term by term.\end{cor}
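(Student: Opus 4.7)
The plan is to verify the two assertions separately. For well-definedness at every $t \geq 0$, split the general term using $\Xi = 1$ and $\ln\pi_n = -(n-s)^2$ as
\[
p_n(t)\ln\frac{p_n(t)}{\pi_n} = p_n(t)\ln p_n(t) + (n-s)^2 p_n(t).
\]
Since $p(t) \in B_\alpha^+$ by Theorem 1, estimate (\ref{eq:auxililiaryEstimate}) gives $0 \leq p_n(t) \leq \mathrm{const}\cdot e^{-n^2/2 - \alpha|n|}$ with constant locally uniform in $t$, and $s(t)$ is bounded by Lemma \ref{lemma10}, so $\sum_n (n-s)^2 p_n$ converges at Gaussian rate. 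For the first sum I use the convention $0\ln 0 = 0$ and the fact that $x\mapsto -x\ln x$ is nonnegative and monotonically increasing on $(0, 1/e]$. For $|n|$ large enough that the upper bound on $p_n$ lies below $1/e$, monotonicity yields
\[
|p_n\ln p_n| \leq \mathrm{const}\cdot e^{-n^2/2 - \alpha|n|}\bigl(n^2/2 + \alpha|n| + \mathrm{const}\bigr),
\]
again summable; the finitely many remaining indices give a finite contribution.

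For the second claim I would appeal to Proposition \ref{propostion1}. Fix a compact subinterval $[\varepsilon, T] \subset (0,\infty)$. By Lemma \ref{lem:entropyLemma1} each $p_n(t) > 0$ on this interval, and since $\pi$ is $t$-independent (constant $L,M$), each term is continuously differentiable with
\[
\frac{d}{dt}\!\left[p_n\ln\frac{p_n}{\pi_n}\right] = p_n'(t)\left[\ln\frac{p_n(t)}{\pi_n} + 1\right].
\]
Combining the derivative estimate (\ref{derivative}), i.e.\ $|p_n'(t)| \leq \mathrm{const}\cdot e^{-n^2/2 - (\alpha-1)|n|}$, the lower bound $\ln p_n(t) \geq -Ce^{|n|}(1+T)$ from Lemma \ref{lem:entropyLemma1} (which gives $|\ln p_n(t)| \leq Ce^{|n|}(1+T)$, since $p_n \leq 1$), and $|\ln\pi_n| = (n-s)^2$, I obtain
\[
\left|p_n'(t)\right|\cdot\left|\ln\frac{p_n(t)}{\pi_n} + 1\right| \leq \mathrm{const}\cdot e^{-n^2/2 - (\alpha-1)|n|}\bigl(Ce^{|n|}(1+T) + (n-s)^2 + 1\bigr).
\]
The $e^{-n^2/2}$ factor dominates, so this majorant is summable and the bound is uniform in $t \in [\varepsilon, T]$; Proposition \ref{propostion1} then permits termwise differentiation.

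The main obstacle is the lower bound on $p_n(t)$: Lemma \ref{lem:entropyLemma1} only gives $\ln p_n \geq -Ce^{|n|}(1+t)$, so $|\ln p_n|$ can grow as fast as $e^{|n|}$, which would be fatal without the Gaussian factor $e^{-n^2/2}$ present in the estimate for $|p_n'(t)|$. It is precisely this quadratic exponential decay that absorbs the $e^{|n|}$ blowup and saves the termwise differentiation. The degeneration of the lower bound as $t \to 0$ is also why the differentiation assertion is restricted to $t > 0$.
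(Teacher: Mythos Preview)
Your proof is correct and follows the same strategy as the paper's (three-sentence) argument: estimate (\ref{eq:auxililiaryEstimate}) gives absolute convergence, and Lemma \ref{lem:entropyLemma1} together with the derivative bound (\ref{derivative}) yields a uniform majorant for the series of derivatives so that Proposition \ref{propostion1} applies. The only superfluous step is your appeal to Lemma \ref{lemma10}: in this subsection $L$ and $M$ are held constant, so $s$ is simply a fixed number and no boundedness argument is needed.
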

\begin{proof}
Whereas ($\ref{eq:auxililiaryEstimate}$) it is easy to see that the
series ($\ref{eq:entropy}$) converges on any sufficiently small interval.
Using Lemma $\ref{lem:entropyLemma1}$ we get that the corresponding
series of derivatives uniformly converges on any interval enough small.
The corollary implies Proposition $\ref{propostion1}$.\end{proof}
\begin{lem}
For $t>0$\[
\frac{d}{dt}H(t)\leq0.\]

The equality is attained as soon as $p(t)=\pi$ up to multiplicative
factor.
\end{lem}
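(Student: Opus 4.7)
The plan is to follow the standard relative entropy dissipation argument for reversible Markov chains, using that $\pi$ is time-independent (since $Z$ is fixed in this subsection) and satisfies detailed balance $\pi_n \lambda_n = \pi_{n+1}\mu_{n+1}$.

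First, by the preceding corollary the series \eqref{eq:entropy} may be differentiated term by term for $t>0$. Since $\pi$ does not depend on $t$, and $\varphi'(x)=\ln x+1$, I get
\[
\frac{d}{dt}H(t)=\sum_{n\in\mathbb{Z}}\left(\ln\frac{p_n}{\pi_n}+1\right)p_n'(t).
\]
The $+1$ contribution is $\sum_n p_n'(t)=\bigl(\sum_n p_n\bigr)'=0$ (total mass is conserved, as already shown in the proof of Theorem~1). Substituting the Kolmogorov equation \eqref{mainp} into the remaining sum and introducing the discrete flux $J_n:=\lambda_n p_n-\mu_{n+1}p_{n+1}$, so that $p_n'=J_{n-1}-J_n$, I reorganise
\[
\frac{d}{dt}H(t)=\sum_{n\in\mathbb{Z}}\ln\frac{p_n}{\pi_n}(J_{n-1}-J_n)=\sum_{n\in\mathbb{Z}}J_n\left[\ln\frac{p_{n+1}}{\pi_{n+1}}-\ln\frac{p_n}{\pi_n}\right],
\]
where the shift of summation index is legitimate because the estimate \eqref{eq:auxililiaryEstimate} and Lemma~\ref{lem:entropyLemma1} make both resulting series absolutely convergent on any interval $[\varepsilon,T]$.

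Now I invoke detailed balance. Set $a_n:=\pi_n\lambda_n=\pi_{n+1}\mu_{n+1}>0$; then $\lambda_n p_n=a_n(p_n/\pi_n)$ and $\mu_{n+1}p_{n+1}=a_n(p_{n+1}/\pi_{n+1})$, so writing $x_n:=p_n/\pi_n$ the dissipation takes the symmetric form
\[
\frac{d}{dt}H(t)=\sum_{n\in\mathbb{Z}}a_n(x_n-x_{n+1})\bigl(\ln x_{n+1}-\ln x_n\bigr)=-\sum_{n\in\mathbb{Z}}a_n(x_{n+1}-x_n)(\ln x_{n+1}-\ln x_n).
\]
Since $\ln$ is strictly increasing on $(0,\infty)$, each summand is nonnegative, hence $\frac{d}{dt}H(t)\le 0$. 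Moreover, equality in the summand at index $n$ forces $x_n=x_{n+1}$; because the underlying chain on $\mathbb{Z}$ is irreducible ($a_n>0$ for every $n$), equality in the whole sum forces $x_n$ to be constant in $n$, that is, $p(t)=c\pi$ for some $c>0$, as claimed.

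The only real subtlety is the interchange of limit/sum operations: term-by-term differentiation is handled by the preceding corollary, while the index shift that produces the symmetric expression is justified by the pointwise bounds on $p_n$, $p_n'$ and $\ln p_n$ coming from \eqref{eq:auxililiaryEstimate}, \eqref{derivative} and Lemma~\ref{lem:entropyLemma1}, which together make each of the rearranged series absolutely convergent uniformly on compact subintervals of $(0,\infty)$. The algebraic identity using detailed balance is then straightforward, and the sign of $(y-x)(\ln y-\ln x)$ delivers both the monotonicity and the characterisation of the equality case.
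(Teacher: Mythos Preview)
Your argument is correct and is the standard entropy--dissipation computation for a reversible birth--death chain: term-by-term differentiation (already justified by the preceding corollary), summation by parts via the flux $J_n=\lambda_np_n-\mu_{n+1}p_{n+1}$, and then detailed balance $\pi_n\lambda_n=\pi_{n+1}\mu_{n+1}$ to reach the symmetric form $-\sum_n a_n(x_{n+1}-x_n)(\ln x_{n+1}-\ln x_n)\le 0$. The equality case is handled correctly via irreducibility.

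The paper itself does not give a proof here; it simply records that ``the fact that $H(t)$ decreases in time is known [2]'' (Liggett) and states the lemma. So you have supplied a self-contained proof where the authors appeal to the literature. Your argument is essentially the one found in standard references, so there is no genuine methodological divergence---just more detail on your side. The only point worth keeping explicit, as you already note, is that the Abel-type rearrangement requires the boundary terms to vanish; the Gaussian-type decay of $p_n$ from \eqref{eq:auxililiaryEstimate} against the at-most-exponential growth of $\lambda_n,\mu_n$ and of $|\ln(p_n/\pi_n)|$ (via Lemma~\ref{lem:entropyLemma1}) indeed makes all the rearranged series absolutely convergent, so this step is sound.
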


\paragraph{Nonlinear case.}

Now we will consider the case when $Z=(L,M)$ satisfy equations (\ref{mainL},\ref{mainM}).
Similarly we define

\[
\pi_{n}^{s}=e^{-(s-n)^{2}}\]

and the relative entropy

\begin{equation}
H(t)=H(p(t)|\pi^{s})=\sum_{n\in\mathbb{Z}}p_{n}\ln\frac{p_{n}}{\pi_{n}^{s}}=\sum_{n\in\mathbb{Z}}p_{n}(\ln p_{n}+(s-n)^{2}).\label{eq:entropy2}\end{equation}

The measure $\pi^{s}$ is not stochastic because we prefer not to
normalize it.

First of all we have to check that the series ($\ref{eq:entropy2}$)
converges and can be differentiated term by term. Taking into account
that $L$ and $M$ are bounded we can prove the technical lemma similar
to Lemma $\ref{lem:entropyLemma1}$ :
\begin{lem}
Let $\varepsilon>0$ be a fixed number. Then there exists $C=C(\varepsilon)>0$
such that for $t>\varepsilon$

\[
\ln p_{n}(t)\geq-Ce^{|n|}(1+t).\]
\end{lem}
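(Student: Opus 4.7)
The plan is to repeat the rough path-probability estimate used in Lemma \ref{lem:entropyLemma1}, with the only new ingredient being the boundedness of $L(t)$ and $M(t)$ on $[0,\infty)$ supplied by Lemma \ref{lemma10}. Indeed, once $L,M$ are known to lie in a fixed bounded interval $[-K,K]$, the rates
\[
\lambda_n(t)=\beta(n)e^{-n+L(t)}, \qquad \mu_n(t)=\beta(n-1)e^{n-M(t)}
\]
satisfy uniform two-sided estimates of the form $c_1 e^{-n}\le \lambda_n(t)\le c_2 e^{-n}$ and $c_1 e^{n}\le \mu_n(t)\le c_2 e^{n}$ (where the constants $c_1,c_2$ absorb $\sup\beta$ and $\inf\beta$, which are positive and finite by (\ref{con1}) and (\ref{con2})). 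Thus all the exponential estimates that held in the constant case hold here, uniformly in $t$.

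Fix $n_0\in\mathbb Z$ with $p_{n_0}(0)>0$ (such $n_0$ exists because $p(0)$ is a probability measure). For definiteness consider $n\ge n_0$; the case $n<n_0$ is symmetric. Using the path decomposition for the inhomogeneous Markov chain constructed in Section~2.1, I would bound $p_n(t)$ from below by the probability of the single trajectory that starts at $n_0$, performs exactly $n-n_0$ upward jumps at some times $0<s_1<\cdots<s_{n-n_0}<t$ and then stays at $n$ up to time $t$. By the explicit form of the transition semigroup this probability is at least
\[
p_{n_0}(0)\,\prod_{k=n_0}^{n-1}\frac{\lambda_k^{\min}}{\lambda_k^{\max}+\mu_k^{\max}}\cdot\frac{(c\,t)^{n-n_0}}{(n-n_0)!}\,\exp\!\Bigl(-(\lambda_n^{\max}+\mu_n^{\max})\,t\Bigr),
\]
where the superscripts $\min,\max$ denote the uniform-in-$t$ bounds coming from boundedness of $L$ and $M$, and $c=\min_k(\lambda_k^{\min}+\mu_k^{\min})$.

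Plugging in the exponential bounds on the rates shows that the first product is at least $e^{-C n(n-1)}$, the factorial gives $(ct)^{n-n_0}/(n-n_0)!$, and the dominant bad term is $\exp(-Ce^{|n|}t)$ coming from the last exponential factor (because $\mu_n^{\max}\sim e^n$ for $n\ge 0$ and $\lambda_n^{\max}\sim e^{|n|}$ for $n\le 0$). Taking logarithms, the term $-C e^{|n|}t$ dominates all others, which are at worst polynomial in $n$, so for $t>\varepsilon$ one gets
\[
\ln p_n(t)\ge -C(\varepsilon)\,e^{|n|}(1+t),
\]
with $C(\varepsilon)$ absorbing the logarithm of $p_{n_0}(0)$ and the lower-order terms (the assumption $t>\varepsilon$ allows us to hide $\ln t$ into the constant). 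The only step requiring care is the transition from the linear to the nonlinear setting, but this is done the moment Lemma~\ref{lemma10} is invoked: after that, the rates are uniformly sandwiched between two time-independent rates, and the argument is identical to the proof of Lemma~\ref{lem:entropyLemma1}. The main (mild) obstacle is a clean handling of the case when $p_n(0)$ is supported away from $0$, which is settled by using the reference point $n_0$ from the support of $p(0)$ instead of $0$.
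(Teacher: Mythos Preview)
Your proposal is correct and follows exactly the approach the paper indicates: the paper does not give a separate proof of this lemma, it simply remarks that ``taking into account that $L$ and $M$ are bounded we can prove the technical lemma similar to Lemma~\ref{lem:entropyLemma1}'', and your write-up is precisely that argument, invoking Lemma~\ref{lemma10} to obtain uniform two-sided bounds on the rates and then repeating the path-probability estimate verbatim.
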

\begin{cor}
The series ($\ref{eq:entropy2}$) is well defined for $t\geq0$, and
for $t>0$ it can be differiantiated term by term.
\end{cor}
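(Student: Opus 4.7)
The plan is to combine the two-sided pointwise bounds on $p_n$ with the just-stated logarithmic lower bound to show that each series involved is dominated by a convergent series whose terms are independent of $n$ in a summable way.

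First I would address convergence of ($\ref{eq:entropy2}$). The summand splits as $p_n\ln p_n + p_n(s-n)^2$. For the second piece, Lemma $\ref{lemma10}$ gives boundedness of $s(t)$, and ($\ref{eq:auxililiaryEstimate}$) gives $|p_n(t)|\le\mathrm{const}\cdot e^{-n^2/2-\alpha|n|}$, so $p_n(s-n)^2$ is dominated by $\mathrm{const}\cdot n^2 e^{-n^2/2-\alpha|n|}$, which is absolutely summable. For the first piece, note that $|x\ln x|$ is bounded on any $[0,1]$, so the upper bound $p_n\le 1$ together with the Gaussian decay controls $p_n\ln p_n$ from above; for the lower bound one uses the preceding lemma to write $|\ln p_n(t)|\le Ce^{|n|}(1+t)$ on any $t>\varepsilon$, yielding $|p_n\ln p_n|\le\mathrm{const}\cdot e^{-n^2/2-\alpha|n|}\cdot Ce^{|n|}(1+t)$, which is still summable because of the Gaussian factor.

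Next I would verify that the series of derivatives converges uniformly on compact subsets of $(0,\infty)$. Formally differentiating gives
\[
\frac{d}{dt}\bigl[p_n(\ln p_n+(s-n)^2)\bigr]=p_n'\ln p_n+p_n'+p_n'(s-n)^2+2p_n(s-n)s'.
\]
The estimate ($\ref{derivative}$) gives $|p_n'(t)|\le\mathrm{const}\cdot e^{-n^2/2-(\alpha-1)|n|}$, and $s'$ is bounded on compact subsets by ($\ref{eq:for_s_and_d}$) together with the boundedness of $L,M$ from Lemma $\ref{lemma10}$. Each of the four terms is then majorized by $\mathrm{const}\cdot n^2 e^{-n^2/2-(\alpha-1)|n|}$ times a factor of at most $Ce^{|n|}(1+t)$ coming from $\ln p_n$, and the Gaussian factor absorbs both $n^2$ and $e^{|n|}$. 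Hence the series of derivatives converges uniformly on each $[\varepsilon,T]$.

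Finally I would invoke Proposition $\ref{propostion1}$ to conclude that $H(t)$ is differentiable on $(0,\infty)$ with termwise derivative. The main technical obstacle here, as in the previous corollary, is controlling $|\ln p_n|$ from below when $p_n$ is very small; the rough one-path argument giving $\ln p_n\ge -Ce^{|n|}(1+t)$ is exactly what makes the Gaussian decay win. Once this estimate is available, the rest is bookkeeping against summable Gaussian majorants.
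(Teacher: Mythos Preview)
Your approach is essentially the paper's: use the Gaussian upper bound (\ref{eq:auxililiaryEstimate}) for convergence of the series, the logarithmic lower bound from the preceding lemma together with (\ref{derivative}) for uniform convergence of the derivative series, and then invoke Proposition~\ref{propostion1}. One small remark: for the convergence of $\sum p_n\ln p_n$ at all $t\ge 0$ you do not actually need the lemma (which only applies for $t>\varepsilon$); the Gaussian bound $p_n\le Ce^{-n^2/2-\alpha|n|}$ alone suffices, since $|x\ln x|\le C_\delta x^{1-\delta}$ on $(0,1]$, and the lemma is only needed to control the term $p_n'\ln p_n$ in the derivative series---this is exactly how the paper organizes it.
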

Define \[
W(t)=H(t)+2Ks(t)-3s^{2}(t),\]

where $K=2s+\sum_{n\in\mathbb{Z}}np_{n}(t)=\mathrm{const}$ is the
invariant, introduced above.
\begin{lem}
\label{lem:mainEntropyLema}For $t>0$

\[
\frac{d}{dt}W(t)\leq0.\]
The equality is attained as soon as $p(t)=\pi^{s}$ up to multiplicative
factor.\end{lem}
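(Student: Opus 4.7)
The plan is to differentiate $W(t)$ termwise (which is legitimate by the corollary preceding the lemma) and to show that the time-dependence of $s=s(t)$ inside the Gibbs weight $\pi^{s}$ cancels exactly against the correction $2Ks(t)-3s^{2}(t)$, reducing the assertion to a textbook entropy-dissipation calculation for a reversible chain. Splitting $H(t)=\sum_{n}p_{n}\ln p_{n}+\sum_{n}p_{n}(s-n)^{2}$ and using $\sum_{n}p_{n}'(t)=0$ (mass conservation from Theorem~1), termwise differentiation would give
\[
\frac{d}{dt}H(t)=\sum_{n\in\mathbb{Z}}p_{n}'(t)\bigl[\ln p_{n}+(s-n)^{2}\bigr]+2s'(t)\sum_{n\in\mathbb{Z}}p_{n}(s-n).
\]
The invariant $K=2s+\sum_{n}np_{n}$ from Theorem~3 yields $\sum_{n}p_{n}(s-n)=3s-K$, while $\frac{d}{dt}(2Ks-3s^{2})=2s'(K-3s)$ exactly cancels this contribution. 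Since $\ln\pi_{n}^{s}=-(s-n)^{2}$, what survives is
\[
\frac{d}{dt}W(t)=\sum_{n\in\mathbb{Z}}p_{n}'(t)\,\ln\!\frac{p_{n}(t)}{\pi_{n}^{s(t)}}.
\]

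Next I would exploit the decisive algebraic point that the factor $\beta$ drops out of the detailed-balance equation $\pi_{n}\lambda_{n}=\pi_{n+1}\mu_{n+1}$, so at every fixed $t$ the weight $\pi^{s(t)}$ satisfies detailed balance with the frozen rates $\lambda_{n}(t),\mu_{n}(t)$ — this is exactly the characterization $s=(L+M)/2$ used already in Section~1.1. Writing $f_{n}=p_{n}/\pi_{n}^{s}$ and feeding detailed balance into the master equation rewrites
\[
p_{n}'(t)=\pi_{n}^{s}\bigl[\mu_{n}(f_{n-1}-f_{n})+\lambda_{n}(f_{n+1}-f_{n})\bigr].
\]
Substituting this into the preceding expression for $\frac{d}{dt}W$ and then shifting one summation index by one (again via $\pi_{n+1}\mu_{n+1}=\pi_{n}\lambda_{n}$) symmetrizes the sum into the standard Dirichlet form
\[
\frac{d}{dt}W(t)=-\sum_{n\in\mathbb{Z}}\lambda_{n}(t)\,\pi_{n}^{s(t)}\,(f_{n+1}-f_{n})(\ln f_{n+1}-\ln f_{n})\le 0,
\]
because $(a-b)(\ln a-\ln b)\ge 0$ for positive $a,b$. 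Since $\lambda_{n}\pi_{n}^{s}>0$ for every $n$, equality forces $f_{n+1}=f_{n}$ for all $n\in\mathbb{Z}$, which is exactly the statement that $p(t)$ is proportional to $\pi^{s(t)}$.

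The main obstacle is the analytic justification of the bi-infinite manipulations above, in particular the index shift that produces the symmetric form; absolute convergence has to be established at each stage. This would follow from combining the gaussian decay $|p_{n}(t)|\le\mathrm{const}\cdot e^{-n^{2}/2-\alpha|n|}$ available in $\mathbf{B}$, the lower bound $\ln p_{n}(t)\ge -Ce^{|n|}(1+t)$ supplied by the technical lemma immediately above, the uniform boundedness of $L,M$ from Lemma~\ref{lemma10}, and $\sup_{n}\beta(n)<\infty$ from~(\ref{con1}). Together these estimates render all sums appearing in the calculation absolutely convergent and fully legitimize the termwise differentiation and the reindexing that underlies the Dirichlet-form representation.
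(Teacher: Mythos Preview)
Your proof is correct and follows essentially the same route as the paper: you split $\frac{d}{dt}H(t)$ into the $p'$-part at fixed $s$ and the $s'$-part, and show the latter is exactly cancelled by $\frac{d}{dt}(2Ks-3s^{2})$, leaving the standard entropy dissipation for a reversible chain. The only difference is cosmetic---you spell out the Dirichlet-form representation $-\sum_{n}\lambda_{n}\pi_{n}^{s}(f_{n+1}-f_{n})(\ln f_{n+1}-\ln f_{n})$ explicitly, whereas the paper simply invokes the earlier lemma for constant $L,M$.
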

\begin{proof}
For $t>0$ the series ($\ref{eq:entropy2}$) can be differiantiated
term by term. Thus \begin{equation}
\frac{d}{dt}H(t)=\frac{\partial H}{\partial\pi}\cdot\frac{d\pi}{dt}+\frac{\partial H}{\partial s}\cdot\frac{ds}{dt}.\label{eq:entropy3}\end{equation}
 The first term of the right side represents a derivative of the relative
entropy for fixed $s$, therefore, using results of the preceding
subsection, we conclude that this term is negative. Let us calculate
the second term

\[
\frac{\partial H}{\partial s}\cdot\frac{ds}{dt}=\sum_{n\in\mathbb{Z}}2p_{n}(s-n)s'=2ss'\sum_{n\in\mathbb{Z}}p_{n}-2s'\sum_{n\in\mathbb{Z}}p_{n}n=2ss'-2s'\sum_{n\in\mathbb{Z}}np_{n}(t).\]
Using the invariant $K$ we get

\[
\frac{\partial H}{\partial s}\cdot\frac{ds}{dt}=2ss'-2s'(K-2s)=6ss'-2s'K=(3s^{2}-2Ks)'\]
and by ($\ref{eq:entropy3}$) we get \[
\frac{d}{dt}\{H+2Ks-3s^{2}\}\leq0.\]
\end{proof}
\begin{cor}
Let $p(0)\in B_{\alpha}^{+}$ and $L_{0,}M_{0},\alpha\in\mathbb{R}$
be arbitrary numbers. Then the convergence holds.\end{cor}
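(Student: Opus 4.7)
The plan is to apply a LaSalle-type invariance argument with $W(t)$ as a Lyapunov function: the monotonicity of $W$ from Lemma \ref{lem:mainEntropyLema}, combined with compactness of the trajectory, should force convergence to the unique stationary point on the level set $\{K=K(X(0))\}$ furnished by Theorems 2 and 3.

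First I would secure compactness of $\{(p(t),s(t),d(t))\}_{t\geq 0}$. By Lemma \ref{lemma10}, $s(t)$ and $d(t)$ remain in a bounded subset of $\mathbb{R}^{2}$. For $p(t)$, the uniform bound on $Q(t)$ in Lemma \ref{lemma9}, combined with $\inf_{n}\beta(n)>0$ from condition (\ref{con2}) and the boundedness of $s(t)$, yields a uniform estimate $\sup_{t}\sum_{n}p_{n}(t)\bigl(e^{-n+s(t)}+e^{n-s(t)}\bigr)<\infty$. This supplies the tightness needed to extract, along any sequence $t_{k}\to\infty$, a subsequential limit $p^{*}\in\ell^{1}(\mathbb{Z})$ with $\sum_{n} p_{n}^{*}=1$, together with real limits $s^{*},d^{*}$.

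Second I would bound $W(t)$ from below. The sum $\sum_n p_{n}(s-n)^{2}$ is dominated by $Q(t)$, and $\sum_n p_{n}\ln p_{n}$ is bounded below because the same tightness prevents $p(t)$ from spreading out too far. Together with bounded $s(t)$, this gives $W(t)\geq -\mathrm{const}$, so Lemma \ref{lem:mainEntropyLema} yields $W(t)\searrow W_{\infty}\in\mathbb{R}$ and in particular $\int_{0}^{\infty}\bigl(-W'(t)\bigr)\,dt<\infty$. For any subsequential limit $(p^{*},s^{*},d^{*})$, continuity of the flow together with the integrability of $-W'$ force equality in Lemma \ref{lem:mainEntropyLema} at the limit, so $p^{*}$ is a scalar multiple of $\pi^{s^{*}}$; since $p^{*}$ is a probability measure this scalar is $1/\Xi_{s^{*}}$, and $p^{*}=\pi_{s^{*}}$ in the notation of Theorem 2. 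Because $K$ is conserved (Theorem 3), the limit lies on the hypersurface $\{K=K(X(0))\}$, which by Theorem 3 contains exactly one fixed point; this pins down $s^{*}$ and hence $(p^{*},L^{*},M^{*})$ uniquely. Since every subsequential limit coincides with this fixed point, the full trajectory converges.

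The main obstacle is this last step: upgrading subsequential convergence to convergence of the full trajectory via a genuine LaSalle principle in an infinite-dimensional, nonlinear setting. It requires continuity of the flow in the topology in which compactness is extracted and invariance of the $\omega$-limit set under the flow. A secondary difficulty is making the lower bound on $\sum_n p_{n}\ln p_{n}$ uniform in $t$; the estimate $\ln p_{n}(t)\geq -Ce^{|n|}(1+t)$ established just before the corollary is the tool for handling the logarithmic singularity of $\varphi(x)=x\ln x$ near $x=0$ and for justifying the term-by-term differentiation used throughout the Lyapunov calculation.
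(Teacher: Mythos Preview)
Your LaSalle strategy with $W$ is essentially the paper's own approach for the $(p,s)$ part, and your steps (compactness via Lemmas \ref{lemma9}--\ref{lemma10}, monotonicity of $W$, identification of $\omega$-limit points as $\pi_{s^{*}}$, uniqueness of $s^{*}$ from the conserved quantity $K$) match what the paper does. Since $(p^{*},s^{*})$ is then the \emph{same} point for every convergent subsequence, full convergence of $(p(t),s(t))$ follows by the standard compactness argument; the ``LaSalle in infinite dimensions'' worry you flag is not really the issue.

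The genuine gap is the passage from $(p^{*},s^{*})$ to $(p^{*},L^{*},M^{*})$. The Lyapunov function $W=H(p\mid\pi^{s})+2Ks-3s^{2}$ depends only on $(p,s)$, not on $d=(L-M)/2$. Consequently the equality case in Lemma \ref{lem:mainEntropyLema} tells you nothing about $d^{*}$: a subsequential limit $(p^{*},s^{*},d^{*})$ satisfies $p^{*}=\pi_{s^{*}}$ and lies on $\{K=K(X(0))\}$, but there is no mechanism in your argument forcing $d^{*}$ to equal the stationary value, so you cannot yet invoke ``unique fixed point on the level set'' to conclude that all subsequential limits of the full triple coincide.

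The paper closes this gap by exploiting the decoupling noted after (\ref{eq:mainEquationsinTermsSandD}): the $(p,s)$ evolution is independent of $d$, so once $(p(t),s(t))\to(\pi^{*},s^{*})$ is established, the equation for $d$ becomes the scalar ODE $d'=-\tfrac{1}{2}e^{d}Q(t)+C_{\lambda}$ with $Q(t)\to Q^{*}$. Solving it explicitly,
\[
e^{-d(t)}=C_{1}e^{-C_{\lambda}t}+\tfrac{1}{2}e^{-C_{\lambda}t}\int_{0}^{t}e^{C_{\lambda}u}Q(u)\,du,
\]
one reads off convergence of $d(t)$ directly. You should insert this step (or an equivalent asymptotic analysis of the $d$-equation) in place of the appeal to Theorem 3 for the $d$-component.
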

\begin{proof}
In Lemma $\ref{lem:mainEntropyLema}$ we introduced the function $W=W(p,Z)$,
which can be considered as the Lyapunov function. Therefore the proof
has a quite standart scheme.

First, $p(t)$ belongs to a bounded (supremum norm) closed subset
of the set $C_{0}(\mathbb{\bar{Z}})$ of functions $f$ on $\bar{\mathbb{Z}}=\mathbb{Z}\cup\{\infty\}$,
continuous at infinity and such that $f(\infty)=0.$ That is \[
\left\{ p(t)\right\} _{t\in\mathbb{R}_{+}}\subset B_{\alpha}^{+}\cap\{\mu=\{\mu_{n}\}:\mu_{n}\geq0,\sum_{n\in\mathbb{Z}}\mu_{n}=1\}\subset C_{0}(\bar{\mathbb{Z}}).\]
 It follows that the domain of $W$ lies in a compact subset of $C_{0}(\bar{\mathbb{Z}})\times\mathbb{R}^{2}$,
as $Z=(L,M)$ is bounded.

Therefore the trajectory $\{(p(t),Z(t))\}$ has at least one limiting
point. Let $(\pi^{*},Z^{*})$ be one of such points. As $W$ decreases
along the trajectory

\[
\frac{d}{dt}W_{(\pi^{*},Z^{*})}(t)=0.\]

This follows from continuous differentiability of $W(t)$. Using Lemma
$\ref{lem:mainEntropyLema}$ we conclude that $\pi_{n}^{*}=e^{-(s^{*}-n)^{2}}$
up to some factor.

Using the invariant it is easy to check that $s^{*}$ is defined
uniquely. In the introduction we mentioned that the trajectory of
$(p,s)$ does not depend on the choice of $d$. Therefore $(p,s)$
converges to $(\pi^{*},s^{*}).$ It remains to show that $d$
converges to $d^{*}.$ Remind that we can rewrite the equation for
$d$ in the following form \[
d'=-\frac{1}{2}e^{d}Q(t)+C_{\lambda},\] where \[
Q(t)=\sum_{n\in\mathbb{Z}}p_{n}\beta(n)e^{-n+s}+\sum_{n\in\mathbb{Z}}p_{n}\beta(n-1)e^{n-s}.\]
Note that due to the established convergence of $s$ and $p$ \[
Q(t)\to_{t\to\infty}Q^{*}=\mathrm{const.}\] We have an ordinary
differential equation which can be solved explicitly

\[
e^{-d(t)}=C_{1}e^{-C_{\lambda}\cdot t}+\frac{1}{2}e^{-C_{\lambda}\cdot t}\int_{0}^{t}e^{C_{\lambda}\cdot s}Q(s)ds.\]
As $Q(t)$ converges we obtain the required result.\end{proof}

\lyxaddress{K. Vaninsky \\
Department of Mathematics \\
 Michigan State University \\
 East Lansing, MI 48824 \\
 USA vaninsky@math.msu.edu}

\lyxaddress{S. Muzychka \\
 Faculty of Mathematics and Mechanics \\
 Moscow State University \\
 Vorobjevy Gory\\
 Moscow, Russia}
\end{document}